\theoremstyle{plain}
\newtheorem{theorem}{Theorem}[section]
\newtheorem{corollary}[theorem]{Corollary}
\newtheorem{cor}[theorem]{Corollary}
\newtheorem{proposition}[theorem]{Proposition}
\newtheorem{prop}[theorem]{Proposition}
\theoremstyle{definition}
\newtheorem{defn}[theorem]{Definition}
\newtheorem{remark}[theorem]{Remark}
\newtheorem{rmk}[theorem]{Remark}
\newtheorem{example}[theorem]{Example}
\newtheorem{notation}{Notation}
\definecolor{forestgreen}{rgb}{0,.72,0} 
\definecolor{brickred}{rgb}{.72,0,0}
\newcommand{\Hp}{H_{p_1}}
\newcommand{\F}{\mathcal F}
\newcommand{\cH}{H}
\newcommand{\M}{\mathcal M}
\newcommand{\Z}{\mathbb Z}
\newcommand{\cB}{\mathcal B}
\newcommand{\cR}{\mathcal R}
\DeclareMathOperator\ci{\perp\!\!\!\perp} 
\newcommand{\suffstats}{{S=s}}
\newcommand{\suffstatsofu}{{S(U)=s(u)}}
\newcommand{\simplefiber}{\overline\F_{\suffstats}}
\newcommand{\fiber}{\F_{\suffstats}} 
\newcommand{\fiberofg}{\F_{S(g)=s(g)}} 
\newcommand{\fiberofu}{\F_{s(u)}} 
\newcommand{\modelM}{\mathcal M}
\begin{document}

\title{Goodness-of-fit for log-linear  network models: Dynamic Markov bases using hypergraphs}

\author{ Elizabeth Gross\thanks{\textit{eagross@ncsu.edu}, North Carolina State University 
 },
		 Sonja Petrovi\'c\thanks{\textit{Sonja.Petrovic@iit.edu}, Illinois Institute of Technology},
		 Despina Stasi\thanks{\textit{despina.stasi@gmail.com}, Pennsylvania State University and Illinois Institute of Technology} } 
\date{\today}

\begin{abstract}Social networks and other large sparse data sets pose significant challenges for statistical inference, as many standard statistical methods for testing model/data fit are not applicable in such settings. Algebraic statistics offers a theoretically justified approach to goodness-of-fit testing that relies on the theory of Markov bases and is intimately connected with the geometry of the model as described by its fibers. 

Most current practices require the computation of the entire basis, which is infeasible in many practical settings. We present a dynamic approach to explore the fiber of a model, which bypasses this issue, and is  based on the combinatorics of hypergraphs arising from the toric algebra structure of log-linear models.  

We demonstrate the approach on the Holland-Leinhardt $p_1$ model for  random directed graphs that allows for reciprocated edges.\end{abstract}

\maketitle
\section{Introduction}

Network data often arise as a single sparse observation of relationships among units, for example, individuals in a network of friendships, or species in a food web. 
Such a network can be naturally represented as a contingency table whose entries indicate the presence and type  of a relationship, and whose dimension depends on the complexity of the model. 
 This representation makes networks amenable to analysis by standard categorical data analysis tools and, in particular, it brings to bear the log-linear models literature, e.g.\ \cite{BFH75}. 
However, given that often only a small sample or even just a single observation of the network  is  all we have access to, or that the data are sparse, several problems remain. In particular, in the case of network models, since 
quantitative methods are lacking, goodness-of-fit testing  is usually carried out qualitatively using model diagnostics. Namely,  the clustering coefficient, triangle count, or another network characteristic is used for a heuristic comparison between observed and simulated data.  
In \cite{Hunter}, the authors offer a systematic approach for 
comparing structural statistics between an observed network and networks simulated from the fitted model, and point out some of the difficulties of fitting the ERGMs. More recently, \cite{F-review} review various network models and discuss modeling and fitting challenges that remain. 

Even for linear exponential families, the problem of determining goodness of fit is a difficult one for network data. 
When standard asymptotic methods, such as $\chi^2$ approximations, are deemed unreliable  (see \cite{Haberman}), or when the observed data are sparse, one may want to use exact conditional tests. In such tests, the observed network (or table) $u$ with sufficient statistics vector $\suffstatsofu$ is compared to 
 the reference set, called the \emph{fiber} $\fiber$, defined to be the space of all realizations of the network under the given set of constraints $\suffstats$. 
 Unfortunately, the size and combinatorial complexity  of the fiber are the main obstacle for complete fiber enumeration, so that even in small problems (e.g., see \cite[\S 4]{SlaZhuPet}), determining the exact distribution is often unfeasible. 
Moreover, fiber enumeration and sampling  is crucial not only for goodness-of-fit testing but also for data privacy considerations (see \cite{Sesa}). 

The theory of Markov bases provides a possible solution to the problem of sampling the fibers for any log-linear model. Namely, a Markov basis is a set of ``moves" that, starting from any point in a fiber,  allows one to perform a random walk on the fiber and visit every point with positive probability.  Therefore, the standard Metropolis-Hastings algorithm provides a way to carry out exact tests, and as argued in \cite{DS98}, this procedure yields {\it bona fide} tests for goodness of fit. 
Furthermore,  every log-linear model comes equipped with a non-unique but  finite Markov basis. The existence and finiteness of the basis is a  consequence of what is now often called the Fundamental Theorem of Markov Bases \cite{DS98} in the algebraic statistics literature.
However, two main computational challenges remain open to make this theory useful for network and large table data in practice.  We describe these challenges broadly next and, then, address them in the remainder of this manuscript.

\smallskip
The first computational challenge is in determining the Markov basis itself. The fact that  a Markov basis for a model guarantees to connect \emph{every} one of its fibers makes it a highly desirable object to obtain.  Unfortunately, the fastest algorithms for computing the moves for an arbitrary model (these algorithms exploit the toric structure of the  model) are not fast enough. Even for some basic log-linear network models, it can take hours to find all Markov moves for networks with less than 10 nodes. 
This motivates a structural study of 
 Markov bases for a given fixed family of models. To this end, the literature provides many examples \cite{AT03, AT05}, \cite{DS03}, \cite{Dob03}, \cite{DS04}, \cite{HAT10}, \cite{HTY09a, HTY09b}, \cite{HMTY13}, \cite{KNP10}, \cite{Nor12}, \cite{RY10}, \cite{SV12}, \cite{YOT13}. 
In addition, since our example of interest is a network model with inherent sampling constraints, we should note that such constraints can compound the issue of computing a set of moves guaranteed to connect each fiber.  Sampling constraints restrict the fiber, and in fact,  if one is interested in sampling a restricted fiber, \cite{OHT} and \cite{AHT2012} show that one needs a larger set of moves, for example a \emph{Graver basis}, 
to guarantee connectivity.  A Graver basis (see \cite[\S 1.3]{DSS09}, \cite[\S 4.6]{AHT2012} for definition and discussion) is a particular Markov basis and generally contains more moves than a minimal Markov basis (where minimal is defined with respect to set inclusion).

The second computational challenge comes from the fact that knowing an entire Markov basis for a model may still not be sufficient to run goodness-of-fit tests efficiently. Namely, Markov bases are data-independent; see Problem 5.5. in \cite{DobraEtAl-IMA}. 
 To paraphrase \cite{AHT2012}: since a Markov basis is common for every fiber $\fiber$ (that is, for all values $s$ that the vector $S$ of sufficient statistics can take), the set of moves connecting the particular fiber of the observed data   $u\in \fiberofu$ will usually be significantly smaller than the entire basis for the model.  To handle this issue Dobra, in \cite{Dobra2012}, suggests generating only moves needed to complete one step of the random walk, that is, only \emph{applicable} moves.  Dobra refers to the set of moves generated in this way as a \emph{dynamic Markov basis}, since the full basis is not generated ahead of time. 
An example of this strategy is  \cite{OHT}, where the authors present an 
algorithm for generating a random element of the Graver basis for the beta model.  The beta model is a basic generalization of the Erd\"os-Renyi random graph model: an ERGM for simple undirected random graphs where the degrees of the nodes form the sufficient statistics. 
In fact, this work can be cast within  a more general framework  of sampling from the space of  contingency tables with fixed properties. A commonly fixed set of table properties are marginals of the table: they represent sufficient statistics of many - but not all - log-linear models. 
 The paper \cite{Dobra2012} focuses on log-linear models whose sufficient statistics are fixed marginals. 
  There, the  Markov moves  are obtained through a sequential adjustment of cell bounds, a method that appears in sequential importance sampling (SIS) \cite{CDS05}, \cite{DC11}.  
In contrast, we build  a dynamic Markov basis by exploiting the combinatorics of the model. 
 This allows us to extend Dobra's methodology to log-linear models whose sufficient statistics are not necessarily table marginals.

\smallskip
In this manuscript, we explore the problem of  performing goodness-of-fit tests for log-linear models when sufficient statistics are not necessarily table marginals, and  
in the presence of sampling constraints.  
In this case, 
there is no general methodology for obtaining the part of the Markov bases which is relevant for the observed data. 
 In this work, we address the issues raised above from the point of view of algebraic statistics and combinatorial commutative algebra. We propose the use of \emph{parameter hypergraphs} to generate \emph{Graver} moves that are data-dependent and therefore applicable to the observed network (or table). Using Graver bases ensures connectivity of restricted fibers, while respecting sampling constraints. Furthermore, as \cite{PS13}  frame the Graver basis determination  problem in terms of combinatorics of hypergraphs, we add this combinatorial ingredient to the recipe which allows us to generate the moves in a dynamic fashion, based on the observed table or network. 
The sufficient statistics for the model need not be table marginals; the only assumption we impose, mostly for simplicity, is that the model parametrization is squarefree in the parameters (see Section~\ref{sec:hypergraph} for details). 
The random walk associated to the moves we produce in this way is irreducible, symmetric, and aperiodic, and so we may use the Metropolis-Hastings algorithm (see \cite[\S 7]{RC99}) to implement a Markov chain whose stationary distribution is equal to the conditional distribution on the fiber. This allows us to sample from the the fiber of an observed network or table as desired.

We illustrate our methodology and apply dynamically generated Markov bases to Holland and Leinhardt's  $p_1$ model  \cite{HL81}; specifically because previous methods are not applicable to this model directly. 
Holland and Leinhardt proposed to model a random directed graph by parametrizing propensity of nodes to send and receive links as well as reciprocate edges, where dyads are independent of each other. \cite{PRF10, FPRholland} study the algebra and geometry of these models and derive structural results for their Markov bases. Remarkably, the moves can be obtained by a direct computation only for networks with less than $7$ nodes, using 4ti2 \cite{4ti2}, currently the fastest software capable of producing such bases. 
Thus testing model fit for larger networks is not feasible using the traditional Metropolis-Hastings algorithm.  Using  a straightforward  implementation of Algorithm~\ref{alg:wrapper} in  R  \cite{R05}, we test several familiar  network data sets. 
 Figure \ref{fig:Monk-1MgofsHistogramExcl50kBurninSteps} shows the histogram of the values of the chi-square statistics for  $1,000,000$ steps in the chain, including $50,000$ burn-in steps), obtained from Sampson's monastery study \cite{Sampson68}. The vertical line denotes the value of the chi-square statistic for the observed monk dataset, indicating a large $p$-value of $0.986$  and thus a pretty good model fit.   
A similar histogram in Figure \ref{fig:Bay-1MgofsHistogram} shows that the $p_1$ model does not fit the Chesapeake Bay food web data so well: the estimated $p$-value is $0.03459$ after $1,000,000$ moves.
\begin{figure}
\centering
\begin{subfigure}{.45\textwidth}
  \centering
  \includegraphics[width=1\linewidth]{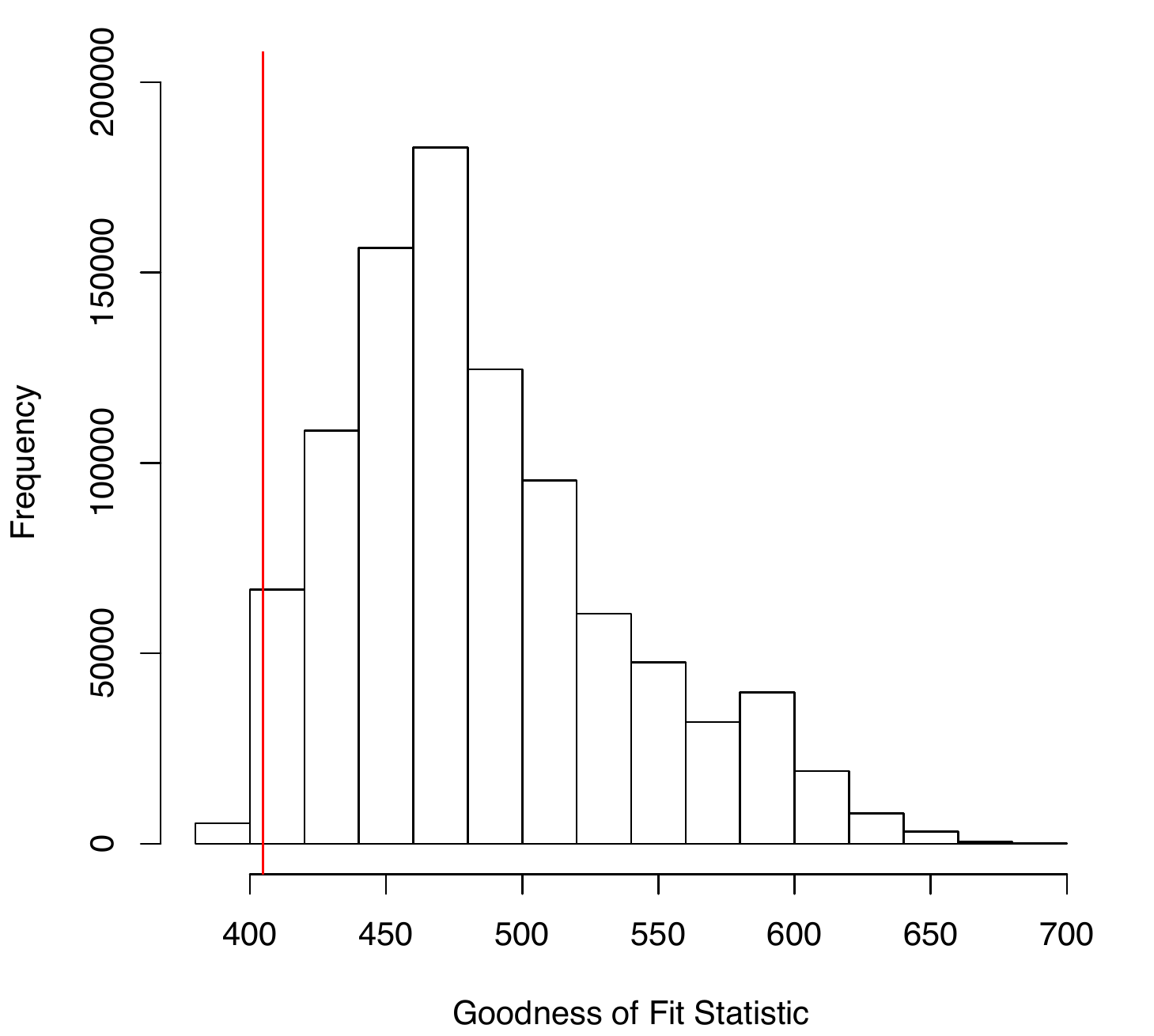}
  \caption{Affinity network derived from Sampson's monastery data set in \cite{Sampson68}.  Observed chi-square value: $404.7151$. $p=0.986$.}
	\label{fig:Monk-1MgofsHistogramExcl50kBurninSteps}
\end{subfigure}
\quad
\begin{subfigure}{.45\textwidth}
  \centering
  \includegraphics[width=1\linewidth]{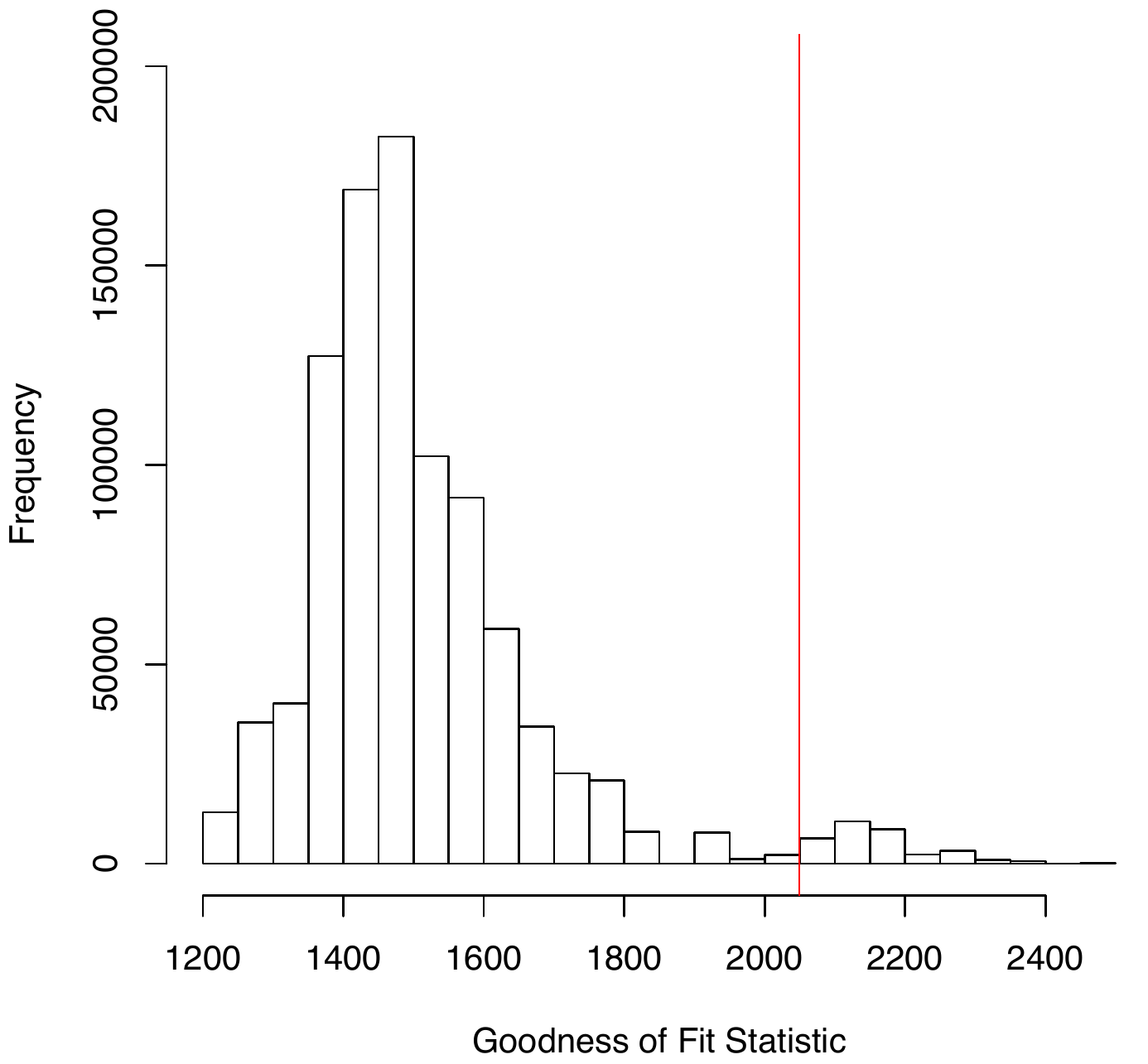}
  \caption{Chesapeake food web data set derived from \cite{BayData}. Observed chi-square value: $2049.403$. $p=0.03459$.}
	\label{fig:Bay-1MgofsHistogram}
\end{subfigure}
\caption{Sampling distribution of the chi-square statistic: histograms  from simulation running Algorithm~\ref{alg:wrapper} for the $p_1$ model with edge-dependent reciprocation.}
\end{figure}

\medskip
This paper is organized as follows. Section~\ref{sec:hypergraph} develops the combinatorial approach to the construction of Markov bases dynamically, and provides the necessary mathematical background. 
Section~\ref{sec:p1}  illustrates the developed methodology for the Holland and Leinhardt's $p_1$ model. Examples and simulations are in Section~\ref{sec:examples}. 
Specifically, further discussion and analyses of the model fit for the directed networks arising from the monk and food web data can be found in Sections~\ref{sec:ex:bay} and~\ref{sec:monk}. 
Sections~\ref{sec:kenya} and~\ref{sec:hl} provide studies of mobile money networks of a Kenyan family, and of four networks simulated from the $p_1$ distribution, respectively. 
Finally, simulations on a small synthetic network in \ref{sec:OHT} indicate good mixing times, and quick convergence of the $p$-value estimate (e.g., see Figure~\ref{fig:OHTpvalues-500k}).   As this is best illustrated when the entire fiber has been determined exactly, we also consider a small $591$-network fiber for an undirected graph on $8$ nodes from Section 5.1. in \cite{OHT}.  Our walk  explores the entire fiber in as little as $15,000$ moves and the total variation distance from the uniform distribution is below 0.25 after $10,000$ moves. This could be due to the fact that the steps in the simulated walks are longer than minimal Markov moves would suggest, since we are generating a superset of the Graver basis in our algorithm.

\section{Parameter hypergraph of a log-linear model: revised Metropolis-Hastings}
 \label{sec:hypergraph}

Markov and Graver bases arise as combinatorial signatures of log-linear models, and this natural correspondence is rooted in the algebra-geometry dictionary. In this section, we briefly describe the mathematical construction that allows us to dynamically generate applicable moves for sampling fibers of general log-linear models. 

\subsection{Markov bases: fundamentals} 
Consider a log-linear model on $m_1\times\cdots m_r$ a contingency table $U$ with sufficient statistics vector $S$. 
Let $u \in \Z_{\geq 0}^{m_1 \times \cdots \times m_r}$ be a realization of the  table $U$, with $S(u)=s(u)$. The fiber of $u$, which we will denote  $\fiberofu\subset \Z_{\geq 0}^{m_1 \times \cdots \times m_r}$
(or simply $\fiber$ if the observed table $u$ is implied from the context), 
 is the space of all realizations $v$ of the table whose sufficient statistics are the same as that of $u$; i.e., $s(v)=s(u)$.  For two tables in the same fiber $u, v \in \fiber$, the entrywise difference $u-v$ is called the \emph{move} from table $v$ to table $u$.  This move $u-v$ is another $r$-way  table with entries equal to zero in the cell $(i_1,\dots,i_r)$ if $u_{i_1,\dots,i_r}=v_{i_1,\dots,i_r}$, a positive integer in the $(i_1,\dots,i_r)$ cell if $u_{i_1,\dots,i_r} > v_{i_1,\dots,i_r}$, and a negative integer in the $(i_1,\dots,i_r)$-cell otherwise. Note that, by definition, $S$ is linear, thus  the sufficient statistic of any move connecting two tables in the same fiber, $S(u-v)$, is zero. In particular, adding a move to a contingency table does not change the values of the sufficient statistics vector. We will call any table $m \in \Z^{m_1 \times \cdots \times m_r}$ such that $S(m)=0$  a \emph{Markov move} on $\fiber$.  Thus, to discuss walks on  a fiber, we may either specify the start and target tables $v$ and $u$, or the Markov move $m=u-v$.

A \emph{Markov basis} $B$ is a set of Markov moves such that for {any}  fiber $\fiber$ and any two contingency tables $u,v\in\fiber=\fiberofu$, there exists a sequence of moves $m_1,\dots,m_k\in B$ such that $v$ is reachable from $u$ by the corresponding walk on the fiber $\fiberofu$, i.e., $u=v+\sum_{i=1}^k m_i$ and each partial sum  $u_l=v+\sum_{i=1}^l m_i$, $l<k$, is a table in the fiber $\fiberofu$ (that is, $u_l$ has nonnegative entries). 
The existence and finiteness of a Markov basis guaranteed by the fundamental theorem of Markov bases \cite{DS98}, which states that the moves correspond to generators of an algebraic object (namely, the toric ideal) associated with each log-linear model. 
Equipped with a set of moves,  one can perform a random walk on the fiber $\fiber$. A priori, the resulting Markov chain need not be irreducible;  however, if the set of moves is a {Markov basis}, then irreducibility is guaranteed. Moreover, a Metropolis-Hastings algorithm can be used to adjust the transition probabilities, returning a chain whose stationary distribution  
is exactly the conditional distribution on the given fiber. 

In this section, we discuss how to dynamically construct arbitrary elements of a  Markov basis $B$ for any log-linear model using \emph{the parameter hypergraph} of the model. For simplicity, we restrict ourselves to log-linear models with $0/1$ design matrices (that is, parameters do not appear with multiplicities in the model parametrization), although the definition and construction could be extended to  a more general case.  As mentioned in the introduction, this will be specifically useful in several cases: when $B$ cannot be computed in its entirety, e.g. when the model is not decomposable,   so that the divide-and-conquer strategy of \cite{DS04} cannot be applied, and  when sufficient statistics of the model are more complex than  table marginals. To that end, we define the main tool of our construction. 

\subsection{From tables to hypergraphs} 

Let ${\modelM}:=\modelM_S$ be any log-linear model for discrete random variables $Z_1,\dots,Z_m$ with sufficient statistics $S$. Suppose that the joint probabilities of the model are such that the parameters $\theta_1,\dots,\theta_n$ appear without multiplicities (that is, $S$ can be obtained from the table in a linear fashion). 
\begin{defn} 
The model ${\modelM}$ is encoded by a hypergraph $H_{\modelM}$ on the vertex set $\theta_1,\dots,\theta_n$, which is constructed as follows:  
$\{\theta_j\}_{j\in J}$ is an edge of $H_{\modelM}$ if and only if the index set $J$ describes one of the joint probabilities in the model; that is, there exist values $i_1,\dots,i_m$ such that, up to the normalizing constant, $Prob(Z_1=i_1,\dots,Z_m=i_m)\propto\prod_{j\in J} \theta_j$. 
  The hypergraph $H_{\modelM}$ is called the \emph{parameter hypergraph of the model $\modelM$}. 
\end{defn}
\begin{notation}
For convenience let us gather here the notational conventions we will use throughout. 
Log-linear models will be denoted by $\modelM_S$ with sufficient statistics $S$, or simply $\modelM$ when $S$ clear from context. 
The parameter hypergraph $H_{\modelM}=(V,E)$ has vertex set $V$ and edge set $E$.  
Edges in the hypergraph are written as products of parameters instead of the usual lists, e.g., $\theta_1\cdots\theta_k$ will represent the edge  $\{\theta_1,\dots,\theta_k\}$. 
\end{notation}
The easiest way to understand $H_{\modelM}$ is by viewing it as depicting the structure of parameter interactions. Since vertices of the hypergraph represent  parameters of the model, edges in $H_{\modelM}$ collect all the parameters that appear in a joint probability under the model. There is a one-to-one map between the contingency table cell labels and edges in the parameter hypergraph. 
Let us illustrate on two simple but familiar examples. 

\begin{example}[Two independent random variables]
\label{ex:independenceHypergraph}
Consider the model of independence of two discrete random variables $Z_1$ and $Z_2$, taking $a$ and $b$ values, respectively.
Denote the marginal probabilities $Prob(Z_1=i)$ and $Prob(Z_2=j)$ by $x_i$ and $y_j$, respectively. 
 Since the {independence model} for $Z_1$ and $Z_2$ is specified by the formula $P_{ij}:=Prob(Z_1=i,Z_2=j) = x_iy_j$, we see that the parameter hypergraph $H_{Z_1\ci Z_2}$ has $a+b$ vertices: $x_1,\dots,x_a,y_1,\dots,y_b$ and an edge between every $x_i$ and $y_j$. Thus, in this case, the hypergraph is just a complete bipartite graph on $\{x_1,\dots,x_a\}\sqcup \{y_1,\dots,y_b\}$, depicted in Figure~\ref{fig:indep}. 
\end{example}

\begin{example}[Quasi complete independence]
\label{ex:quasiIndependenceHypergraph}
For a $l \times m \times n$ table, the quasi complete independence model is a complete independence model with structural zeros. If the cell $(i,j,k)$ is a structural zero, then $Prob(Z_1=i,Z_2=j, Z_3=k)=0$, otherwise $Prob(Z_1=i,Z_2=j, Z_3=k) = x_iy_jz_k$ where $x_i=Prob(Z_1=i)$, $y_j=Prob(Z_2=j)$, and $z_k=Prob(Z_3=k)$ are marginal probabilities.   

 To obtain the parameter hypergraph for the quasi complete independence model, we start with the complete 3-partite hypergraph with vertex partition $V_1$, $V_2$, and $V_3$ such that $\#V_1=l$, $\#V_2=m$, and $\#V_3=n$, then remove every edge that corresponds to a cell with a structural zero. The hypergraph in Figure~\ref{fig:quasiInd} is the parameter hypergraph for the quasi complete independence model on a $3 \times 3 \times 3$ table where all cells are structural zeros except $(1,1,1), (1,1,2), (2,2,2), (2,3,3), (3,2,1)$, and $(3,3,3)$.
\end{example}

\begin{figure}
\centering
\begin{subfigure}{.45\textwidth}
  \centering
	\includegraphics{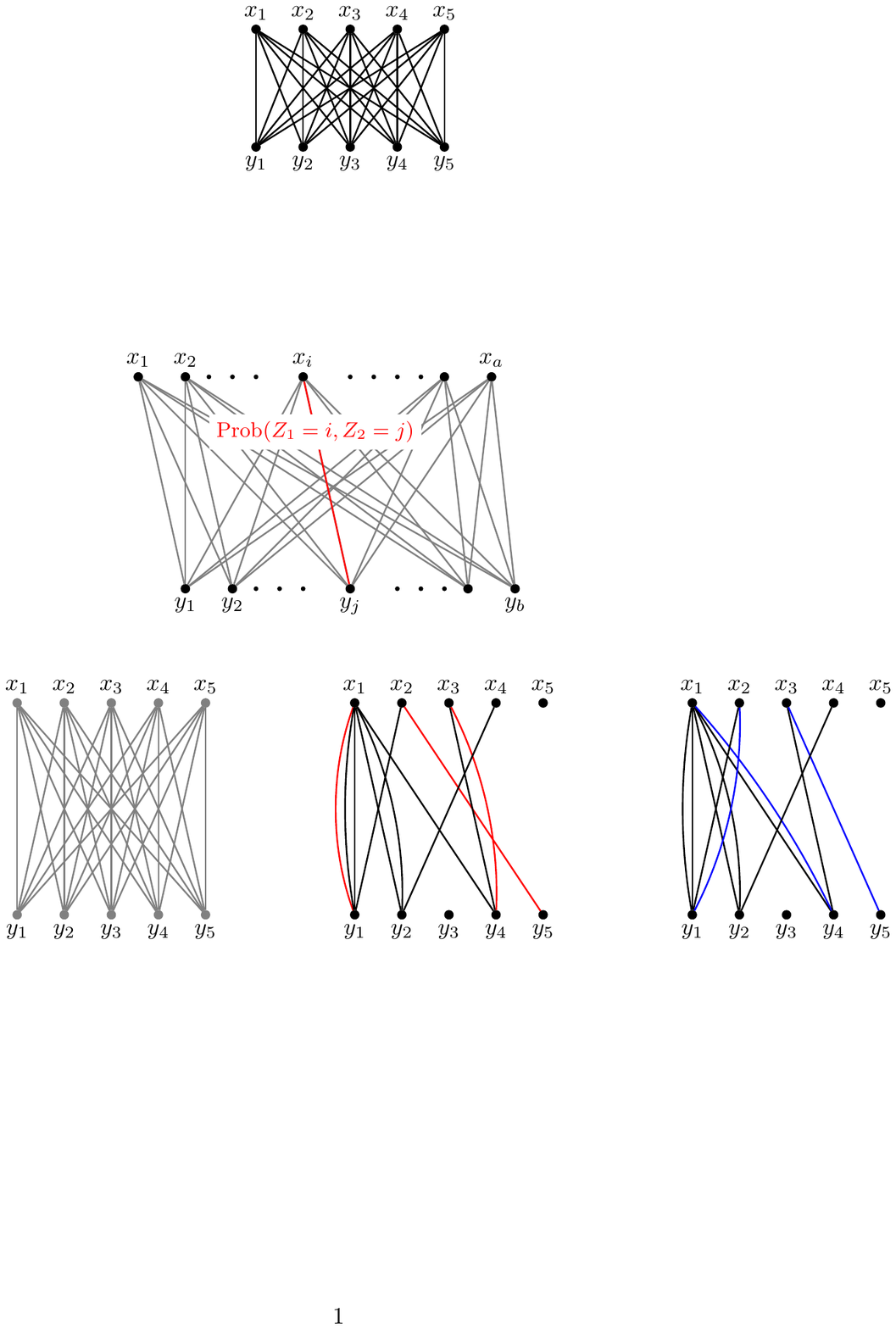}
	\caption{	Independence model: Example~\ref{ex:independenceHypergraph}.
			}
	\label{fig:indep}  
\end{subfigure}%
\begin{subfigure}{.45\textwidth}
  \centering 
	\includegraphics[width=1.5in]{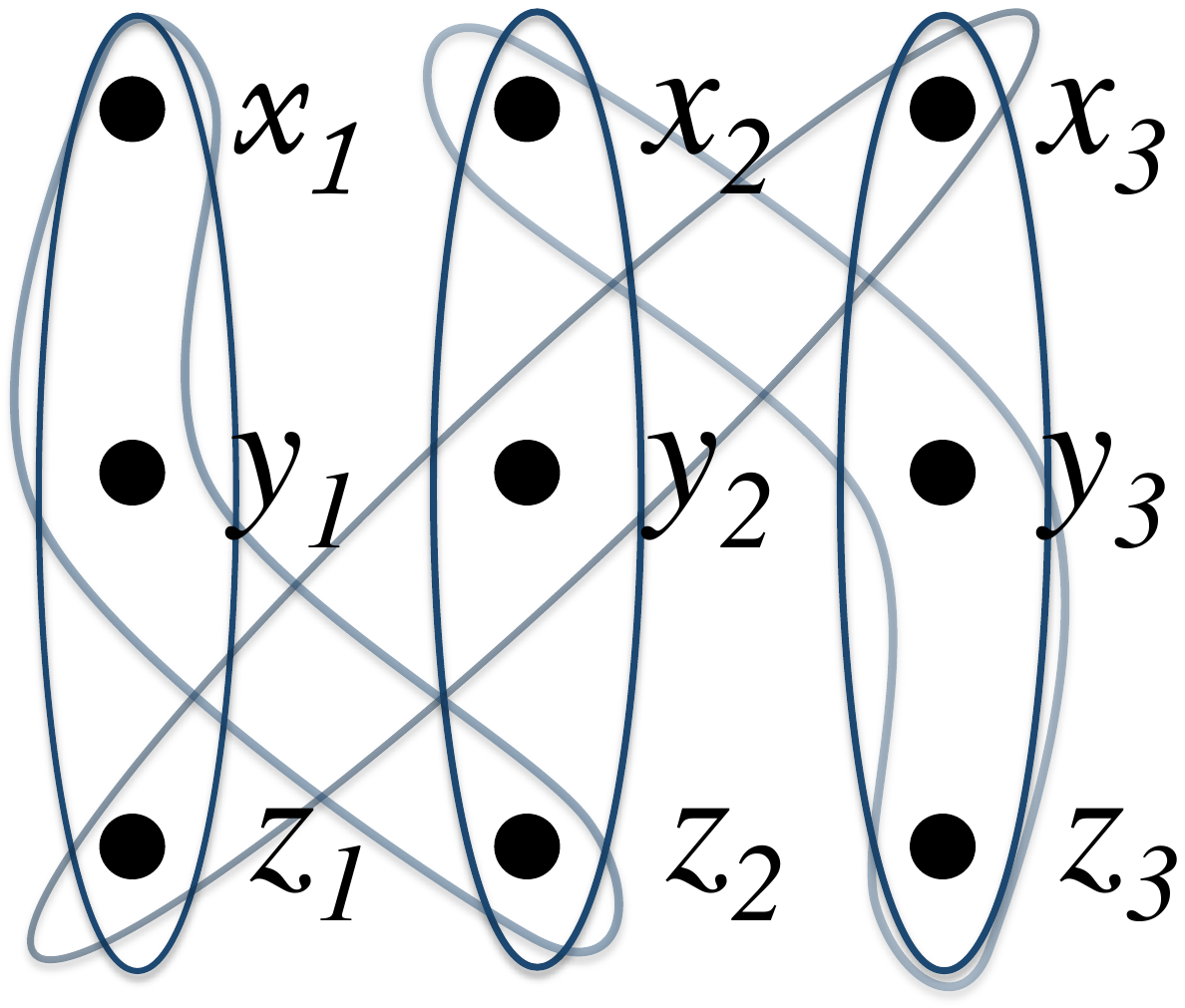}
	\caption{Quasi independence model: Example~\ref{ex:quasiIndependenceHypergraph}.}
	\vspace{-1cm}
	\label{fig:quasiInd}  
\end{subfigure}
\caption{Two examples of parameter hypergraphs}
\end{figure}

In the next section (Definition~\ref{defn:p1hypergraph}) we will see a more complex example in $\Hp$, the parameter hypergraph for the version of the $p_1$ model that assumes edge-dependent reciprocation.

\smallskip
A crucial observation about the parameter hypergraph is that it not only encodes the parameter interactions, but \emph{any observed table} can be viewed as a subset of its edges,  with multiplicities if the model allows them.  Specifically, suppose the table $u$ has an entry $1$ in the cell $(i_1,\dots,i_r)$. If the model postulates $Prob(X_1=i_1,\dots,X_r=i_r)\propto \theta_{j_1}\cdots \theta_{j_k}$, then the $(i_1,\dots,i_r)$-cell entry is represented by the edge $\theta_{j_1}\cdots \theta_{j_k}$.  A larger entry (say, $3$) in the table would be represented by an edge with multiplicities (the edge   $\theta_{j_1}\cdots \theta_{j_k}$ would have multiplicity $3$). Multiplicities are recorded with a function $\mu: E \to \Z$ (e.g. $\mu(\theta_{j_1}\cdots \theta_{j_k})=3)$.

\begin{defn}\label{defn:EdgesForTable}
	The list of edges 
	\begin{align*}
		\{\theta_{j_1}\cdots \theta_{j_k}:  u_{(i_1,\dots,i_r)}>0  \textit{ and } Prob(X_1=i_1,\dots,X_r=i_r)\propto \theta_{j_1}\cdots \theta_{j_k}\}, \\ 
		\text{ where  edge } \theta_{j_1}\cdots \theta_{j_k}\text{ appears }\mu(\theta_{j_1}\cdots \theta_{j_k})=u_{(i_1,\dots,i_r)} \text{ times,} 
	\end{align*}
	 will be denoted by $e(u)$.  It is the multiset of edges	representing the table $u$, and has   support in the edge set $E$ of the parameter hypergraph. 
\end{defn}

Next, notice that sufficient statistics $S(u)$ can be calculated from the hypergraph edges $e(u)$, since the vertices covered by $e(u)$ represent those natural parameters that affect the computation of $S(u)$. In the independence model example (cf. Example~\ref{ex:independenceHypergraph}), if $u$ is the $2\times 2$ table with $1$ in cell $(1,2)$ and a $2$ in the cell $(2,1)$, then $e(u) = \{x_1y_2, x_2y_1, x_2y_1\}$. The sufficient statistics of the table under $Z_1\ci Z_2$ are the row and column sums; the first row having sum $1$ means that $x_1$ appears once in the set of edges $e(u)$; in other words, the degree of the vertex $x_1$ is  $1$. The first column having sum $2$ means that $y_1$ has degree $2$ in  $e(u)$. 
Therefore, the vector of sufficient statistics $s(u)$ equals the \emph{degree vector} of the multi-hypergraph $(V,e(u))$.  It is obtained by simply counting the number of edges incident to each vertex in $e(u)$ and setting the degree of all other vertices in $V$ to zero. 

Finally, we describe how to construct and explore the fiber $\fiberofu$.  Preserving the value of the vector $s(u)$ means finding another edge set $e(v)$ such that the degree vector of $e(v)$ is the same as that of $e(u)$. If we view the edges $e(u)$ as colored red and $e(v)$ blue, then the move $v-u$ corresponds to a  collection of edges $\left(e(u),e(v)\right)$, where each vertex appears in the same number of blue and red edges.  

We have thus shown the following is an equivalent way to view the fiber $\fiberofu$ and its connecting moves. 
\begin{theorem} \label{thm:FiberAndMovesOnHypergraph}
	Recall that an observed table $u$ is represented by a multiset $e(u)$ of edges on the hypergraph $H_{\modelM}$.  
	\begin{enumerate}
		\item[(a)] 
			The fiber $\fiberofu$ consists of all multisets of edges of $H_{\modelM}$ with degree vector equal to $s(u)$. 
		\item[(b)]
		Any move $v-u$ in the Markov  basis connecting  $u$ to some  $v\in\fiberofu$ is represented by the edge sets $\left(e(u),e(v)\right)$ over the  parameter hypergraph $H_{\modelM}$   such that the degree vector of $e(v)$ is the same as that of $e(u)$.
	\end{enumerate}
\end{theorem}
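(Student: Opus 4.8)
The plan is to prove both parts by unwinding the dictionary between table cells and hypergraph edges set up above, and then invoking the identity between sufficient statistics and vertex degrees that was observed just before the statement.

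First I would record the bijection between cells of the contingency table and edges of $H_{\modelM}$: by the $0/1$-design-matrix (squarefree) hypothesis, each joint probability $Prob(X_1=i_1,\dots,X_r=i_r)$ is proportional to a squarefree monomial $\theta_{j_1}\cdots\theta_{j_k}$, whose support is, by the definition of $H_{\modelM}$, an edge, and conversely every edge arises from exactly one cell in this way. Extending this coordinatewise gives a bijection $u\leftrightarrow e(u)$ between nonnegative integer tables and finite multisets of edges of $H_{\modelM}$, where the cell value $u_{(i_1,\dots,i_r)}$ becomes the multiplicity $\mu$ of the corresponding edge (Definition~\ref{defn:EdgesForTable}). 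In particular $\fiberofu$, a set of tables, is in bijection with a set of edge multisets; part (a) amounts to identifying which ones.

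Second I would verify the key identity $S(u)=\deg(e(u))$, where the $\theta_j$-coordinate of $\deg(e(u))$ is $\sum_{E'\ni\theta_j}\mu(E')$, the number of edge-incidences of $e(u)$ at the vertex $\theta_j$. Since the design matrix has entries in $\{0,1\}$, the $\theta_j$-coordinate of $S(u)$ is exactly the sum of $u_{(i_1,\dots,i_r)}$ over all cells whose monomial involves $\theta_j$, i.e.\ over all edges of $e(u)$ incident to $\theta_j$, counted with multiplicity — precisely the degree of $\theta_j$ in $e(u)$. (This is the step illustrated by the $2\times 2$ independence-model computation in the text.) Part (a) is then immediate: $v\in\fiberofu$ iff $S(v)=S(u)$ iff $\deg(e(v))=\deg(e(u))=s(u)$, i.e.\ $e(v)$ is a multiset of edges of $H_{\modelM}$ with degree vector $s(u)$, and conversely each such multiset pulls back to a unique $v$ in the fiber. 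For part (b), a difference $m=v-u$ with $u$ and $v$ in the common fiber $\fiberofu$ satisfies $S(m)=0$ by linearity of $S$, equivalently $S(v)=S(u)$; representing $u$ by the red edge multiset $e(u)$ and $v$ by the blue edge multiset $e(v)$, the move is encoded by the pair $(e(u),e(v))$, and the constraint $S(v)=S(u)$ reads exactly $\deg(e(v))=\deg(e(u))$.

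I expect no serious obstacle — the statement is essentially a reformulation of definitions — but two points want care. One is the identity $S(u)=\deg(e(u))$ together with the precise role of the squarefree hypothesis: without it a parameter could occur to a power inside a single joint probability, and the degree count would have to be weighted by those powers, breaking the clean multihypergraph picture. The other is that in (b) the difference $v-u$ can have negative entries, so the encoding must keep the two nonnegative multisets $e(u)$ and $e(v)$ separate (red and blue) rather than forming their formal difference; the edges common to both colors — the cells on which $u$ and $v$ agree — are retained in each and are exactly what forces the two degree vectors to coincide coordinatewise.
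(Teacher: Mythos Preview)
Your proposal is correct and follows essentially the same approach as the paper: the paper presents the argument in the discussion immediately preceding the theorem (concluding with ``We have thus shown the following''), establishing the cell--edge bijection, the identity $s(u)=\deg(e(u))$ via the independence-model illustration, and the red/blue encoding of moves, and your write-up formalizes precisely that sequence. Your added remarks on the role of the squarefree hypothesis and on keeping $e(u)$ and $e(v)$ as separate multisets are helpful clarifications but do not depart from the paper's line of reasoning.
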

We call such a collection $\left(e(u),e(v)\right)$ a \emph{(color-)balanced} edge set; it was defined and discussed in more detail in \cite{PS13}, where it was shown that such sets constitute a Markov (and in fact, the  Graver) basis for any parameter hypergraph.  Complexity of minimal Markov moves to connect the given (unrestricted) fiber was studied in \cite{GP13}.  
For convenience, let us summarize here the hypergraph notation we will use in the following section.  
\begin{notation} 
\label{notationRedBlue}
For an observed table $u$, the set of red (observed) hyperedges $e(u)$ will be denoted by $\cR$, and any blue set that balances the vertices covered by $\cR$ will be denoted by $\cB$.  Note that every $\cB$ corresponds to a table $v\in\fiberofu$.  The move $v-u$ will be denoted as $\mathcal W=(\cR, \cB)$.  
\end{notation}
\begin{rmk}
	By abuse of notation, we will also denote by $(\cR,\cB)$ only those edges over $H_{\modelM}$ representing the non-zero entries of the move $v-u$. Indeed, if a cell has the same value in both tables, the move directly connecting the tables does not affect that cell, thus the corresponding edge need not be recorded in $(\cR,\cB)$. If it is included in this set, then the move simply subtracts and adds $1$ to the cell in the table, that is, it removes and then adds back the particular edge in $e(u)$. 
\end{rmk}

\subsection{Sampling constraints and restricted fibers} 
As mentioned briefly in the introduction, 
a Markov basis will connect all 
 table realizations
   in a  fiber that are subject to the constraint that each table entry  is non-negative.  However, in the presence of table cell bounds or structural zeros in the model (e.g. \cite[\S 5.1]{BFH75}),  Markov moves will inevitably produce tables whose cell entries are too large, even if they satisfy the sufficient statistics (say, the realization of the table reached by the random walk  will have the given marginals, but some cells will be out of bounds). These sampling constraints  often arise in real-world data. In the network modeling case, a structural zero means a certain relation or edge can never be observed, while a cell bound puts a restriction on how many times an edge between two nodes can be observed  in any instance of the network.  In fact, most (simple) network models begin with a basic assumption that allows only one edge per dyad, for example,  the $p_1$ model \cite{HL81} (see also \cite{FPRholland})  and the beta model \cite{CDS11}. 
This clearly introduces another problem for running random walks on fibers: at any given step, the table or network produced may not be observable, and so many of the steps in the walk will be rejected. In fact, the rejection is likely to occur because the usual Markov bases are blind to data and sampling constraints. To compound this problem, a Markov basis only guarantees that the fiber of non-negative  table realizations is connected. It is quite reasonable to expect  that there exist two of them that can be connected only by a walk that passes through another table realization which does not satisfy the additional cell bounds. 
In this sense, the sampling constraints have suddenly \emph{disconnected} the fiber $\fiber$!  
 With this in mind,  we will differentiate between the usual fiber $\fiber$ and what we call the observable fiber $\simplefiber$: 
\begin{defn}
	 The \emph{observable fiber} $\simplefiber\subsetneq\fiber$ is the set of all realizations $u$  of the contingency table $U\in \Z_{\geq 0}^{m_1 \times \cdots \times m_r}$  with nonnegative entries and  sufficient statistic $S=s$  that respect the sampling constraints of the model, i.e. integer bounds on cells or structural zeros. 
\end{defn}
For example, in the $p_1$ model, the observable fiber $\simplefiber$ contains only \emph{simple} directed graphs, 
  which means each cell in the contingency table representing the directed graph is either a $0$ or a $1$. 
  Naturally, there is a corresponding condition on the hypergraph: no edge in $e(u)$ representing the table $u$  can have multiplicity larger than $1$. 
Thus any move $(\cR,\cB)$ applied to $e(u)=\cR$ must be such that in the resulting set of edges, $\left(e(u)\setminus \cR\right) \cup \cB\subseteq H_{\modelM}$, every edge appears at most once.  

Thus, a natural question arises: does there exist a finite set of moves that connects the observable fiber? The answer is known in the literature under the name of Graver basis or distance-reducing moves. 
 Hara and Takemura study the observable fibers for $0/1$ contingency tables, that is, tables with cell bound of $1$ everywhere,  and show \cite[Proposition 2.1]{HT}  that the squarefree part of the Graver basis will connect \emph{any} fiber $\simplefiber$ \emph{respecting} $0/1$ sampling constraints.  Here, ``squarefree part" simply means that each entry in the table representing the move $u-v$ is either $0$ or $1$; we will say that such a move \emph{respects the $0/1$ sampling constraint}. Their result is, in fact, more general, and applies to higher integer cell bounds and structural zeros as well:  

\begin{proposition}[\cite{HT}]
\label{prop:Graverbasis} 
	The elements of the Graver basis which respect the sampling constraints suffice to connect the observable fiber in all cases where sampling constraints are integer bounds on cells. 
\end{proposition}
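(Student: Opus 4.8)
\noindent\textit{Proof proposal.}
The plan is to reduce the statement to the standard \emph{distance-reducing} property of the Graver basis, and then to observe that the sign-compatibility built into that property automatically preserves upper cell bounds (structural zeros being the special case of the bound $0$). Write $\mathcal L=\ker_{\Z}S$ for the lattice of integer moves, and recall the defining fact about the Graver basis (see \cite[\S1.3]{DSS09}, \cite[\S4.6]{AHT2012}): $\mathrm{Gr}(S)$ is the set of $\sqsubseteq$-minimal nonzero elements of $\mathcal L$, where $z\sqsubseteq z'$ means $z$ is sign-compatible with $z'$, i.e.\ $z_i^{+}\le (z')_i^{+}$ and $z_i^{-}\le (z')_i^{-}$ for every cell $i$. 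The property we will use is: for every nonzero $z\in\mathcal L$ there is a $g\in\mathrm{Gr}(S)$ with $g\sqsubseteq z$ (and, iterating, $z$ has a conformal decomposition $z=\sum_k g_k$ with all $g_k\in\mathrm{Gr}(S)$ sign-compatible with $z$, so that $\|z\|_1=\sum_k\|g_k\|_1$).

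First I would fix two distinct tables $u,v\in\simplefiber$ and set $z:=u-v$. Since $S$ is linear and $S(u)=S(v)=s$, we have $z\in\mathcal L\setminus\{0\}$; choose $g\in\mathrm{Gr}(S)$ with $g\sqsubseteq z$, and consider the single step $u\to u':=u-g$. I would then check that $u'$ has all the required properties by a short case analysis on the sign of $g_i$. Nonnegativity: if $g_i>0$ then $g_i=g_i^{+}\le z_i^{+}=u_i-v_i$, so $u'_i=u_i-g_i\ge v_i\ge 0$; if $g_i\le 0$ then $u'_i=u_i-g_i\ge u_i\ge 0$. Preservation of a cell bound $0\le\,\cdot\,\le b_i$ (with $b_i=0$ encoding a structural zero): if $g_i\ge 0$ then $u'_i=u_i-g_i\le u_i\le b_i$; if $g_i<0$ then $-g_i=g_i^{-}\le z_i^{-}=v_i-u_i$, hence $u'_i=u_i-g_i\le v_i\le b_i$. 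Membership in the fiber: $S(u')=S(u)-S(g)=s$. Strict distance decrease: $u'-v=z-g$ is sign-compatible with $z$, so $\|u'-v\|_1=\|z\|_1-\|g\|_1<\|z\|_1=\|u-v\|_1$, the inequality being strict because $g\neq 0$. Thus $u'\in\simplefiber$ and it is strictly closer to $v$.

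Next I would note that the move $g$ used at each step itself \emph{respects the sampling constraints}: from $g\sqsubseteq z=u-v$ with both $u,v$ feasible we get $g_i^{+}\le z_i^{+}\le u_i\le b_i$ and $g_i^{-}\le z_i^{-}\le v_i\le b_i$, i.e.\ $|g_i|\le b_i$ for all $i$; in particular, in the $0/1$ case $b_i\equiv 1$ every such $g$ is squarefree, which recovers \cite[Proposition 2.1]{HT}. Inducting on $\|u-v\|_1$ and applying the one-step reduction repeatedly then yields a walk $u=u_0\to u_1\to\cdots\to u_N=v$ with every intermediate $u_\ell\in\simplefiber$ and every step a Graver move respecting the constraints; since $u,v\in\simplefiber$ were arbitrary, this is exactly the assertion of the proposition.

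The only delicate point is the bookkeeping in the second paragraph: one must organize the case analysis so that nonnegativity, the upper cell bounds, and the strict drop in $\ell_1$-distance all issue from the \emph{same} pair of inequalities $g^{+}\le z^{+}$, $g^{-}\le z^{-}$. Once that is done the induction is immediate, and the structural-zero case requires no separate argument since it is the instance $b_i=0$ of the bound preservation already established.
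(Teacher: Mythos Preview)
Your proof is correct and is essentially the approach the paper indicates: the paper does not give a detailed argument but simply remarks that the result follows from the conformal decomposition of any move into Graver basis elements, referring to \cite{St96} and \cite{AHT2012}. Your distance-reducing argument is exactly the integer-vector formulation of that conformal decomposition, with the added observation that sign-compatibility automatically preserves cell bounds; so you have filled in the details the paper omits, along the same lines.
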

The proof relies on an algebraic fact that moves correspond to binomials in  a toric ideal, and every binomial arising from the given model can be written as what is called a conformal sum of Graver basis elements. We will not go into technical details of this result here; the reader is referred to \cite{St96} and recent text \cite{AHT2012}. 

\subsection{Applicable moves and revised Metropolis-Hastings} 

In general, the set of squarefree moves from the Graver basis is much larger than a minimal Markov basis. In particular, this set almost never equals the squarefree moves from a minimal basis. Moreover, it is notoriously difficult to compute, providing another reason against pre-computing the moves for the given model, and instead, generating dynamically only those moves that can be applied to the observed table or network and remain in the observable fiber $\simplefiber$. 
\begin{defn}
	A move $v-u$ is said to be \emph{applicable} to a point $u$ in the fiber (equivalently, to the network represented by a table $u$)  if it produces another point $v$ in the observable fiber $\simplefiber$, respecting the sampling constraints of the model at hand. 
	
	In terms of the hypergraph edges, the move $v-u$, represented as $(\cR,\cB)$, is applicable if $\left(e(u)\setminus \cR\right) \cup \cB = e(v)$ for some  table $v\in\simplefiber$. 
\end{defn}

 We can thus extend Theorem~\ref{thm:FiberAndMovesOnHypergraph} to characterize applicable Graver moves in terms of the parameter hypergraph: 
  By Theorem 2.8 in \cite{PS13} and the Fundamental Theorem of Markov bases, \emph{any} move corresponds to a balanced edge set of $\cH_{\modelM}$. Furthermore, moves in the Graver bases  correspond to the \emph{primitive} balanced edge sets of $\cH_{\M}$.  We can summarize  applicable Graver moves in terms of $\cH_{\modelM}$ in the following way. 
 \begin{cor}\label{cor:applicableGraver}
 Adopt Notation~\ref{notationRedBlue}. 
 Any move $v-u$ in the \emph{Graver}  basis that is applicable to $u$ is a set of edges $\left(\cR, \cB\right)$ such that:
 \begin{enumerate}
 \item  $\cR\subseteq e(u)$, 
 \item  $\left(e(u)\setminus \cR\right) \cup \cB = e(v)$ for some  table $v\in\simplefiber$, and 
 \item  
there exists no move $\left(\cR', \cB'\right) $ such that $\cR'\subset\cR$ and $\cB'\subset\cB$. 
\end{enumerate}
 \end{cor}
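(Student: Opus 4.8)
The plan is to read off the three conditions one at a time from the hypergraph dictionary already developed in this section, so that the corollary reduces to an assembly of Theorem~\ref{thm:FiberAndMovesOnHypergraph}, Theorem~2.8 of \cite{PS13}, and the definition of an applicable move, plus one elementary observation about balanced edge sets; I would in fact prove the stated conditions characterize applicable Graver moves (an iff), not just that they are necessary. First, by Theorem~\ref{thm:FiberAndMovesOnHypergraph}(b) a move $v-u$ is represented over $H_{\modelM}$ by the pair $(e(u),e(v))$, and by the remark following Notation~\ref{notationRedBlue} we may delete the edges common to $e(u)$ and $e(v)$ and work with $\cR:=e(u)\setminus e(v)$ and $\cB:=e(v)\setminus e(u)$ (as multisets). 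Then $\cR\subseteq e(u)$, which is (1), and $(e(u)\setminus\cR)\cup\cB=e(v)$; by the definition of an applicable move, $v-u$ is applicable to $u$ exactly when $v\in\simplefiber$, which is precisely the extra clause in (2). Conversely any $(\cR,\cB)$ satisfying (1) and (2) reconstructs $v$ via $e(v)=(e(u)\setminus\cR)\cup\cB$ and hence a move applicable to $u$. So (1)+(2) capture ``applicable move'' with no reference yet to the Graver basis.

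Next I would impose the Graver condition. By Theorem~2.8 of \cite{PS13} together with the Fundamental Theorem of Markov bases, the moves in a Graver basis of $\modelM$ correspond exactly to the \emph{primitive} balanced edge sets of $H_{\modelM}$: those $(\cR,\cB)$ for which there is no nonzero balanced edge set $(\cR',\cB')\neq(\cR,\cB)$ with $\cR'\subseteq\cR$ and $\cB'\subseteq\cB$ (multiset containment). It remains to identify this with condition (3), where both inclusions are required to be \emph{strict}. This uses the elementary fact that a nonempty edge multiset has nonzero degree vector: if $(\cR',\cB')$ is balanced with $\cR'=\cR$, then $\cB'$ and $\cB$ have equal degree vectors, so $\cB\setminus\cB'$ has zero degree vector and is therefore empty, forcing $\cB'=\cB$; symmetrically $\cB'=\cB$ forces $\cR'=\cR$. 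Hence a proper balanced sub-edge-set, if it exists, is strictly contained on \emph{both} sides, and condition (3) is exactly the negation of primitivity. (One should also note that the trivial move $(\emptyset,\emptyset)$ is excluded in (3) — as is standard for primitivity — and that, since $S$ is linear, any balanced $(\cR',\cB')$ is automatically a Markov move in the sense of the paper, so ``move'' in (3) may be read without further qualification.)

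Combining the two steps: $(\cR,\cB)$ represents a move applicable to $u$ that lies in the Graver basis if and only if (1)+(2) hold (applicability, landing in $\simplefiber$) and (3) holds (primitivity, i.e.\ membership in the Graver basis rather than some larger Markov basis). That is the assertion of the corollary. For context, Proposition~\ref{prop:Graverbasis} then guarantees that enough such moves exist to connect the observable fiber $\simplefiber$, but that input is not needed for the characterization itself.

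The main obstacle — really the only point requiring thought — is the equivalence in the second step between ``primitive'' (no proper nonzero balanced sub-edge-set, with one inclusion a priori possibly non-strict) and condition (3) as written, with both inclusions strict. This rests entirely on the remark that no color class of a balanced edge set can be emptied without emptying the other, which in turn uses that every hyperedge of $H_{\modelM}$ is nonempty; this is automatic here, since each joint probability is a nonempty product of parameters. Everything else is bookkeeping within the hypergraph dictionary of this section.
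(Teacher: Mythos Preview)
Your proposal is correct and follows essentially the same route as the paper: the paper's ``proof'' is just the explanatory text surrounding the corollary, which invokes Theorem~2.8 of \cite{PS13} and the Fundamental Theorem of Markov bases for the Graver/primitive correspondence and then reads off what each of (1), (2), (3) enforces. You do the same, only more carefully --- in particular, your observation that a proper balanced sub-edge-set must be strictly contained on \emph{both} sides (because a nonempty multiset of nonempty hyperedges has nonzero degree vector) and your remark that the trivial move must be excluded from (3) are details the paper leaves implicit.
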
 
In the result above,  1.\ ensures non-negativity of the resulting table $v$, 2.\ ensures the move is applicable, and 3.\ ensures the move is a Graver basis element. In practice, however, checking 3. is a non-trivial task; instead, an algorithm with positive probability for producing each Graver move suffices for goodness of fit testing  purposes. Thus,  in Section~\ref{sec:p1} we run walks on fibers using elements of the Graver basis along with some larger applicable moves as well.

\smallskip

The remainder of this section discusses how to construct applicable moves and embeds the combinatorial idea from Corollary~\ref{cor:applicableGraver} within the Metropolis-Hastings algorithm to perform random walks on fibers. 

\smallskip

{\footnotesize  
\begin{algorithm}[H]
\label{alg:MH}
\LinesNumbered
\DontPrintSemicolon
\SetAlgoLined
\SetKwInOut{Input}{input}
\SetKwInOut{Output}{output}
\Input{
$u\in \mathcal T(n)$, a contingency table (or $G=g$, a network represented by $u$), \\
\quad $S(u)=s(u)$, the sufficient statistics for the model $\modelM$,\\ 
\quad $N$ the number of steps,\\
\quad $f\left(\cdot|S(u)\right)$ conditional probability distribution,\\
\quad $GF(\cdot)$, test statistic}
\Output{Estimate of $p$-value}
\BlankLine
Compute the MLE $\tilde{p}$.\;
Set $GF_{\text{observed}}:=GF(u)$.\;  
Randomly select a multiset of hyperedges $\cR$ from $e(u)$.\; 
Find a multiset of hyperedges $\cB$ from $\cH_{\M}$ that balances $\cR$, ensuring that each Graver move $(\cR, \cB)$ 
has  positive probability of being constructed.\;
Set $m=\cR-\cB$.\; 
$q=\min\left\{1, \frac{f(U=u+m|t)}{f(U=u|t)}\right\}$\;
$u = \begin{cases} 
u+m, & \text{ with probability } q\\
u, & \text{ with probability } 1-q\end{cases}$\;
\uIf {$GF\left(u\right)>GF_{\text{observed}}$}{   
	$k=k+1$.}
Repeat Steps 3-9 $N$ times.\;
Output $\frac{k}{N}$.
\caption{Revised Metropolis-Hastings}
\end{algorithm}
}

If the procedure for finding $\cB$ in step 4 is symmetric and non-periodic, then Algorithm \ref{alg:MH} is a Metropolis-Hastings algorithm and as $N \to \infty$ the output will converge to $P(GF(U) \geq GF(u) \ | U \in \fiberofu)$ (\cite{DSS09}, \cite{RC99}).
Ideally, Step 4 should take advantage of the specific structure of the hypergraph. 
 For example, we employ this process to implement Algorithm \ref{alg:MH} and produce applicable moves on the fly for the Holland-Leinhardt $p_1$ model in Section~\ref{sec:p1}. 

The use of  $H_{\modelM}$ allows us to bypass two crucial issues of the usual chain, as stated in \cite{DS98},  which relies on precomputing a minimal Markov basis, and which are summarized in the last paragraph of \cite{Dobra2012}. First, Algorithm \ref{alg:MH} does not require  computing the full Markov basis, or the full Graver basis as may be required due to sampling constraints.  Second, the rejection step from the usual Metropolis-Hastings is bypassed, since rejections are due to the fact that most moves drawn from the full Markov basis will be non-applicable to the current table. This, in turn, should have positive impact to the mixing time of the chain.

\begin{example} \label{ex:independence-model-move}
Suppose we observe a $5\times5$ contingency table all of whose entries are $0$ except the $(1,1)$ and $(2,2)$ entries, which are $1$. There are 200 moves in a minimal Markov basis for the independence model $Z_1\ci Z_2$.  However, only one of those is applicable: namely
$$\begin{array}{|c|c|c|c|c|}\hline -1 & 1 & 0 & 0 & 0 \\\hline 1 & -1 & 0 & 0 & 0 \\\hline 0 & 0 & 0 & 0 & 0 \\\hline 0 & 0 & 0 & 0 & 0 \\\hline 0 & 0 & 0 & 0 & 0 \\\hline \end{array}\ ,$$
or, written in terms of the parameter hypergraph, $\mathcal W=( \cB, \cR)$ where $\cB=\{ x_1y_2, x_2y_1\}$ and $\cR=\{x_1y_1, x_2y_2\}$.
 This move replaces the entries $(1,1)$ and $(2,2)$ by $0$, and entries $(1,2)$ and $(2,1)$ by $1$.  Any other move will produce negative entries in the table and thus move outside the fiber.  A more interesting example can be similarly constructed on a  $k$-way table that is either sparse or has many non-zero entries but $\simplefiber$ allows  only $0/1$ entries.

Next, suppose the observed table is 
$$u=\begin{array}{|c|c|c|c|c|}\hline 3 & 2 & 0 & 1 & 0 \\\hline 1 & 0 & 0 & 0 & 1 \\\hline 0 & 0 & 0 & 2 & 0 \\\hline 0 & 1 & 0 & 0 & 0 \\\hline 0 & 0 & 0 & 0 & 0 \\\hline \end{array}\ .$$ 
The table $u$ is represented by the multiset of edges 
\[
e(u)=\{x_1y_1,x_1y_1,x_1y_1,x_1y_2,x_1y_2, x_1x_4, x_2y_1,x_2y_5,x_3y_4,x_3y_4, x_4y_2 \} 
\]  
from the independence model (hyper)graph illustrated in Figure~\ref{fig:indep}. Denote the bipartite (hyper)graph in Figure~\ref{fig:indep} as $G$. It is known that any Markov move for the independence model corresponds to a collection of closed even walks on $G$, and any Graver move corresponds to a primitive closed even walk on $G$. For a detailed account of the correspondence between primitive balanced edge sets of $G$ and primitive closed even walks see \cite{Vill}. Due to this correspondence, a natural procedure for performing Step 4 in Algorithm~\ref{alg:MH} is to randomly select a set of edges from $e(u)$, say, $\cR=\{x_1y_1,x_2y_5, x_3y_4\}$, and then complete a closed even walk on $\cR$, so that the new edges form $\cB=\{x_2y_1, x_3y_5, x_1y_4\}$.  Notice $\cR$ and $\cB$ have the same degree vector and $(\cR, \cB)$ is applicable to $u$. This move is depicted in Figure~\ref{fig:independence-model-move}. The first figure is the parameter hypergraph. The second represents the observed table $e(u)$, with edges in $\cR$ highlighted. The third is the edge set $e(v)$ with $\cB$ highlighted. 
The resulting table is 
$$v=\begin{array}{|c|c|c|c|c|}\hline 2&2&0&2&0 \\\hline 2&0 & 0 & 0 & 0 \\\hline 0 & 0 & 0 & 1 & 1 \\\hline 0 & 1 & 0 & 0 & 0 \\\hline 0 & 0 & 0 & 0 & 0 \\\hline \end{array}\ .$$
\begin{figure}
	\label{fig:independence-model-move}
	\includegraphics{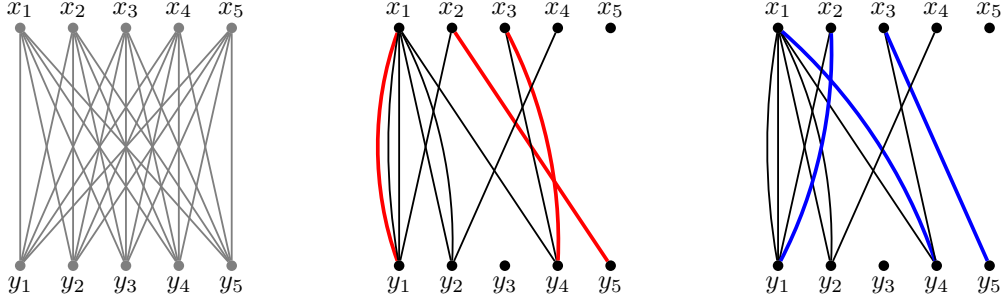}
	\caption{Example~\ref{ex:independence-model-move}: parameter hypergraph (left); observed tables $e(u)$ and $e(v)$ with applicable move $(\cR, \cB)$ highlighted (center and right). }
	\label{fig:independence-model-move}
\end{figure}
\end{example}

\section{Application to the $p_1$-model
}\label{sec:p1}

In a seminal 1981 paper \cite{HL81}, Holland and Leinhardt described what they referred to as the $p_1$ model for describing dyadic relational data in a social network
summarized in the form of a directed graph. Their model,  which is log-linear in form  (\cite{FW81}), allows   for effects due to differential attraction (popularity) and
expansiveness, as well as an additional effect due to reciprocation.   
For each dyad, a pair of nodes ($i,j)$, the parameter $\alpha_i$ describes the effect of an outgoing edge from $i$, and $\beta_j$ the effect of an incoming edge pointed towards $j$, while $\rho_{ij}$ corresponds to the added effect of reciprocated edges. The parameter $\theta$ quantifies the average ``density" of the network, i.e. the tendency of having edges, and $\lambda_{ij}$ is a normalizing constant to ensure that the probabilities for each dyad $(i,j)$ add to 1.

Given a directed graph, each dyad $(i,j)$ can occur in one of the four possible configurations: no edge, edge from $i$ to $j$, edge from $j$ to $i$, and a pair of reciprocated edges between $i$ and $j$. The model  postulates that, for each pair $(i,j)$, the probability of observing the four possible configurations, in that order, satisfy the following equations: \begin{align*}
    p_{ij}(0,0) &= \exp[\lambda _{ij}]\\
    p_{ij}(1,0) &= \exp[\lambda _{ij} + \alpha_i + \beta_j + \theta]\\
    p_{ij}(0,1) &= \exp[\lambda _{ij} + \alpha_j + \beta_i + \theta]\\
    p_{ij}(1,1) &= \exp[\lambda _{ij} + \alpha_i + \beta_j + \alpha_j + \beta_i + 2\theta + \rho_{ij}].
\end{align*}
where 
\[
\sum_i \alpha_i = \sum_j \beta_j=0.
\]
We will focus on the \emph{edge-dependent} version of the reciprocation parameter, where $\rho_{ij}=\rho_i+\rho_j+\rho$.

Making the following substitutions  
\[
 \alpha_i'=e^{\alpha_i+ \theta}, \quad  \beta_i'=e^{\beta_i}, \quad  \rho_{i}'=e^{\frac{1}{2}\rho + \rho_{i}} 
\]
and ignoring the superscripts for convenience, we arrive at the following simplified equations to describe the probability of observing each configuration for a pair $(i,j)$:

\begin{align*}
    p_{ij}(0,0) &= \lambda _{ij}  \\
    p_{ij}(1,0) &= \lambda _{ij}\alpha_i \beta_j  \\
    p_{ij}(0,1) &= \lambda _{ij} \alpha_j \beta_i  \\
    p_{ij}(1,1) &= \lambda _{ij}  \alpha_i \beta_j  \alpha_j  \beta_i  \rho_i \rho_j.
\end{align*}

While normalizing constants are usually ignored, we will follow \cite{PRF10} and treat $\lambda_{ij}$ as a model parameter.  The advantage of this technique is that, given an observable network $g$, these extra parameters ensure that the sampling constraint of a dyad (pair) $\{i,j\}$ being observed in one and only one state is satisfied for all networks in $\fiberofg$.

\begin{defn}[The parameter hypergraph of the $p_1$ model]
\label{defn:p1hypergraph}
	We will denote the parameter hypergraph of the $p_1$ model as $\Hp$. Recall that the hyperedges of $\Hp$ are determined by the parameters appearing in the joint probabilities of the model. Thus, for the $p_1$ model with edge reciprocation there are three types of hyperedges: singletons (corresponding to $p_{ij}(0,0)$ for each dyad $(i,j)$), hyperedges of size $3$ (corresponding to $p_{ij}(1,0)$ and $p_{ij}(0,1)$), and hyperedges of size $7$ (corresponding to $p_{ij}(1,1)$). 
	
	 More formally, \mbox{$\Hp=(V_p,E)$}, where $V_p=\{\alpha_i,\beta_i,\rho_i: 1\leq i\leq n\}\cup\{\lambda_{ij}: 1\leq i<j\leq n\}$, and $ E= E_1 \cup E_3 \cup E_7$, with $E_1= \{ \lambda_{ij} : 1\leq i<j\leq n\}$, $ E_3= \{ \alpha_i\beta_j\lambda_{ij}: 1\leq i\neq j\leq n\}$, and $ E_7= \{\alpha_i\alpha_j\beta_i\beta_j\rho_i\rho_j\lambda_{ij}:1\leq i<j\leq n\}$. 
\end{defn}

\subsection{Markov moves for the $p_1$ model}

Here we describe the form of a Markov move $\mathcal W=(\cB, \cR)$ for the $p_1$ model with edge-dependent reciprocation in terms of the parameter hypergraph $\cH_{p1}$ given in Definition~\ref{defn:p1hypergraph}. The moves can be described in terms of balanced edge sets  on a graph obtained by contracting hyperedges in $\Hp$.  Note that by definition,  balanced edge sets on graphs reduces to collections of closed even walks. 

Let $A_n$ be the undirected bipartite graph on $2n$ vertices with vertex set $$V(A_n)=\{\alpha_i \ | \ 1\leq i\leq n\}\cup\{\beta_i \ | \  1\leq i\leq n\}$$ and edge set $$E(A_n)= \{ \alpha_i \beta_j \ | \ 1\leq i\neq j\leq n\}.$$ Let $K_n$ be the undirected complete graph on the $n$ vertices $\{\rho_j \ | \ 1\leq j\leq n\}$.  The graphs $A_n$ and $K_n$  can be constructed from $\cH_{p1}$ as follows.
To construct $A_n$ from $\cH_{p1}$, simply consider all hyperedges of size $3$ in $\cH_{p_1}$. Each of these hyperedges has vertices $\alpha_j,\beta_k,\lambda_{j,k}$ for some $1\leq j\neq k\leq n$. The contracted edges $\alpha_j\beta_k$ (with $\lambda_{j,k}$ deleted) are precisely the edges in $A_n$. 
	To construct $K_n$, consider all hyperedges of size $7$ in $\cH_{p1}$. Note that  each of these edges corresponds to an edge in $\Hp$ that has vertices $\alpha_j,\alpha_k,\beta_j,\beta_k,\lambda_{j,k},\rho_j,\rho_k$ for some $1\leq j\neq k\leq n$. Deleting all the vertices except $\rho_j,\rho_k$ from each hyperedge of size $7$ contracts them to size $2$, and the result is the complete graph on the $n$ vertices $\rho_j, j=1,\dots,n$.  

Let $\Hp |_{(3,7)}$ be the subhypergraph of $\Hp$ where $\mathcal V(\Hp |_{(3,7)}) = \mathcal V(\Hp)$ and $ E (\Hp |_{(3,7)})=\{ e \in  E( \Hp) \ | \ \#e=3 \text{ or } \# e=7\}$.  The previous two paragraphs describe a bijection between the edge sets of $A_n \cup K_n$ and $\Hp |_{(3,7)}$:
\begin{align*}
\phi: E(A_n \cup K_n)  & \to  E( \Hp |_{(3,7)})\\
\alpha_i \beta_j & \mapsto \alpha_i \beta_j \lambda_{ij} \\
\rho_i \rho_j  & \mapsto \alpha_i  \alpha_j \beta_i\beta_j \lambda_{ij}\rho_j \rho_k. \\
\end{align*}
\indent For a simple balanced edge set $W=(B,R)$ of $A_n \cup K_n$, the set $(\phi(B), \phi(R))$ may not be balanced.  However, it can become balanced by appending edges of the form $\{ \lambda_{ij}\}$ to the sets $\phi(R)$ and $\phi(B)$.  Thus, we define a lifting operation that grows
 $W$ to a simple balanced edge set of $\Hp$ in this manner:
\begin{align*}
\text{lift} W:&=(\cB, \cR)\text{, where }\\ 
\cB &= \phi (B) \cup \{ \lambda_{ij} \ | \deg_{\phi(R)}(\lambda_{ij}) > \deg_{\phi(B)} (\lambda_{ij})\}\text{ and }\\
\cR &= \phi (R) \cup \{ \lambda_{ij}  \ | \deg_{\phi(B)}(\lambda_{ij}) > \deg_{\phi(R)} (\lambda_{ij}) \}.
\end{align*}

Let $\Hp |_{(7)}$ be the subhypergraph of $\Hp$ that contains all the hyperedges of $\Hp$ of size 7. Let $\Hp |_{(3)}$ be the subhypergraph of $\Hp$ that contains all the hyperedges of $\Hp$ of size 3. If $\mathcal W=(\cB, \cR)$ is a balanced edge set of $\Hp$ then each $\rho_i$ in the hyperedges of size 7 of $\mathcal W$ must be color-balanced.  This implies that the $\alpha$'s and $\beta$'s are color-balanced with respect to $\Hp |_{(7)}$.  Thus, it follows that the $\alpha$'s and the $\beta$'s are color-balanced in $\Hp|_{(3)}$.  These observations are noted in \cite{PRF10}, but in algebraic terms using the binomials of the ideal of the hypergraph $I_{\Hp}$. 

Since a balanced edge set $\mathcal W=(\cB, \cR)$ on $\Hp$ is a move between two observable networks only if \mbox{$\deg_{\mathcal R}(\lambda_{ij})=\deg_{\mathcal B}(\lambda_{ij})\in\{0,1\}$}, we arrive at the following proposition.

\begin{prop}\label{prop:lift} A move between two observable networks $g_1$ and $g_2$ in the same fiber is of the form \emph{lift}$W$ such that $W$ is a balanced edge set on $A_n \cup K_n$ and  \mbox{$\deg_{\mathcal R}(\lambda_{ij})=\deg_{\mathcal B}(\lambda_{ij})\in\{0,1\}$.}
\end{prop}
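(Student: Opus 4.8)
The plan is to prove the proposition by combining two already-established facts: first, that every Markov (equivalently, Graver) move between two tables in the same fiber of a log-linear model is a balanced edge set on the parameter hypergraph $\Hp$ (this is Theorem~\ref{thm:FiberAndMovesOnHypergraph}(b) together with the discussion following it, specialized to $\modelM=p_1$); and second, that any balanced edge set on $\Hp$ satisfying the extra observability condition on the $\lambda_{ij}$ degrees is the image under the $\lift$ operation of a balanced edge set on $A_n\cup K_n$. So the real content is to show that the $\lift$ map is surjective onto the collection of balanced edge sets of $\Hp$ that respect the dyad sampling constraints, and that nothing else can arise.

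First I would recall why a move between two \emph{observable} networks $g_1,g_2$ forces $\deg_{\cR}(\lambda_{ij})=\deg_{\cB}(\lambda_{ij})\in\{0,1\}$ for every dyad: an observable network uses exactly one of the four configurations on each dyad $(i,j)$, and each of those configurations (the singleton $\lambda_{ij}$, the two size-$3$ edges, and the size-$7$ edge) covers the vertex $\lambda_{ij}$ exactly once; hence $\deg_{e(g)}(\lambda_{ij})=1$ for every $g\in\simplefiber$, and since $\mathcal W=(\cB,\cR)$ records only the changed cells, each $\lambda_{ij}$ appears at most once in $\cR$ and at most once in $\cB$, with equal multiplicity by balancedness. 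Next, starting from such a move $\mathcal W=(\cB,\cR)$ on $\Hp$, I would project the size-$3$ and size-$7$ hyperedges down via the inverse of $\phi$: delete $\lambda_{ij}$ from a size-$3$ edge to recover $\alpha_i\beta_j\in E(A_n)$, and delete $\alpha_i,\alpha_j,\beta_i,\beta_j,\lambda_{ij}$ from a size-$7$ edge to recover $\rho_i\rho_j\in E(K_n)$. Call the resulting red/blue collection $W=(B,R)$ on $A_n\cup K_n$. The paragraph preceding the proposition already observes that in any balanced edge set on $\Hp$ the $\rho$-vertices in the size-$7$ part are color-balanced, which forces the $\alpha$'s and $\beta$'s to be color-balanced within $\Hp|_{(7)}$ and therefore within $\Hp|_{(3)}$ as well; translating this through $\phi^{-1}$ says precisely that $W$ is a balanced edge set of $A_n\cup K_n$. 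Finally I would check $\lift W=\mathcal W$: by construction $\phi(B),\phi(R)$ agree with $\mathcal W$ on all size-$3$ and size-$7$ edges, and the only possible discrepancy is in the $\lambda_{ij}$ singletons; but the $\lift$ rule appends exactly those $\{\lambda_{ij}\}$ needed to rebalance each $\lambda_{ij}$, and since in $\mathcal W$ each $\lambda_{ij}$ already has equal red and blue degree in $\{0,1\}$, the appended singletons are forced to match the singleton edges present in $\mathcal W$. The converse inclusion — that any $\lift W$ with the stated degree condition really is a move between two observable networks — follows because $\lift W$ is balanced on $\Hp$ by the defining rule, hence a Markov move by Theorem~\ref{thm:FiberAndMovesOnHypergraph}, and the condition $\deg_{\cR}(\lambda_{ij})=\deg_{\cB}(\lambda_{ij})\in\{0,1\}$ guarantees that applying it to an observable network keeps every dyad in a single valid state.

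The step I expect to be the main obstacle is the bookkeeping around the $\lambda_{ij}$ singletons in verifying $\lift W=\mathcal W$ exactly (not merely up to singleton edges): one has to argue that the singleton $\lambda_{ij}$ edges that appear in the original move $\mathcal W$ are \emph{determined} by its size-$3$ and size-$7$ part together with the observability constraint, so that the reconstruction via $\lift$ cannot produce a different (but still balanced) representative. Concretely, if $\lambda_{ij}$ appears with net color imbalance $\delta\in\{-1,0,1\}$ among the size-$3$ and size-$7$ edges of $\mathcal W$, then observability of both endpoints of the move forces $\delta=\pm 1$ exactly when a singleton $\{\lambda_{ij}\}$ of the opposite color is present, and $\delta=0$ forces either no singleton or a canceling red/blue pair which — since $\mathcal W$ records only changed cells — does not occur; this case analysis is routine but is where all the "respecting the sampling constraints" hypotheses get used, and it is worth stating carefully. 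Everything else is an application of results quoted earlier in the section.
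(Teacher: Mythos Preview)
Your proposal is correct and follows essentially the same approach as the paper: the paper does not give a formal proof block for this proposition but instead derives it from the two paragraphs immediately preceding it (the $\rho$-balancing $\Rightarrow$ $\alpha,\beta$-balancing observation for $\Hp|_{(7)}$ and $\Hp|_{(3)}$, together with the remark that observability forces $\deg_{\cR}(\lambda_{ij})=\deg_{\cB}(\lambda_{ij})\in\{0,1\}$), and your argument is a careful expansion of exactly that reasoning via the explicit projection $\phi^{-1}$ and the $\lift$ reconstruction. Your singleton bookkeeping and converse direction go beyond what the paper spells out, but they are consistent with and in the spirit of the paper's treatment.
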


\begin{corollary}\label{cor:lift}  For the $p_1$ model with edge-dependent reciprocation, the set of all $\mathcal W=(\cB, \cR)$ such that $\mathcal W=\emph{lift}(W)$ and $W$ is a balanced edge set of $A_n \cup K_n$ and \mbox{$\deg_{\mathcal R}(\lambda_{ij})=\deg_{\mathcal B}(\lambda_{ij})\in\{0,1\}$} connects the observable fiber $\simplefiber$ for every possible sufficient statistic $s$. 
\end{corollary}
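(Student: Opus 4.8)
The plan is to derive the corollary by threading the Hara--Takemura connectivity result (Proposition~\ref{prop:Graverbasis}) through the structural description of moves between observable $p_1$ networks (Proposition~\ref{prop:lift}), so that essentially no new combinatorics is needed. First I would record that the sampling constraints of the $p_1$ model are of the admissible kind: an observable network corresponds to a $0/1$ contingency table in which every dyad $(i,j)$ occupies exactly one of its four states, and because $\lambda_{ij}$ is treated as a model parameter the per-dyad total $\deg_{e(g)}(\lambda_{ij})$ is itself one of the sufficient statistics, equal to $1$ on $\simplefiber$. Hence, on a fiber $\simplefiber$, fixing $s$ already forces the dyad marginals, and the remaining restriction is just the $0/1$ cell bound, which is an integer bound on cells; so Proposition~\ref{prop:Graverbasis} applies.

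Given any sufficient statistic $s$ and any two observable networks $g_1,g_2\in\simplefiber$, Proposition~\ref{prop:Graverbasis} then produces a walk $g_2=u_0,u_1,\dots,u_k=g_1$ with every intermediate $u_l\in\simplefiber$ and every one-step difference $m_l:=u_l-u_{l-1}$ a Graver basis element that respects the $0/1$ constraint. The key move is now to apply Proposition~\ref{prop:lift} to each $m_l$ individually: since $u_{l-1}$ and $u_l$ are observable networks lying in a common fiber (the move $m_l$ preserves $s$), $m_l$ is a move between two observable networks in the same fiber, and Proposition~\ref{prop:lift} forces $m_l=\lift W_l$ for some balanced edge set $W_l$ of $A_n\cup K_n$ with $\deg_{\cR}(\lambda_{ij})=\deg_{\cB}(\lambda_{ij})\in\{0,1\}$. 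Thus every step of the walk is realized by a move from the set in the statement, the walk never leaves $\simplefiber$ by construction, and since $s$ was arbitrary (the assertion being vacuous when $\simplefiber$ is empty or a singleton) this set connects $\simplefiber$ for every possible sufficient statistic.

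The checks that remain are routine: one needs that $\lift W$ is a genuine balanced edge set of $\Hp$ for each admissible $W$, which is exactly what the lifting construction immediately preceding Proposition~\ref{prop:lift} establishes, and that a $0/1$-respecting Graver step between observable networks again yields an observable network --- automatic, since it is a move in the shared fiber of two observable networks, all of whose $\lambda$-marginals equal $1$. The one point deserving care is the opening reduction: one must verify honestly that ``exactly one state per dyad'' is captured by the $0/1$ cell bounds together with the fixed $\lambda_{ij}$-marginals, so that the hypotheses of Proposition~\ref{prop:Graverbasis} are literally satisfied. Once that is secured, Proposition~\ref{prop:lift} --- which describes \emph{every} move between two observable networks, not only the minimal or primitive ones --- closes the argument with nothing further to do.
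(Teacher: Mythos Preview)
Your proposal is correct and follows essentially the same route the paper intends: the paper does not write out a formal proof of Corollary~\ref{cor:lift}, but the remark immediately following it (``the set of moves described in Corollary~\ref{cor:lift} is a superset of the square-free Graver basis'') signals precisely your argument---Proposition~\ref{prop:Graverbasis} supplies a connecting walk inside $\simplefiber$ by squarefree Graver moves, and Proposition~\ref{prop:lift} then forces each step to lie in the described set of lifts. Your care in checking that the per-dyad ``exactly one state'' constraint is absorbed into the sufficient statistics via the $\lambda_{ij}$ parameters (so that only the $0/1$ cell bound remains as a genuine sampling constraint) is exactly the point the paper makes when it explains why it treats $\lambda_{ij}$ as a model parameter.
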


\begin{rmk} The set of moves described in Corollary~\ref{cor:lift} is a superset of the square-free Graver basis.
\end{rmk}

\subsection{ Generating an applicable move}
Now that we have described the general form for the Markov moves for the $p_1$ model, we give an algorithm for generating an applicable move. Let $g=g_u \cup g_d$ be an observable network written as the union of its reciprocated part $g_u$ and its unreciprocated part $g_d$.
For a directed graph $G=(V,E)$, let undir($G$) be the edges of the skeleton of $G$ and let recip($G$)=($V$, recip($E))$ where recip$(E)=\{ (e_1, e_2)  \ : (e_1, e_2) \in E \text{ or } (e_2, e_1) \in E \}$.
 The following is a general algorithm for generating applicable moves for the $p_1$ model with edge-dependent reciprocation. It uses the fact that every balanced edge set of a graph corresponds to a set of closed even walks on that graph. The output is either an element of the Graver basis, or an applicable combination of several Graver moves, which themselves need not be applicable. Since the hyperedges of a balanced edge set on $\Hp$ each correspond to a dyadic configuration realizable in the network, we will return moves in the form $(b,r)$ where $b$ are the edges to be removed from the network and $r$ are the edges to be added.

{\footnotesize 
\begin{algorithm}[H]
\label{alg:wrapper}
\LinesNumbered
\DontPrintSemicolon
\SetAlgoLined
\SetKwInOut{Input}{input}
\SetKwInOut{Output}{output}
\Input{
$g=g_u \cup g_d$, a directed graph,\\
\quad $c_1$, the probability of choosing 1,\\
\quad $c_2$, the probability of choosing 2,\\
\quad $c_3$, the probability of choosing 3.\\
}
\Output{$(b,r)$, an applicable  move.}
\BlankLine
Generate $c$, a random number between 1 and 3 chosen with probabilities $(c_1, c_2, c_3)$ (weighted coin) 
\;
\uIf {$c=1$}{Use Algorithm~\ref{alg:type1move} to select a Type 1 move. Only reciprocated edges are removed and added in pairs.  A move of this type corresponds to a set of closed even walks on $K_n$}
\uIf {$c=2$}{Use Algorithm ~\ref{alg:type2move} to select a Type 2 move. Only unreciprocated edges are removed and added.  A move of this type corresponds to a set of closed even walks on $A_n$.}
\uIf{$ c=3$}{ Use Algortithm ~\ref{alg:type3move} to select a Type 3 move. Both types of edges are removed and added.  A move of this type corresponds to a set of closed even walks on $A_n$ and a set of closed even walks on $K_n$.}
\caption{Generating applicable moves for the $p_1$ model.}
\end{algorithm}
}

\smallskip
{\footnotesize  
\begin{algorithm}[H]
\label{alg:type1move}
\LinesNumbered
\DontPrintSemicolon
\SetAlgoLined
\SetKwInOut{Input}{input}
\SetKwInOut{Output}{output}
\Input{
$g_u$, a directed graph}
\Output{$(b,r)$, a Type 1 (reciprocated-only) applicable move.}
\BlankLine
Choose a random subset $r_0$ of edges from undir$(g_u)$.\;
\For{each edge in $e \in r_0$}{ choose an arbitrary ordering of the vertices in $e$ and denote each ordered pair as $a_e$.}
Choose a random ordering of $\{a_e \ | e \in r\}$ which induces the sequence ${\bf a}$.\;
Choose a random composition $\sigma$ of $\# r$ such that the size of every part of $\sigma$ is strictly greater than one.  The composition $\sigma$ should be chosen according to a known but arbitrary distribution $P_{\# r} (\sigma)$. Let $k$ be the length of $\sigma$ and partition ${\bf a}$ into $k$ subsequences according to the composition $\sigma$, ${\bf a}=({\bf a}_1, {\bf a}_2, \ldots, {\bf a}_k)$.
\For{$1 \leq j \leq k$}{ let ${\bf a}_j=(a_{e_1}, \ldots, a_{e_m})$. Let $b_j$ be the set of edges obtained by joining the head of $a _{e_{i+1}}$ with the tail of $a_{e_{i}}$ for $i$ from 1 to $m-1$ and joining the head of  $a_{e_1}$ with the tail of $a_{e_m}$,
$$b_j:=\{ \ (a_{e_{i+1}}(2), a_{e_{i}}(1) ) \ | 1 \leq i < m \ \} \cup \{ \ (a_{e_{1}}(1), a_{e_m}(2))\ \}.$$}
Let $b= \cup_{j=1}^k \text{ recip}(b_j)$\;
Let $r=\text{ recip}(r_0)$\;
\uIf
{($b$, as a graph, is not simple) {\bf or} ($b \cap (E(g_u)-r) \neq \emptyset$) {\bf or} ($b \cap \text{undir}(g_d) \neq \emptyset$)}
{return the trivial move $(\emptyset, \emptyset)$}
\uElse{ return $(b, r)$.}
\caption{Generating a Type 1 Move} 
\end{algorithm}
}

\smallskip

{\footnotesize  
\begin{algorithm}[H]
\label{alg:type2move}
\LinesNumbered
\DontPrintSemicolon
\SetAlgoLined
\SetKwInOut{Input}{input}
\SetKwInOut{Output}{output}
\Input{
$g_d$, a directed graph}
\Output{$(b,r)$, a type 2 (non-reciprocated-only) applicable move.}
\BlankLine
Choose a random subset $r$ of edges from $g_d$.  \;
Choose a random ordering of $\{a_e \ | e \in r\}$ which induces the sequence ${\bf a}$.\;
 Choose a random composition $\sigma$ of $\# r$ such that the size of every part of $\sigma$ is strictly greater than one.  The composition $\sigma$ should be chosen according to a known but arbitrary distribution $P_{\# r} (\sigma)$. Let $k$ be the length of $\sigma$ and partition ${\bf a}$ into $k$ subsequences according to the composition $\sigma$, ${\bf a}=({\bf a}_1, {\bf a}_2, \ldots, {\bf a}_k)$.\;
\For{$1 \leq j \leq k$}{ ${\bf a}_j=(a_{e_1}, \ldots, a_{e_m})$. Let $b_j$ be the set of edges obtained by joining the head of $a _{e_{i+1}}$ with the tail of $a_{e_{i}}$ for $i$ from 1 to $m-1$ and joining the tail of  $a_{e_1}$ with the head of $a_{e_m}$,
$$b_j:=\{ \ (a_{e_i}(2), a_{e_{i+1}}(1) ) \ | 1 \leq i < m \ \} \cup \{ \ (a_{e_{m}}(2), a_{e_1}(1))\ \}.$$}
 Let $b= \cup_{j=1}^k b_j$\;
\uIf{undir$(b)$, as a graph, is not simple, {\bf or} $b \cap (E(g_d)-r) \neq \emptyset$ {\bf or} $b \cap \text{undir}(g_u) \neq \emptyset$ }{return the trivial move $(\emptyset, \emptyset)$.}
\uElse{return $(b, r)$.}
\caption{Generating a Type 2 Move.} 
\end{algorithm}
}

\smallskip

{\footnotesize  
\begin{algorithm}[H]
\label{alg:type3move}
\LinesNumbered
\DontPrintSemicolon
\SetAlgoLined
\SetKwInOut{Input}{input}
\SetKwInOut{Output}{output}
\Input{
$g=g_u \cup g_d, $, a directed graph}
\Output{$(b,r)$, a type 3 (mixed) applicable move.}
\BlankLine
Perform steps 1-9 of Algorithm~\ref{alg:type1move} to obtain $(b_u, r_u)$.\;
Perform steps 1-7 of Algorithm~\ref{alg:type2move} to obtain $(b_d, r_d)$.\;
\uIf{
$b_u$, as a graph, is not simple, {\bf or}
undir$(b_d)$, as a graph, is not simple, {\bf or}
$b_u \cap $ undir$b_d \neq \emptyset$, {\bf or}
$b_u \cap (E(g_u)-r) \neq \emptyset$, {\bf or}
$b_d \cap (E(g_d)-r) \neq \emptyset$, {\bf or}
$b_u \cap \text{undir}(g_d) \neq \emptyset$, {\bf or}
$b_d \cap \text{undir}(g_u) \neq \emptyset$
}
{return the trivial move $(\emptyset, \emptyset)$, }
\uElse{return $(b_u \cup b_d, r_u \cup r_d)$.}
\caption{Generating a Type 3 Move} 
\end{algorithm}
}

\begin{example}
Figure~\ref{fig:moveExample}  illustrates the process of generating a Type 2 move.  First the edges $(x_2, x_1)$, $(x_3, x_4)$, and $(x_5, x_6)$ from a network $g$ are chosen.  These will be the edges that are removed from $g$ in the move. We consider these edges as edges of $A_n$.  A walk is completed on $A_n$ by adding the blue edges $\{\alpha_2, \beta_6\}, \{\alpha_3, \beta_1\}$, and $\{ \alpha_5, \beta_4\}$. The blue edges are then interpreted in terms of pairs and dyadic configurations in $g$.  These are the edges that are added to $g$ in the move.
\begin{figure}
\begin{center}
\includegraphics[width=5in]{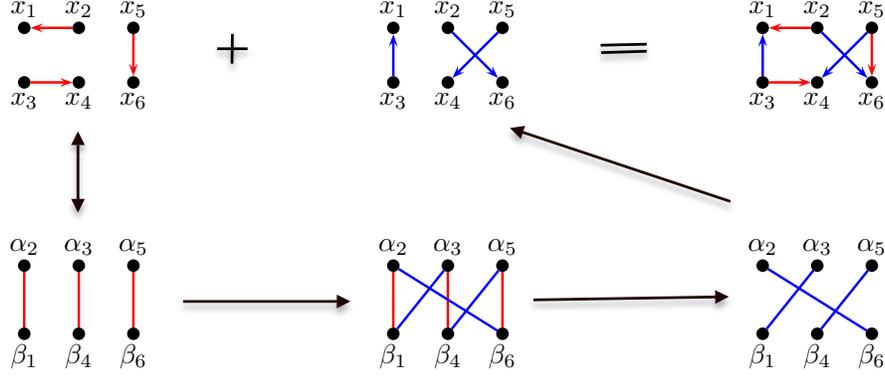}
\end{center}
\caption{An example of generating a Type 2 move. }
\label{fig:moveExample}
\end{figure}
\end{example}

\begin{remark} Notice that in each of the above algorithms, it is possible that the trivial move is returned.  This means the walk in Algorithm \ref{alg:MH} would stay in the same place at that step.  While this does not affect the stationary distribution of the Markov chain, it can have a negative impact on mixing times if too many trivial moves are returned.  However, this is the problem also with the usual Metropolis-Hastings algorithm, as mixing time questions are generally open. Section~\ref{sec:examples} shows some indication that the chain seems to be mixing well.  In the case of the $p_1$ model, the probability of returning the trivial move in any of the above algorithms depends on the in and out-degree sequences of the unreciprocated edges and the reciprocated edges.  One direction for further research is to understand and try and reduce the output of trivial moves.  Even understanding which networks result in a high probability of a trivial move being returned in Algorithms \ref{alg:type1move}, \ref{alg:type2move}, \ref{alg:type3move}, would be an interesting combinatorial problem.  
\end{remark}

\begin{prop}\label{prop:moves}  Every move outputted by Algorithms \ref{alg:type1move}, \ref{alg:type2move}, \ref{alg:type3move} is an applicable Markov move of the form \emph{lift}$W$ such that $W$ is a balanced edge set on $A_n \cup K_n$ and   \mbox{$\deg_{\mathcal R}(\lambda_{ij})=\deg_{\mathcal B}(\lambda_{ij})\in\{0,1\}$.} Moreover, on input $g_1$, if $g_2 \in \fiberofg$ and $g_1 \neq g_2$, Algorithm \ref{alg:wrapper} has  a non-zero probability of returning the move $g_2-g_1$.
\end{prop}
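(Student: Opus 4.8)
The plan is to prove the two assertions in turn: the first by unwinding the definitions together with the combinatorics of the rotation step, the second by a surjectivity argument built on Proposition~\ref{prop:lift}.

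For the first assertion I would fix the type of move produced on a given run and argue uniformly. Since $r$ (resp.\ $r_0$, or $r_u,r_d$ in the mixed case) is selected inside the edge set of the input graph, the inclusion $\cR\subseteq e(g_1)$ is immediate — the non-negativity requirement of Corollary~\ref{cor:applicableGraver}. The crux is to check that the rotation step — partition the ordered list $\mathbf a$ into cyclic blocks according to the composition $\sigma$, and within each block form $b_j$ by joining heads to tails cyclically — yields, block by block, a set $b_j$ whose degree vector on $\An$ (Type~2), on $K_n$ (Type~1), or on $\An\cup K_n$ (Type~3, by taking the disjoint union) equals that of the corresponding sublist of $r$. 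This is precisely the classical dictionary between a cyclically ordered list of edges and the two colour classes of a closed even walk; I would cite \cite{PS13} (and \cite{Vill}) and match the indexing of the $b_j$ formulas against it. Summing over blocks shows $W=(b,r)$ is a balanced edge set of $\An\cup K_n$, and passing back from $K_n$/$\An$-edges to dyadic configurations via $\mathrm{recip}(\cdot)$ and the bijection $\phi$ of Section~\ref{sec:p1} is exactly the lift operation, so the output has the form $\mathrm{lift}\,W$. To finish, I would verify the two side conditions: the identity $\deg_{\cR}(\lambda_{ij})=\deg_{\cB}(\lambda_{ij})\in\{0,1\}$ holds because each of the four dyadic states contributes $\lambda_{ij}$ exactly once to its hyperedge of $\Hp$, so $\lambda_{ij}$ has degree $1$ in $\cR$ and in $\cB$ when the dyad changes state and degree $0$ otherwise; and applicability holds because the three guard tests (``$b$ not simple'', ``$b\cap(E(g)-r)\neq\emptyset$'', ``$b\cap\mathrm{undir}(\cdot)\neq\emptyset$'') are exactly the ways in which $(e(g_1)\setminus\cR)\cup\cB$ could fail to be the hyperedge list of a genuine element of $\simplefiber$. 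When a guard fires the trivial move is returned, which is vacuously of the required form; when none fires, $(e(g_1)\setminus\cR)\cup\cB=e(g_2)$ with $g_2\in\simplefiber$, so by Theorem~\ref{thm:FiberAndMovesOnHypergraph} and Proposition~\ref{prop:lift} the output is an applicable Markov move of the claimed shape.

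For the second assertion, let $g_1\neq g_2$ lie in the same fiber. By Proposition~\ref{prop:lift}, $g_2-g_1=\mathrm{lift}\,W$ for a balanced edge set $W$ on $\An\cup K_n$ with $\deg_{\cR}(\lambda_{ij})=\deg_{\cB}(\lambda_{ij})\in\{0,1\}$; since $\An$ and $K_n$ are vertex-disjoint, $W=W_{\An}\sqcup W_{K_n}$ with each part balanced on its own graph. I would choose the type $c$ so that the active sub-routine(s) match the non-empty part(s) of $W$; this has probability $c_1$, $c_2$, or $c_3$, all positive. Then I would fix a decomposition of $W_{\An}$ into closed even walks, list the red edges of those walks in traversal order with the induced orientations, and read off the cyclic block structure; because every red $\An$-edge is the contraction of an unreciprocated edge of $g_1$, the random subset step of Algorithm~\ref{alg:type2move} can select exactly that set, the random ordering can realize the traversal order, and the composition $\sigma$ can realize the block structure — each with positive probability provided the distributions $P_{\#r}$ have full support on admissible compositions. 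By the dictionary from the first part the rotation then outputs precisely the blue edges, i.e.\ $b_d=\cB_{\An}$; the same argument inside Algorithm~\ref{alg:type1move} (where $\mathrm{recip}(\cdot)$ identifies $K_n$-edges with reciprocated dyads and the free choice of orientation $a_e$ only enlarges the set of reachable decompositions) gives $b_u=\cB_{K_n}$. Finally, since $g_2-g_1$ is a move between two observable networks, applying it to $g_1$ lands in $\simplefiber$, so the hyperedge list obtained is $e(g_2)$, simple with one configuration per dyad; hence no guard fires on this run and Algorithm~\ref{alg:wrapper} returns $(\cB,\cR)$, which represents exactly $g_2-g_1$.

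The step I expect to be the main obstacle is the last one in the second assertion: one must exhibit a closed-even-walk decomposition of $W$ whose two halves, when handed to the $\An$- and $K_n$-rotations, simultaneously reconstruct $\cB$ from $\cR$ \emph{and} survive every guard test. The delicate case is a dyad that switches between a reciprocated and an unreciprocated configuration — its ``old'' edge then lives in $W_{\An}$ while its ``new'' edge lives in $W_{K_n}$ (or vice versa), so the two rotations are coupled through that dyad, and one has to check that the configurations implicitly traversed inside Algorithm~\ref{alg:type3move} remain observable, so that the guard tests referring to $\mathrm{undir}(g_u)$, $\mathrm{undir}(g_d)$, $E(g_u)-r$ and $E(g_d)-r$ do not spuriously reject. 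By contrast, the first assertion and the basic ``rotation $=$ closed even walk'' identity are routine once the bookkeeping in the $b_j$ formulas is aligned with \cite{PS13}, and that identity only needs to be established once.
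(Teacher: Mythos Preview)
Your approach is essentially the paper's, only more carefully spelled out. The paper verifies the first assertion for Algorithm~\ref{alg:type1move} alone (the rotation step yields closed even walks on $K_n$, hence a balanced $W$; the guard enforces $\deg_{\cR}(\lambda_{ij})=\deg_{\cB}(\lambda_{ij})\in\{0,1\}$; applicability follows from $r\subseteq g_u$ together with that degree condition) and dispatches Algorithms~\ref{alg:type2move} and~\ref{alg:type3move} with ``can be analyzed in a parallel fashion.'' For the second assertion it treats only the case $R\subseteq K_n$ in full---choose $c=1$, select $r_0$ to match $R$, and choose the ordering and composition so that each block $\mathbf a_i$ is a cyclic permutation of the odd edges of the $i$-th primitive closed even walk $W_i$---then says ``a similar argument follows'' for $R\subseteq A_n$ and the mixed case.

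The obstacle you single out at the end---whether the Type~3 guards, which are written against $\mathrm{undir}(g_u)$ and $\mathrm{undir}(g_d)$ rather than against the graphs with $r_u,r_d$ already removed, might spuriously reject a run that would otherwise produce $g_2-g_1$ when some dyad switches between reciprocated and unreciprocated---is precisely the content hidden in the paper's ``similar argument,'' and the paper does not work it out either. So your caution is well placed, but you are not missing anything the paper supplies.
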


\begin{proof} Algorithm \ref{alg:type1move} chooses a set of edges $r_0$ from undir$(g_u)$ and completes $k$ closed even walks on $K_n$.  We will denote the balance edge set of $A_n$ corresponding to this set of closed even walks as $W$. Step 7 checks that lift$W=(\cB, \cR)$  satisfies \mbox{$\deg_{\mathcal R}(\lambda_{ij})=\deg_{\mathcal B}(\lambda_{ij})\in\{0,1\}$}.  If the condition is not satisfied, then the trivial move is returned. Otherwise, $(r,b)$ outputted by Algorithm \ref{alg:type1move} is of the form of the specified.  Applicability of $(b, r)$ follows from the fact that $r$ is a subset of $g_u$ and $\deg_{\mathcal R}(\lambda_{ij})=\deg_{\mathcal B}(\lambda_{ij}) \leq 1$.  Moves outputted from Algorithms \ref{alg:type2move}, \ref{alg:type3move} can be analyzed in a parallel fashion.

For the second part of the statement, Proposition~\ref{prop:lift} states that the move between two networks $g_1$, $g_2$ in the same fiber is of the form lift$W$ where $W=(R,B)$ is a balanced edge set on $A_n \cup K_n$. Assume that $R$ is contained entirely in $K_n$.  Denote the closed even walks on $K_n$ that correspond to $W$ as $W_1, \ldots, W_k$. The move $g_2-g_1$ will be returned if $1$ is chosen in Algorithm \ref{alg:wrapper}, the edges of $g_1$ corresponding to $R$ are chosen at Step 1 of Algorithm \ref{alg:type1move}, and Steps 3 and 4 result in a sequence ${\bf a}=({\bf a}_1, {\bf a}_2, \ldots, {\bf a}_k)$ such that ${\bf a}_i$ corresponds to a cyclic permutation of the odd edges of $W_i$. If $R$ is contained entirely in $A_n$ or contains edges from both $A_n$ and $K_n$, then a similar argument follows.
\end{proof}

\begin{theorem}  Let $g$ be an observable network with more than 2 edges and with sufficient statistic $s$. The Markov chain, $(\mathcal{G}_t)_{t=0}^{\infty}$, where the step from $\mathcal G_i$ to $\mathcal{G}_{i+1}$ is given by Algorithm~\ref{alg:wrapper} is an irreducible, symmetric, and aperiodic random walk on $\fiber$.
\end{theorem}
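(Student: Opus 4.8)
The plan is to verify each of the three properties---irreducibility, symmetry, and aperiodicity---in turn, leaning heavily on Proposition~\ref{prop:moves} and Corollary~\ref{cor:lift}, which have already done most of the structural work.

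\emph{Irreducibility.} First I would observe that by Corollary~\ref{cor:lift}, the set of all moves of the form $\emph{lift}W$ with $W$ a balanced edge set of $A_n\cup K_n$ and $\deg_{\cR}(\lambda_{ij})=\deg_{\cB}(\lambda_{ij})\in\{0,1\}$ connects the observable fiber $\simplefiber=\fiber$ for every sufficient statistic $s$. Thus it suffices to show that the random walk of Algorithm~\ref{alg:wrapper} can realize a connecting sequence between any two networks $g_1,g_2\in\fiber$. This is exactly the content of the second half of Proposition~\ref{prop:moves}: for any $g_1\neq g_2$ in the same fiber, a single step of Algorithm~\ref{alg:wrapper} returns the move $g_2-g_1$ with strictly positive probability. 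Hence the chain can pass from any state to any other state in one step with positive probability, which is more than enough for irreducibility on $\fiber$. (I would note here that one must restrict to $\fiber$ as the state space, since trivial moves keep us inside it and every non-trivial move output is applicable by Proposition~\ref{prop:moves}, so the chain never leaves $\fiber$.)

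\emph{Symmetry.} The proposal is to show that the transition kernel $P$ satisfies $P(g_1,g_2)=P(g_2,g_1)$ for all $g_1,g_2\in\fiber$. The key point is that Algorithm~\ref{alg:wrapper} generates a move by choosing the color $c$, a red edge set $\cR\subseteq e(g_1)$, an ordering, and a composition $\sigma$; the reverse move $g_1-g_2$ is obtained from $g_2$ by choosing the same color $c$, the \emph{image} edge set $\cB\subseteq e(g_2)$ as the red set, a corresponding ordering, and the same composition $\sigma$ (a cyclic walk traversed in the opposite direction decomposes with the same part-sizes). I would check that the maps used in Algorithms~\ref{alg:type1move}--\ref{alg:type3move}---forming $b_j$ from $\mathbf a_j$ by reconnecting heads to tails along a cycle---are involutive in the appropriate sense, so that the number of $(c,\cR,\text{ordering},\sigma)$ tuples producing $g_2-g_1$ from $g_1$ equals the number producing $g_1-g_2$ from $g_2$, and that the corresponding selection probabilities agree because $c_1,c_2,c_3$ are fixed and $P_{\#r}(\sigma)$ depends only on the part-structure. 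Summing over all such tuples gives $P(g_1,g_2)=P(g_2,g_1)$.

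\emph{Aperiodicity.} Here I would use the hypothesis that $g$ has more than $2$ edges. The cleanest route is to exhibit a state with a self-loop: since the trivial move $(\emptyset,\emptyset)$ is returned with positive probability from many configurations (e.g.\ whenever the reconnected edge set $b$ fails the simplicity or disjointness tests in Algorithms~\ref{alg:type1move}--\ref{alg:type3move}), every state reachable from such a configuration---hence, by irreducibility, every state in $\fiber$---has $P(g,g)>0$, so the chain is aperiodic. The ``more than $2$ edges'' hypothesis is what guarantees $\fiber$ contains at least one state from which a trivial move can occur (with too few edges the only balanced edge sets may be forced non-trivial moves), and it also rules out the degenerate case where the fiber is a single point. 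I expect the main obstacle to be the symmetry argument: one has to be careful that the bijection $\phi$ and the lift operation interact correctly with the red/blue roles, and that the composition $\sigma$ and the cyclic orderings are counted consistently under reversal---in particular verifying that a part of size $m$ in one direction corresponds to a part of size $m$ in the other, and that no tuple is double-counted or dropped. Once that bookkeeping is pinned down, the three properties follow and the Metropolis--Hastings wrapper in Algorithm~\ref{alg:MH} is justified.
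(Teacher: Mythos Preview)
Your treatment of irreducibility and symmetry tracks the paper's proof essentially verbatim: both invoke Proposition~\ref{prop:moves} for one-step connectivity, and both argue symmetry by noting that the probability of the selection in Step~1 depends only on the number of edges (which is fiber-invariant) while the probability of the subsequent ordering/composition choices depends only on the combinatorial structure of the balanced edge set $W$, which is the same whether one moves from $g_1$ to $g_2$ or back.

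Where you diverge is aperiodicity. The paper does not use self-loops at all: it simply observes that, by Proposition~\ref{prop:moves}, every off-diagonal entry of the transition matrix $P$ is strictly positive, and then the hypothesis of more than two edges is used to ensure enough states that $P^n(i,j)>0$ for all $n\geq 2$ (so in particular $P^2(i,i)>0$ and $P^3(i,i)>0$, forcing period~$1$). Your self-loop route is a legitimate alternative, but the sentence ``every state reachable from such a configuration \ldots\ has $P(g,g)>0$'' is a logical slip: reachability from a state with a self-loop does not propagate the self-loop itself. What propagates is the \emph{period}, since period is a class property of an irreducible chain; so the correct phrasing is that one self-loop anywhere, together with irreducibility, forces the period of every state to be~$1$. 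Alternatively, you could argue directly that $P(g,g)>0$ for \emph{every} $g$ because Step~1 of each sub-algorithm may select the empty subset (or a singleton, for which no admissible composition exists), in which case the trivial move is returned---this is closer to what you seem to have in mind and avoids the reachability detour entirely.
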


\begin{proof}
Irreducibility follows Proposition \ref{prop:moves}. 

To show symmetry, let $g_1=g_{1_u} \cup g_{2_d}$ and $g_2=g_{2_u} \cup g_{2_d}$ be two simple networks with reciprocated parts $g_{1_u}$, $g_{2_u}$ and unreciprocated parts $g_{1_d}$, $g_{2_d}$. The move $(b, r)$ from $g_1$ to $g_2$ is the combination of moves $(b_u, r_u)$ from $g_{1_u}$ to $g_{2_u}$ and $(b_d, r_d)$ from $g_{1_d}$ to $g_{2_d}$ where $b=b_u \cup b_d$ and $r=r_u \cup r_d$.  The move $(b_u, r_u)$ corresponds to a balanced edge set $W_u=(B_u, R_u)$ on $K_n$, which forms a set of primitive closed even walks on $K_n$.  The probability of choosing $r_u$ in Step 1 of Algorithm \ref{alg:type1move} is dependent only on the number of edges in $g_{1_u}$, which is equal to the number of edges in $g_{2_u}$. Step 5 in Algorithm \ref{alg:type1move} completes walks on sequences of edges from $r_u$ by connecting heads to tails. Thus, given that $r_u$ was chosen in Step 1, the probability of choosing an ordering of the vertices, an ordering of the edges,  and a composition in Steps 2-4 such that Step 5 will output $b_u$ is dependent only on the structure of $W_u$ (the primitive walks in $W_u$, the length of these walks, and which of these walks share a vertex). So, since $W_u$ is the same regardless whether we are moving from $g_{1_u}$ to $g_{2_u}$ or from $g_{2_u}$ to $g_{1_u}$, the probabilities of making these moves in a single step are equal. A similar situation occurs between the reciprocated parts of $g_1$ and $g_2$.

For aperiodicity, notice that every non-diagonal entry of the transition matrix $P$ of $(\mathcal{G}_t)_{t=0}^{\infty}$ is greater than zero.  Therefore, since $g$ contains more than two edges, $P^n(i,j) > 0$ for all $n\geq 2$.
\end{proof}

\begin{corollary} If $g$ has more than two edges, then with probability one
\[ \lim _{N \to \infty} \frac{1}{N} 1_{\chi^2(\mathcal{G}) \geq \chi^2(g)} = P( \chi^2(\mathcal{G}) \geq \chi^2(g) \ : \mathcal G \in \fiberofg ).\]
\end{corollary}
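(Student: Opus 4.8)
This statement is the strong law of large numbers for the Markov chain $(\mathcal G_t)_{t\ge 0}$ constructed in the preceding theorem, applied to the bounded functional $v\mapsto 1_{\{\chi^2(v)\ge\chi^2(g)\}}$ on the fiber; the left-hand side is to be read as the empirical average $\frac1N\sum_{t=1}^{N}1_{\{\chi^2(\mathcal G_t)\ge\chi^2(g)\}}$. The plan is: (i) observe that $(\mathcal G_t)$ is a finite-state, irreducible, aperiodic chain, hence ergodic with a unique stationary law $\pi$; (ii) identify $\pi$ with the uniform (equivalently, the conditional) distribution on $\fiberofg$; (iii) quote the ergodic theorem and specialize it to our indicator.

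For step (i), I would note that the state space of $(\mathcal G_t)$ is the observable fiber $\fiberofg$, a finite set --- it consists of the simple directed graphs on $n$ nodes with sufficient statistic $s$ --- which is nonempty because it contains $g$. The hypothesis that $g$ has more than two edges is exactly the hypothesis of the preceding theorem, so that result tells us the walk given by Algorithm~\ref{alg:wrapper} is irreducible, symmetric, and aperiodic on $\fiberofg$; a finite, irreducible, aperiodic chain is ergodic and has a unique stationary distribution. For step (ii), symmetry of the transition matrix forces this stationary distribution to be the uniform measure on $\fiberofg$; moreover, since the observable fiber of the $p_1$ model consists of $0/1$ tables, the conditional distribution of the data given the sufficient statistic is itself uniform on $\fiberofg$, so the uniform law is exactly the reference distribution written $P(\,\cdot\,:\mathcal G\in\fiberofg)$ in the statement. (This is also why the symmetric walk of Algorithm~\ref{alg:wrapper}, with no Metropolis--Hastings accept/reject step, already targets the correct stationary law.)

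For step (iii), I would invoke the ergodic theorem for finite irreducible Markov chains (see \cite{DSS09}, \cite{RC99}): for every $f\colon\fiberofg\to\mathbb R$ and every starting state, $\frac1N\sum_{t=1}^{N}f(\mathcal G_t)\to \mathbb E_\pi[f]=\sum_{v\in\fiberofg}f(v)\pi(v)$ almost surely. Taking $f=1_{\{\chi^2(\cdot)\ge\chi^2(g)\}}$ then gives $\mathbb E_\pi[f]=\pi(\{v\in\fiberofg:\chi^2(v)\ge\chi^2(g)\})=P(\chi^2(\mathcal G)\ge\chi^2(g):\mathcal G\in\fiberofg)$, which is the asserted limit; since $g$ has more than two edges all hypotheses are in force and the convergence holds with probability one.

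I do not anticipate a genuine obstacle: the substantive content is in the preceding theorem, and what remains is a textbook limit theorem. The points to get right are bookkeeping --- reading the left-hand side as a Ces\`aro average, identifying the stationary distribution through the symmetry of the chain (and observing it coincides with the model's conditional distribution because the fiber is a set of $0/1$ tables), and recording that the ``more than two edges'' assumption enters only to inherit irreducibility and aperiodicity from the preceding theorem.
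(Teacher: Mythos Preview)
Your argument is correct and is essentially what the paper intends: in the source, this corollary is stated without proof, as an immediate consequence of the preceding theorem together with the standard ergodic theorem for finite Markov chains (the references \cite{DSS09}, \cite{RC99} were already invoked after Algorithm~\ref{alg:MH} for exactly this type of convergence). So you have simply written out what the paper leaves implicit.

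One small refinement to your step~(ii): the reason the conditional distribution $P(\,\cdot\mid S=s)$ is uniform on $\fiberofg$ is not the $0/1$ nature of the tables per se, but the exponential-family structure of the $p_1$ model. Since $p(g;\theta)\propto\exp(\theta^{\top}S(g))$ with respect to counting measure, conditioning on $S(g)=s$ kills all $\theta$-dependence and leaves the uniform law on the fiber. The $0/1$ constraint only enters in delimiting the state space (the observable fiber). With that correction your identification of $\pi$ with the reference law in the statement is fully justified, and the rest of your proof goes through verbatim.
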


Algorithm~\ref{alg:wrapper} and it's subroutines Algorithms~\ref{alg:type1move}, \ref{alg:type2move}, \ref{alg:type3move} are implemented in {\tt R}; the code is available  in the supplementary material on \cite{supplement}. 
The examples in Section~\ref{sec:examples} that compute estimated $p$-values use the function {\tt Estimate.p.Value}.   It takes an observed network and implements Algorithm~\ref{alg:MH} using an iterative proportional scaling algorithm \cite[p.40]{HL81} to compute the MLE,  and Algorithm~\ref{alg:wrapper} for Step 4. We chose to use the chi-square statistic for the goodness-of-fit statistic. 

Our implementation makes use of the R package igraph \cite{igraph}, and in particular its graph data structure and methods for producing graph unions and graph intersections. Each of these methods has complexity linear in the sum of the cardinalities of the edge sets and vertex sets of the input. As a result the complexity of the algorithm is at worst   O$\left(\left(|V|+|R|\right)^2\right)$, where V and E are the vertex and edge sets respectively.

\section{Simulations} 
\label{sec:examples}

We apply Algorithms~\ref{alg:MH} and~\ref{alg:wrapper} and run goodness-of-fit tests in {\tt R}  on several real-world network datasets as well as simulated networks under   the $p_1$ model. 
 In what follows,  reported are the number of steps in the chain along with the initial burn-in. 
Our statistic of choice for $GF(u)$ is the chi-square statistic, directly measuring the distance of the network $u$ from the 
 MLE.  For each simulation, we report  the estimated $p$-value returned on line 11 of Algorithm~\ref{alg:wrapper}  and the sampling distribution of $GF(u)$.

\subsection{A  small  synthetic network} 
\label{sec:OHT}

We begin with a test case to check how  Algorithm \ref{alg:wrapper}  explores the fiber.  In \cite[\S 5.1]{OHT}, the authors sample the fiber of an undirected graph $H_0$ on $8$ nodes, depicted in Figure~\ref{fig:OHTgraphPlot}, under the beta model.  By enumeration they have determined that the size of the fiber is $591$.  Considering this graph as a directed network all of whose edges are reciprocated, we can test the fit of the $p_1$ model as well, and study its fiber similarly. The fibers of  $H_0$  under the two models are the same, since in both cases, the fiber consists of all undirected (or reciprocated-edge) graphs with the same (in- and out-) degree vector as $H_0$.

\begin{figure}[ht]
\centering
\begin{minipage}[b]{0.3\linewidth}
	\centering 
	\includegraphics[width=\linewidth]{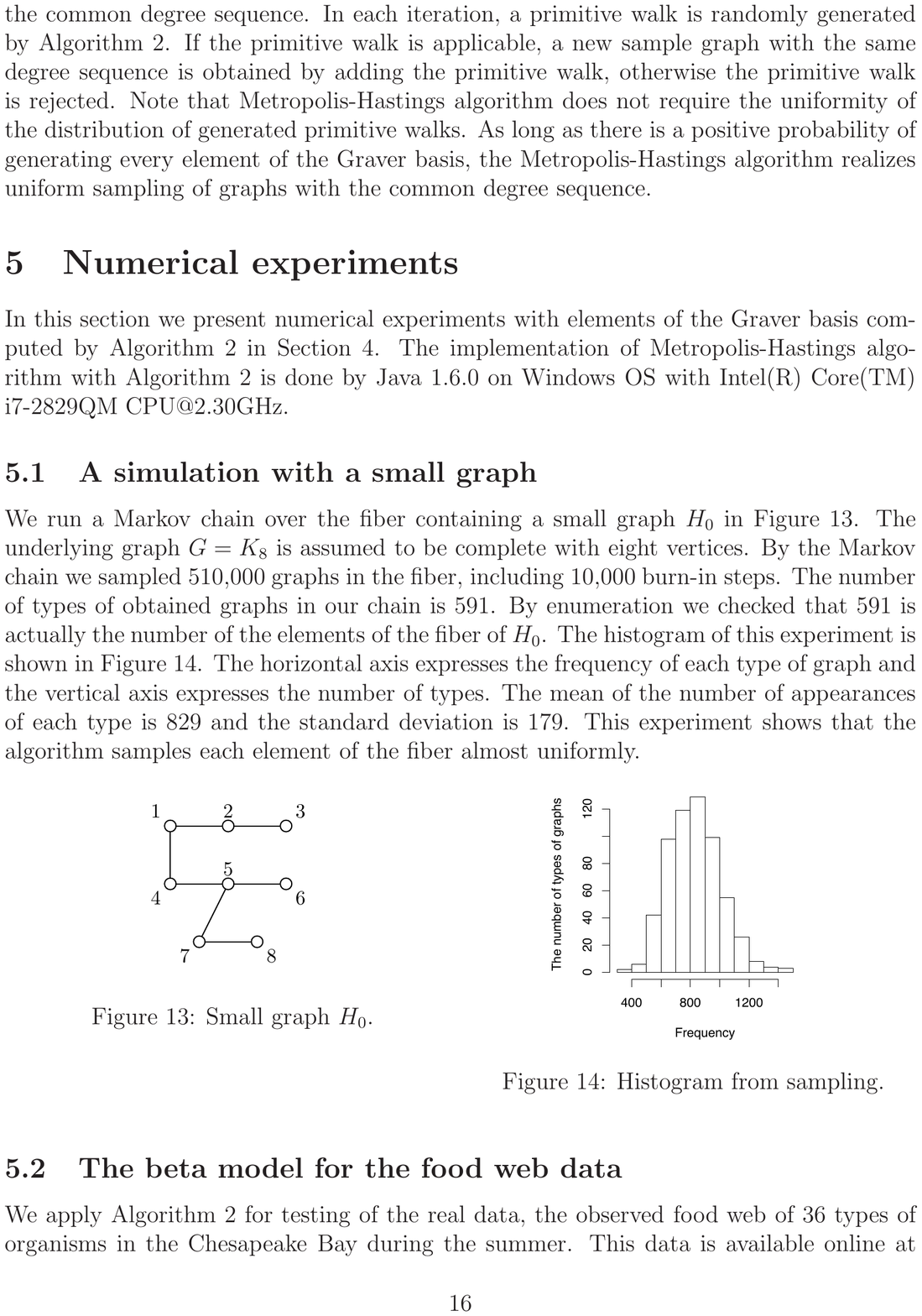}
	\caption{The graph $H_0$ from Figure 13 in \cite{OHT}}
	\label{fig:OHTgraphPlot}
\end{minipage}
\quad
\begin{minipage}[b]{0.65\linewidth}
	\centering
	\includegraphics[width=.7\linewidth]{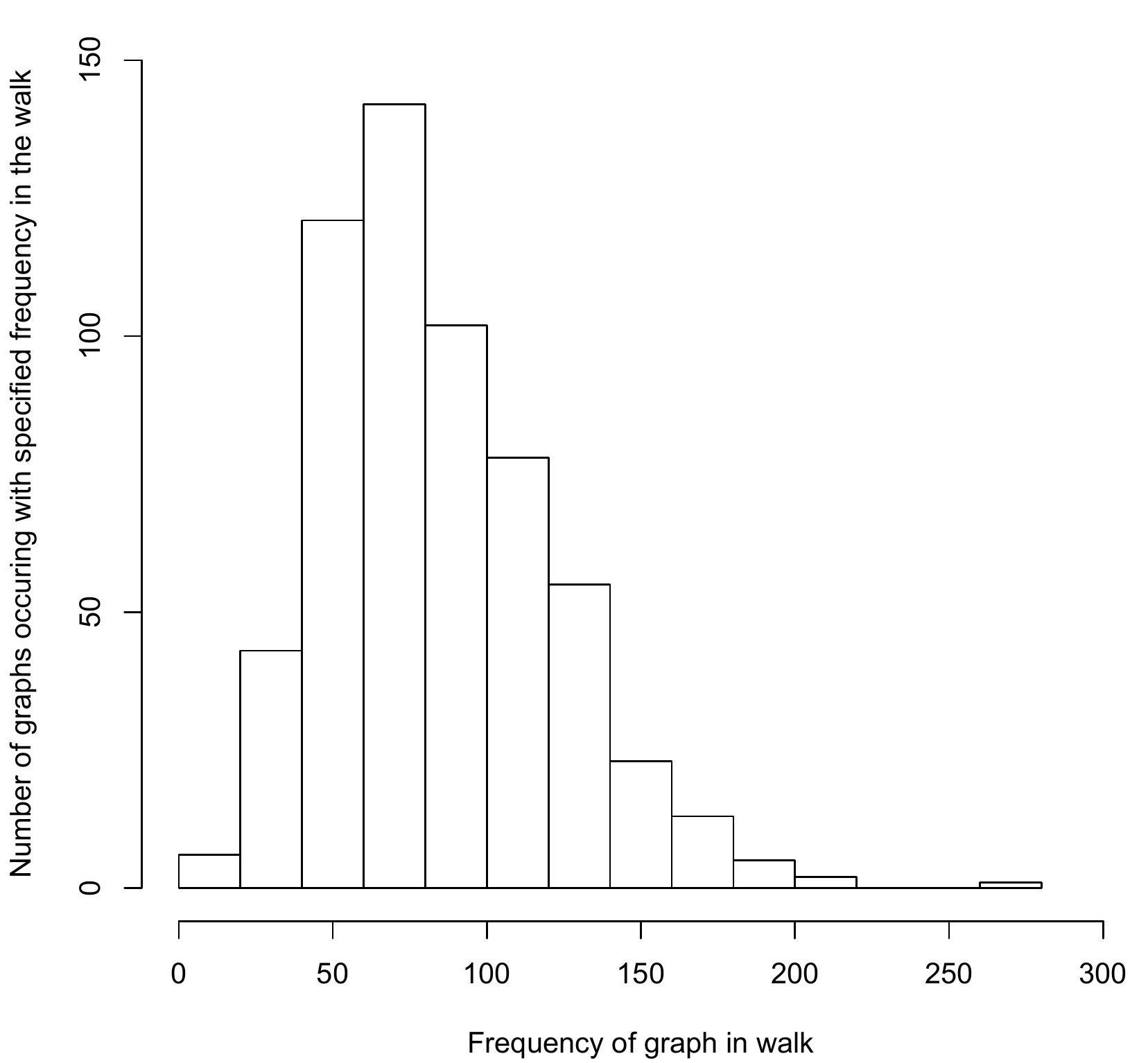}
	\caption{Histogram from sampling}  
	\label{fig:OHTsamplingHistogram}
\end{minipage}
\end{figure}

We ran Algorithm~\ref{alg:wrapper} and stored all graphs discovered in the run. Starting from $H_0$, after $1,000$ steps, $232$ points in the fiber were discovered. 
After $5,000$ steps, $538$ graphs were discovered; and the entire fiber of $591$ graphs  was reached after less than  $15,000$ steps in the chain. 
At this point, the chain  samples the fiber almost uniformly, as the total variation distance between the sampling distribution and the uniform distribution on the fiber is calculated to be $0.2088025$ (at the $15,000$-th step).   For comparison purposes, the TV-distance is  $0.1703418$  after $50,000$ steps; Figure~\ref{fig:OHTsamplingHistogram} shows the histogram of graphs sampled in the $50,000$-move walk. 
 Therefore, running a Markov chain of at least $50,000$ steps should be sufficient for testing purposes for this example.

A run of Algorithm~\ref{alg:MH} for $450,000$ steps, after $50,000$ burn-in steps, produced the values of the chi-square statistics in Figure~\ref{fig:OHThistogram-gof-500k}, and the $p$-value estimate of $0.86$. The estimates of the p-value from the simulation are  plotted in Figure~\ref{fig:OHTpvalues-500k} against the step number of the Markov chain and give further evidence of convergence.
\begin{figure}
\centering
\begin{subfigure}{.45\textwidth}
	\includegraphics[width=1\linewidth]{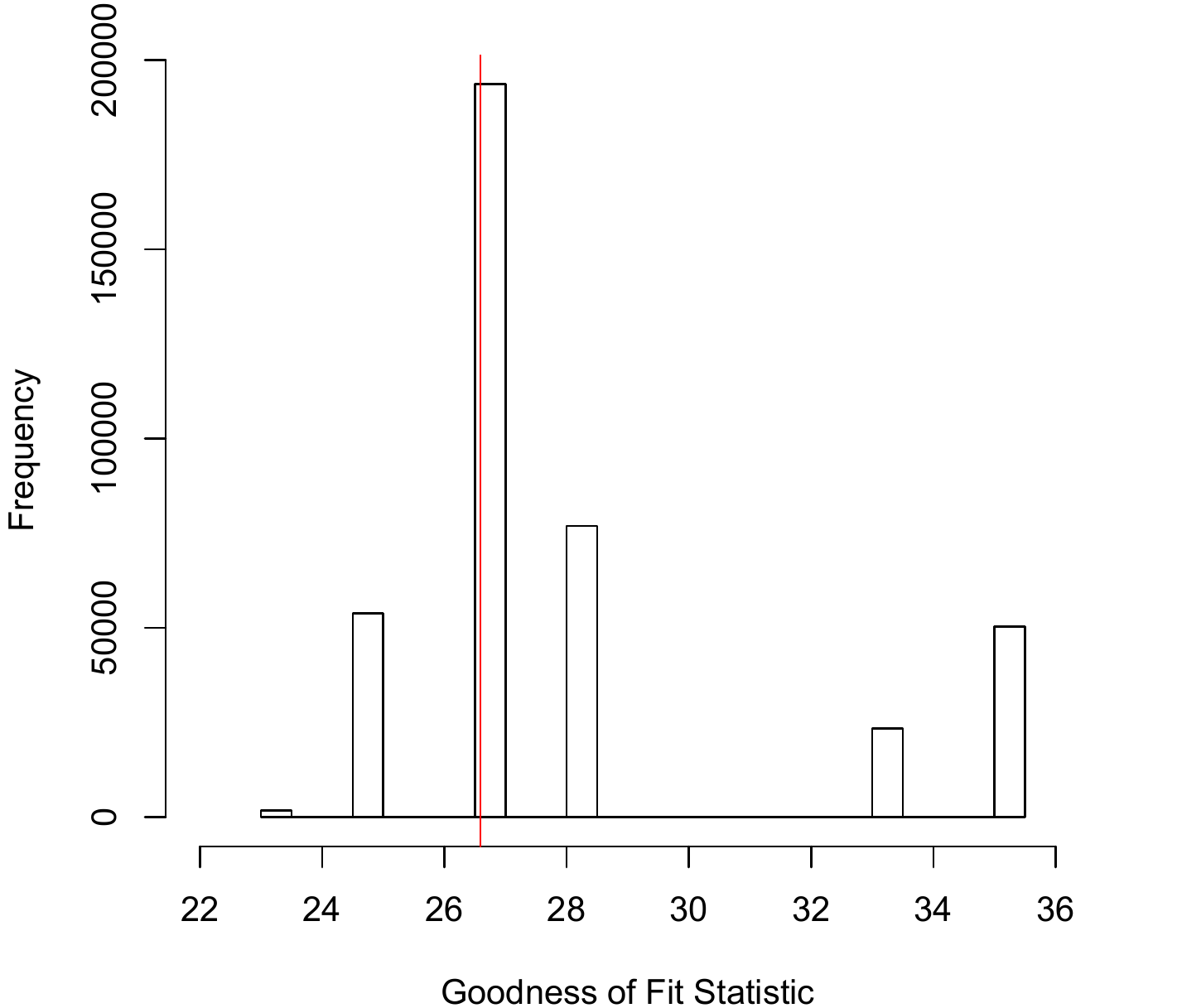}
	\caption{Histogram of chi-square statistic}
	\label{fig:OHThistogram-gof-500k}
\end{subfigure}
\quad
\begin{subfigure}{.45\textwidth}
\centering
	\includegraphics[width=1\linewidth]{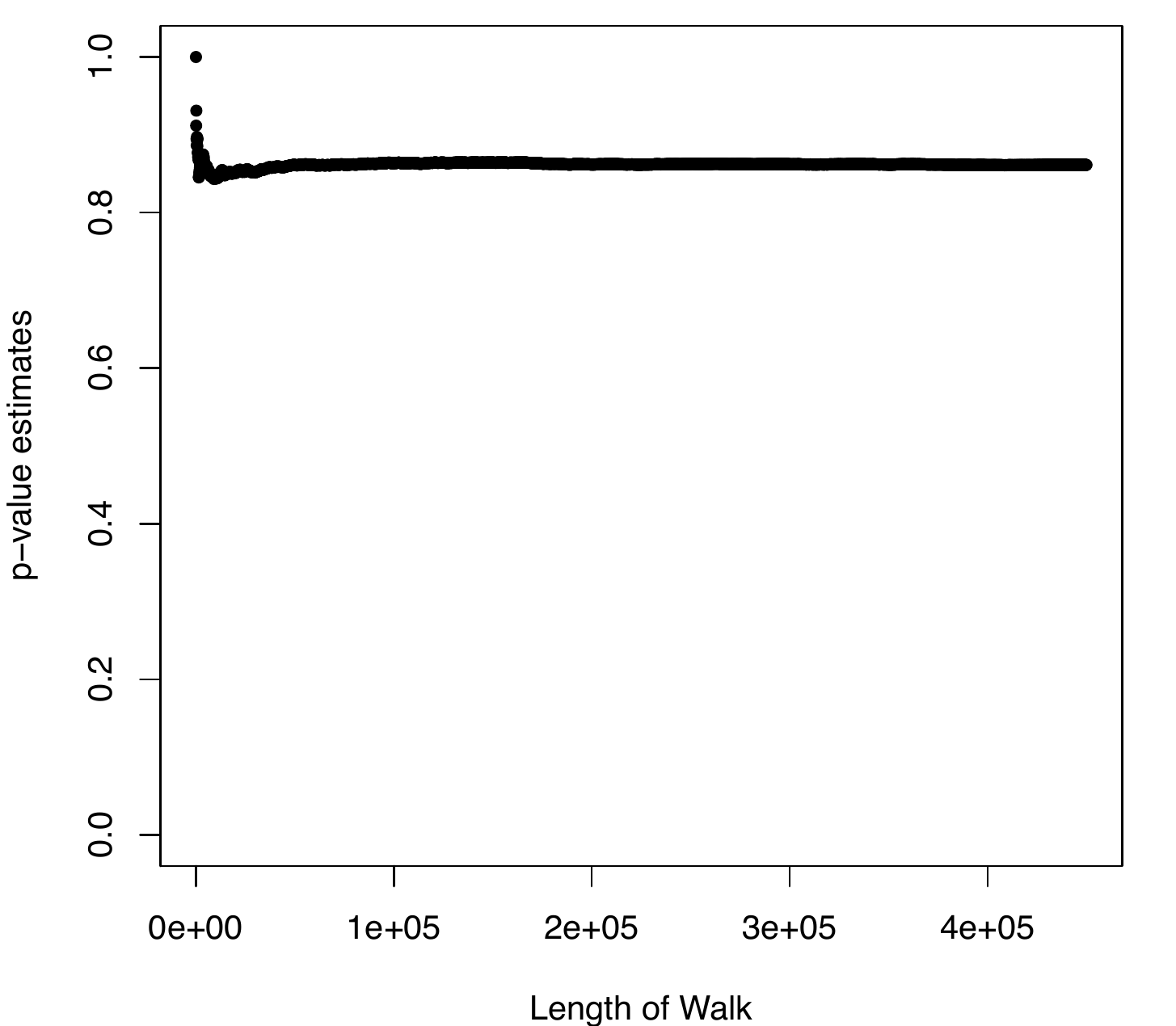}
	\caption{The $p$-value estimates.}
	\label{fig:OHTpvalues-500k}
\end{subfigure}
\caption{Simulation results for graph $H_0$: chain of length $500,000$ including $50,000$ burn-in steps.}
\end{figure}

\subsection{Networks simulated from the $p_1$ distribution} 
\label{sec:hl}
Consider  the four digraphs on $10$ nodes that Holland and Leinhardt simulated  from the $p_1$ distribution; see~\cite[Figure 3]{HL81}. 
The networks  are depicted in Figure~\ref{fig:HLgraphs}. 
\begin{figure}
\centering
\begin{subfigure}{0.24\linewidth}
  \centering
  \includegraphics[width=.8\linewidth]{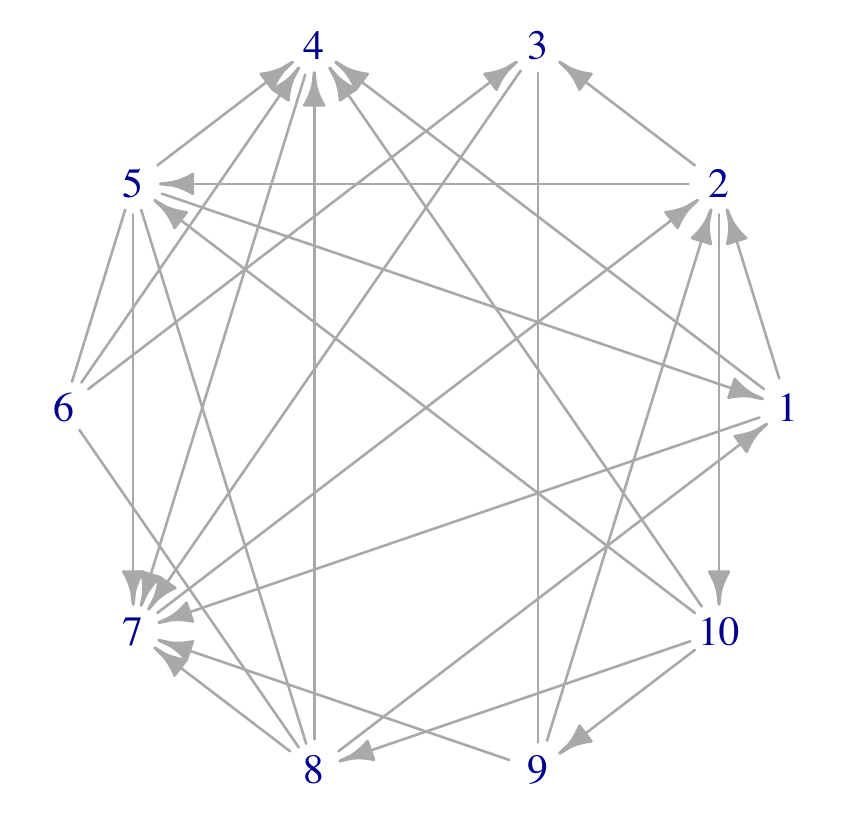}
  \caption{Graph 1}
	\label{fig:HL1}
\end{subfigure}
\begin{subfigure}{0.24\linewidth}
  \centering
  \includegraphics[width=.8\linewidth]{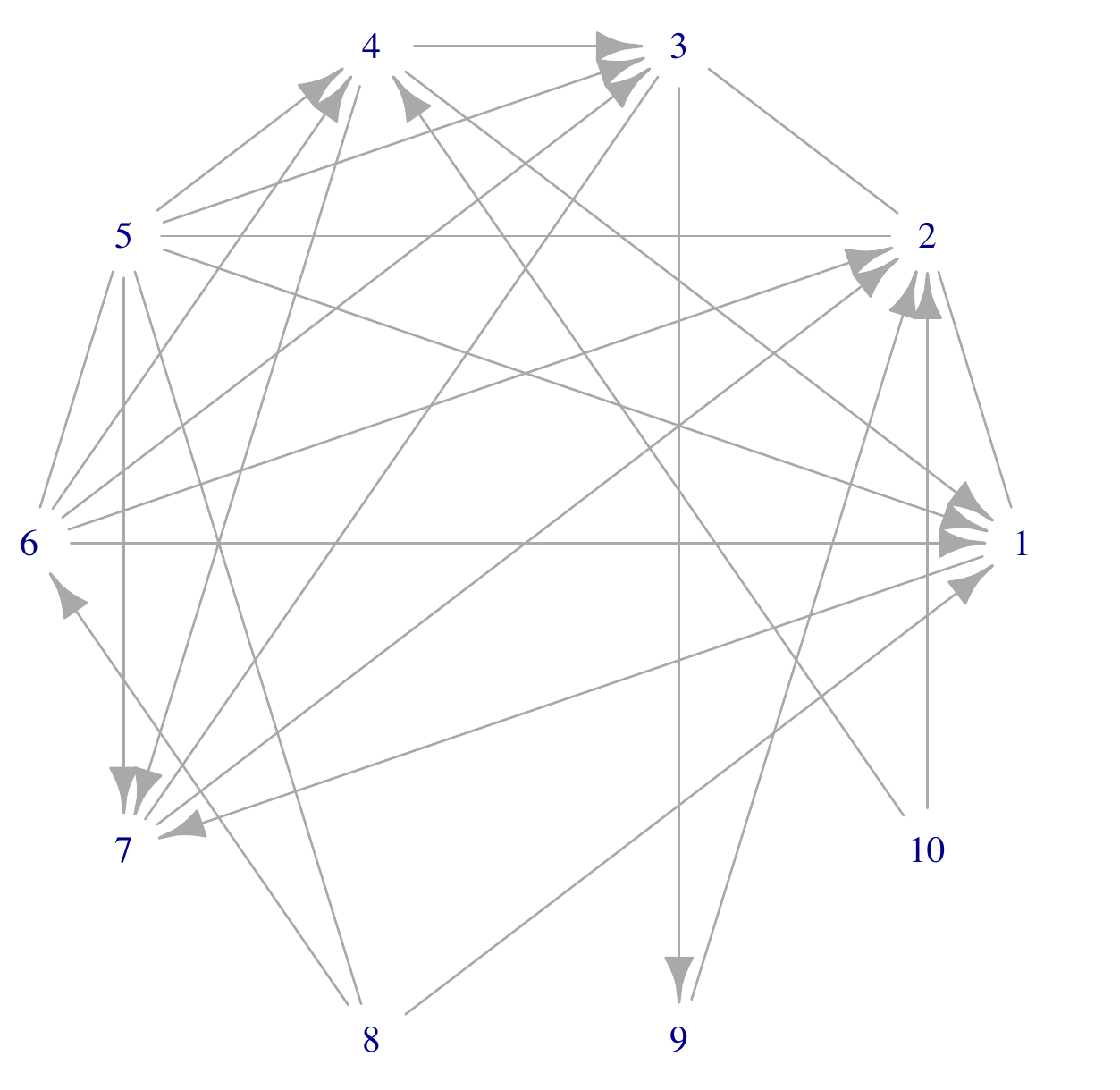}
  \caption{Graph 2}
	\label{fig:HL2}
\end{subfigure}
\begin{subfigure}{0.24\linewidth}
  \centering
  \includegraphics[width=.8\linewidth]{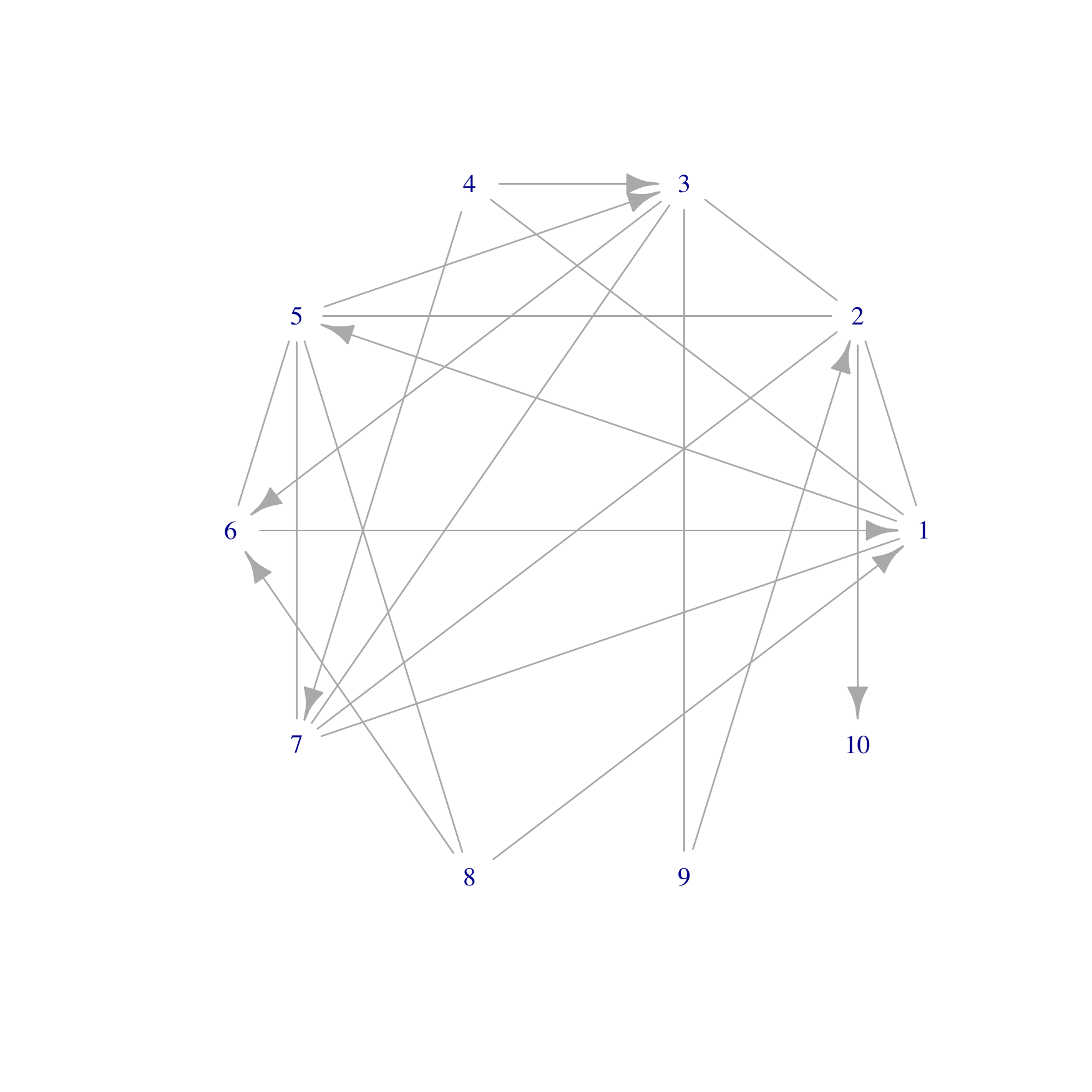}
  \caption{Graph 3}
	\label{fig:HL3}
\end{subfigure}
\begin{subfigure}{0.24\linewidth}
  \centering
  \includegraphics[width=.8\linewidth]{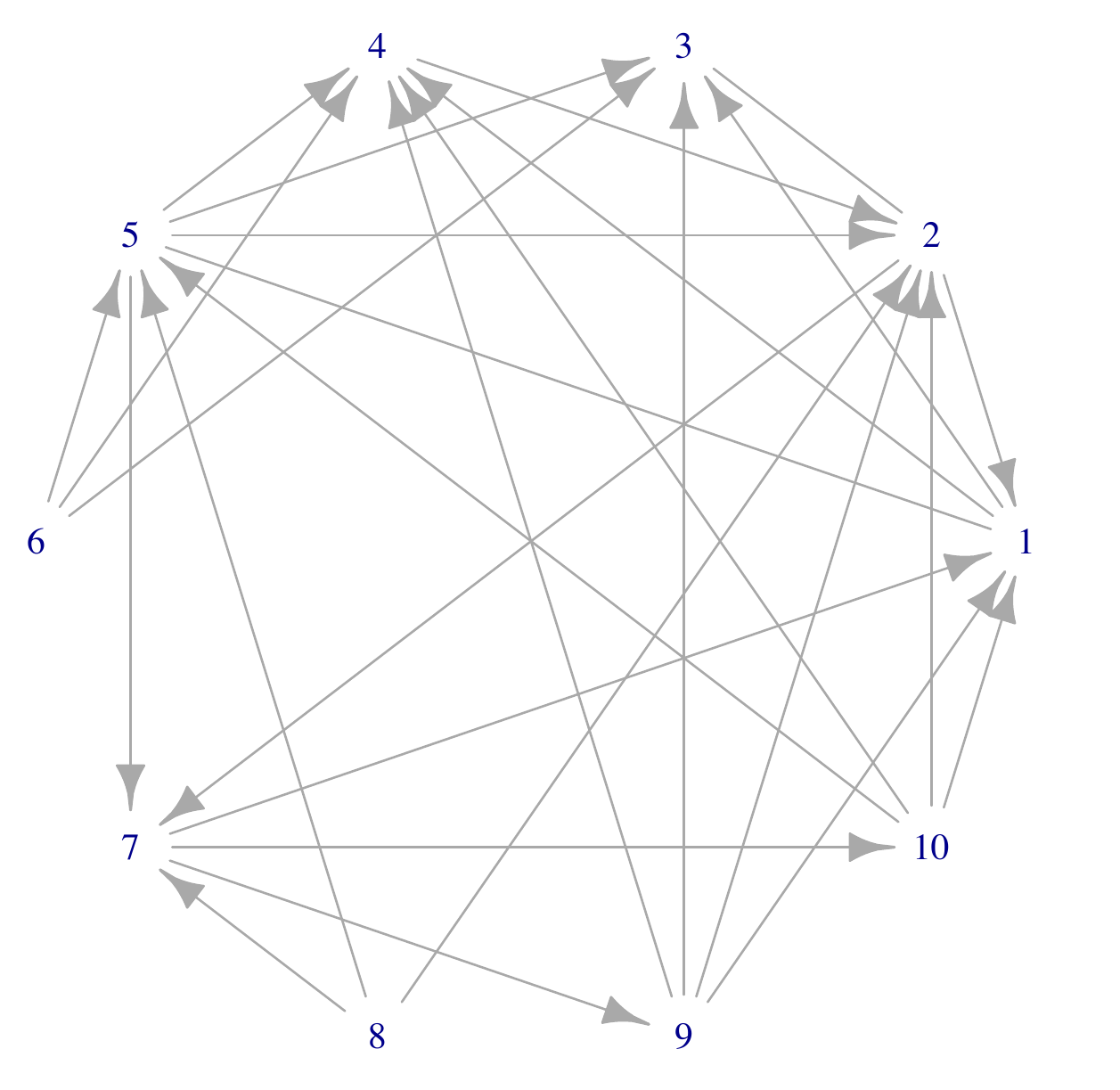}
  \caption{Graph 4}
	\label{fig:HL4}
\end{subfigure}
\caption{Four digraphs simulated from the $p_1$ distribution from \cite[Figure 3]{HL81}. For clarity, the reciprocated edges are drawn as undirected.} 
\label{fig:HLgraphs}
\end{figure}

For each network, chains of length $200,000$ provide expected results. The estimated p-values are $0.284774$, $0.7185896$, $0.4673885$ and $0.7432897$, respectively. 
The histograms of the sampling distribution of the  chi-square  statistics from the $220,000$-step simulation (with $20,000$ burn-in steps) are shown in Figure~\ref{fig:HLhistograms}.  
The p-values reach their estimated value in approximately $25,000$ steps after burn in. 

\begin{figure}
\centering
\begin{subfigure}{.25\textwidth}
  \centering
	\includegraphics[width=1\linewidth]{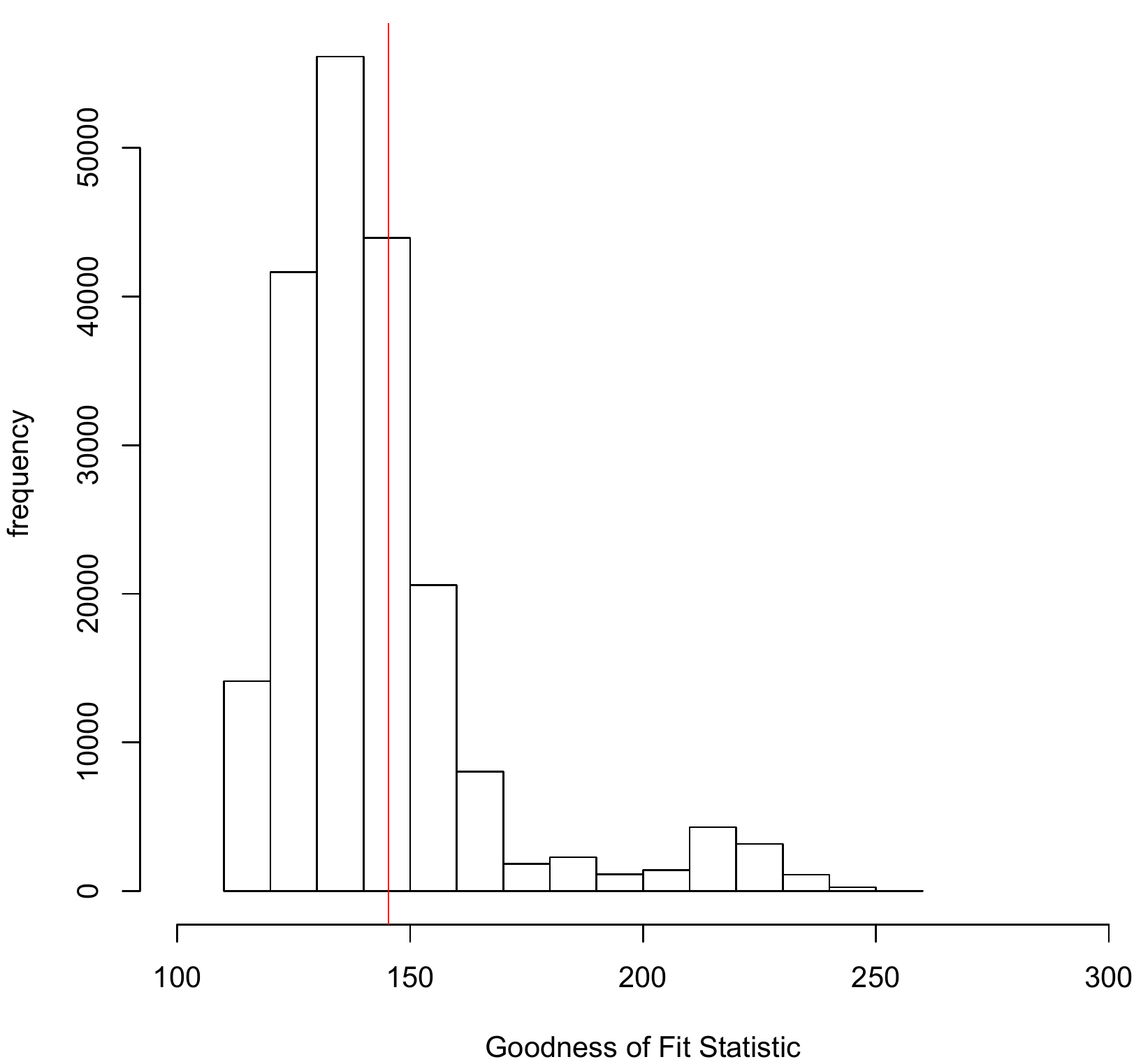}
  \caption{ Graph 1.\\ $p$-value: $0.284774$}
	\label{fig:HL1histogram}
\end{subfigure}%
\begin{subfigure}{.25\textwidth}
  \centering
  \includegraphics[width=1\linewidth]{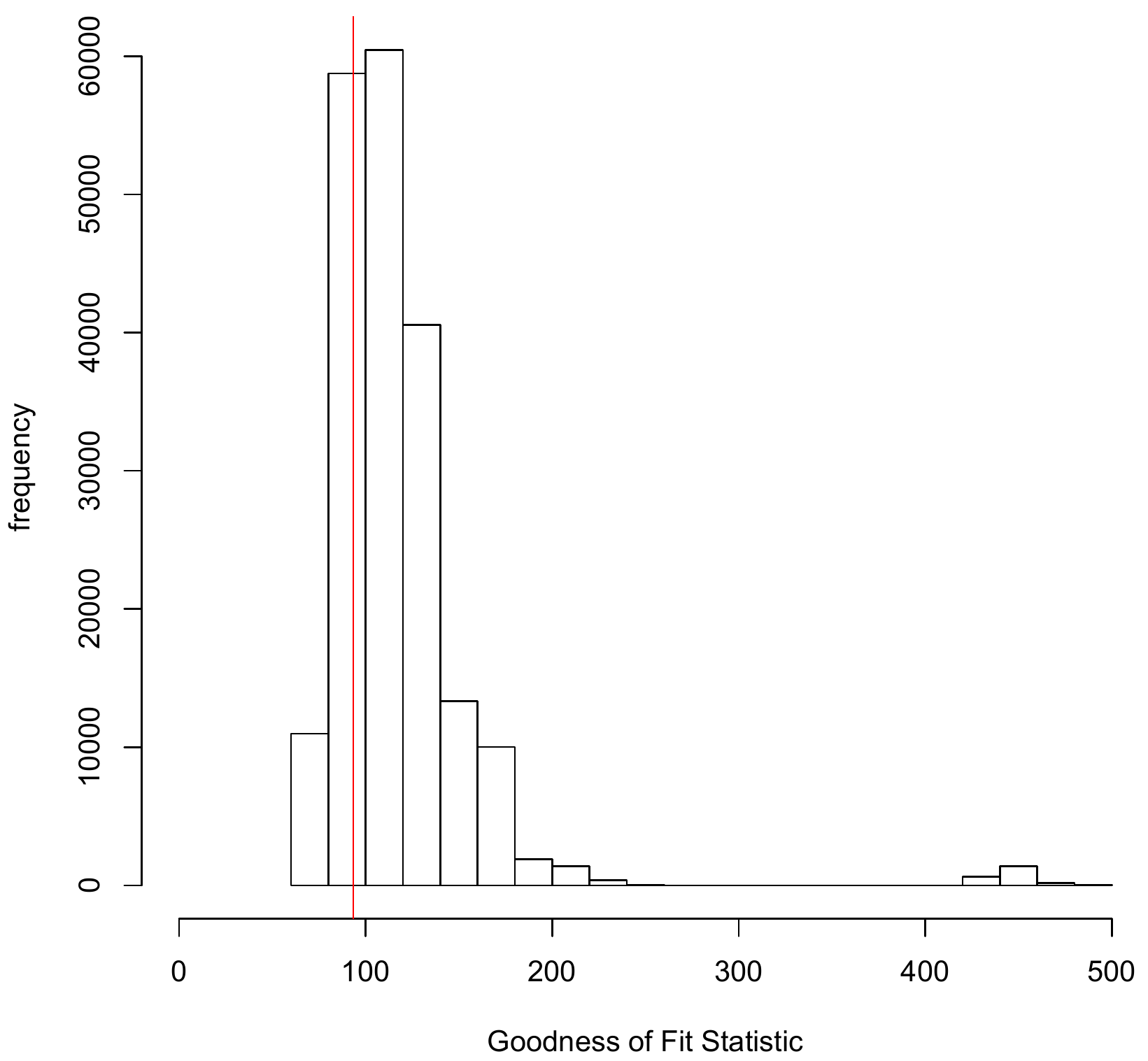}
  \caption{Graph 2. \\$p$-value: $0.7185896$}
	\label{fig:HL2histogram}
\end{subfigure}
\begin{subfigure}{.24\textwidth}
  \centering
  \includegraphics[width=1\linewidth]{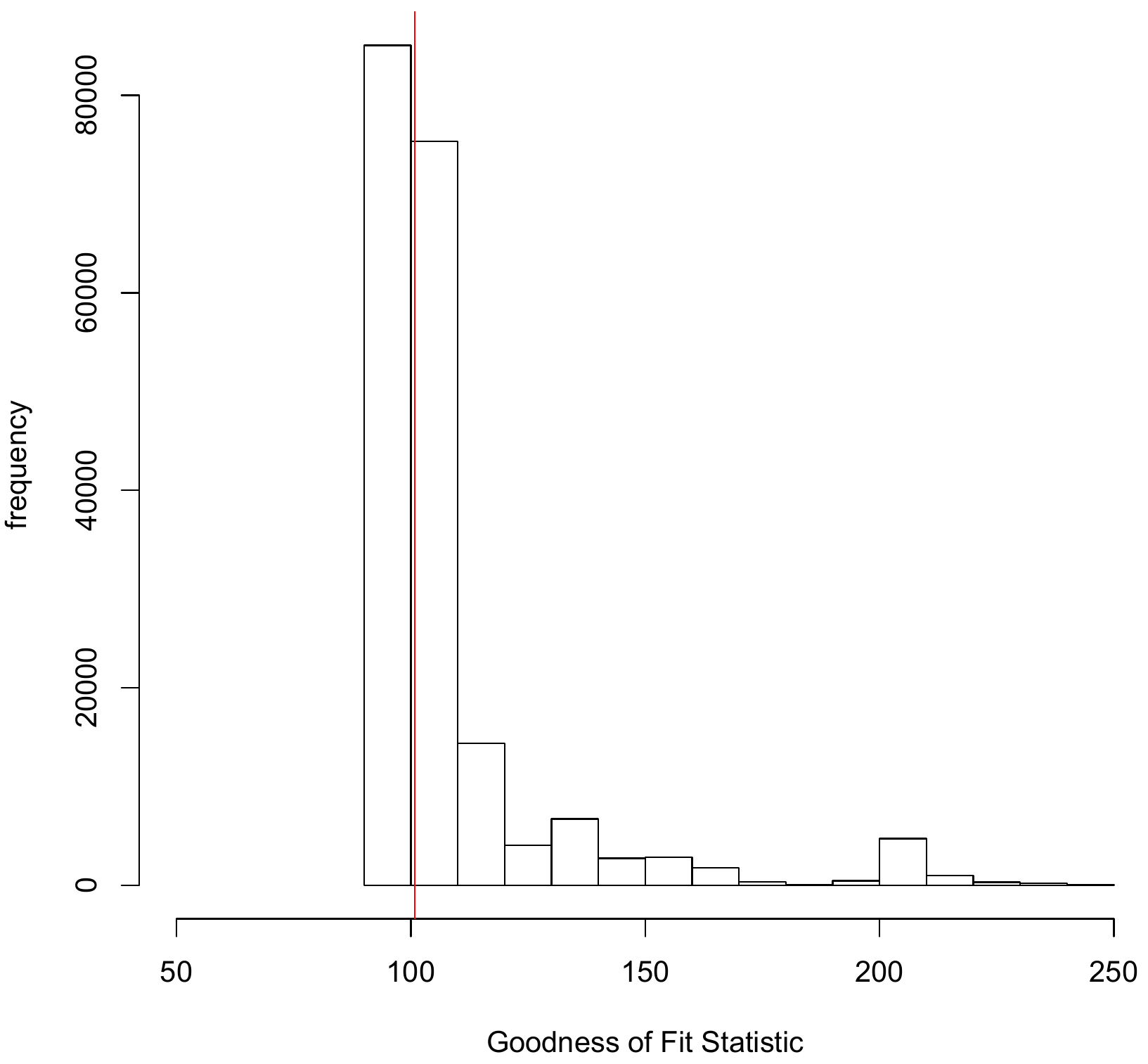}
  \caption{Graph 3. \\$p$-value: $0.4673885$}
	\label{fig:HL3histogram}
\end{subfigure}
\begin{subfigure}{.24\textwidth}
  \centering
  \includegraphics[width=1\linewidth]{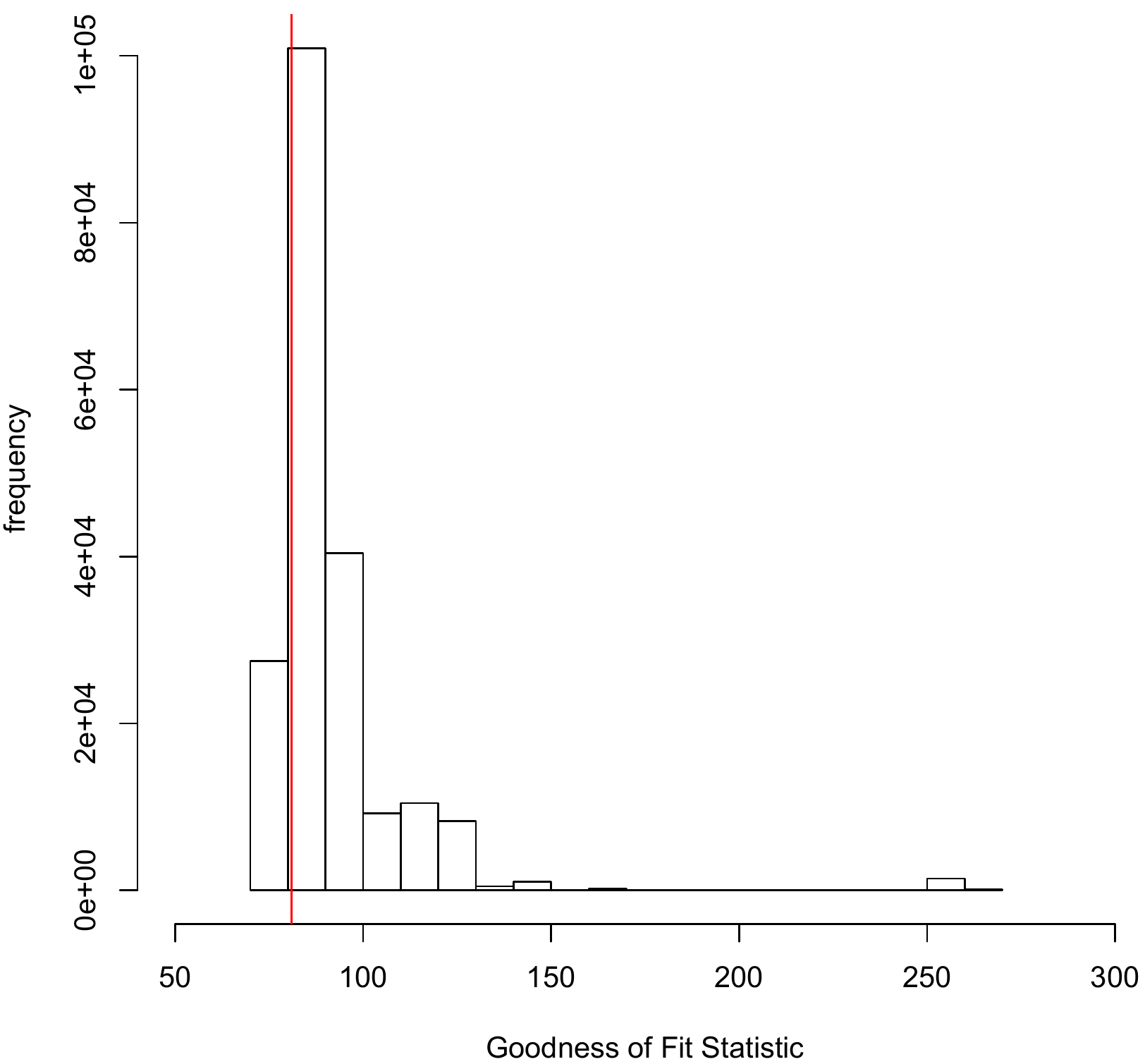}
    \caption{Graph 4.\\ $p$-value: $0.7432897$}
	\label{fig:HL4histogram}
\end{subfigure}
\caption{Histograms of  chi-square  statistics from sampling the fibers with $220,000$ steps ($20,000$ burn-in steps) for the four digraphs in Figure~\ref{fig:HLgraphs}.}
\label{fig:HLhistograms}
\end{figure}

\subsection{Mobile money networks}
\label{sec:kenya}

Figure \ref{fig:kenyaNetwork}  is a directed graph on 12 vertices with 13 unreciprocated edges and 15 reciprocated edges.  The data is from \cite{KCGK13} and was collected through a survey conducted in Bungoma and Trans-Nzoia Counties in Kenya, and among Kenyans living in Chicago, Illinois in the summer of 2012. Vertices represent members of an extended family. An edge from vertex $v_i$ to vertex $v_j$ represents that $v_i$ had sent money to $v_j$ using a mobile money transfer.  Since the network depicted in Figure \ref{fig:kenyaNetwork} is a social network and the individuals are social actors, it is reasonable to suspect transitive effects are present.  In such a setting, it is expected the $p_1$ model would not fit this data very well, and, Holland and Leinhardt suggest  \cite{HL81} the $p_1$ model as a realistic null model in such cases.  
\begin{figure}
\begin{minipage}[b]{0.3\linewidth}
	  \includegraphics[width=1\linewidth]{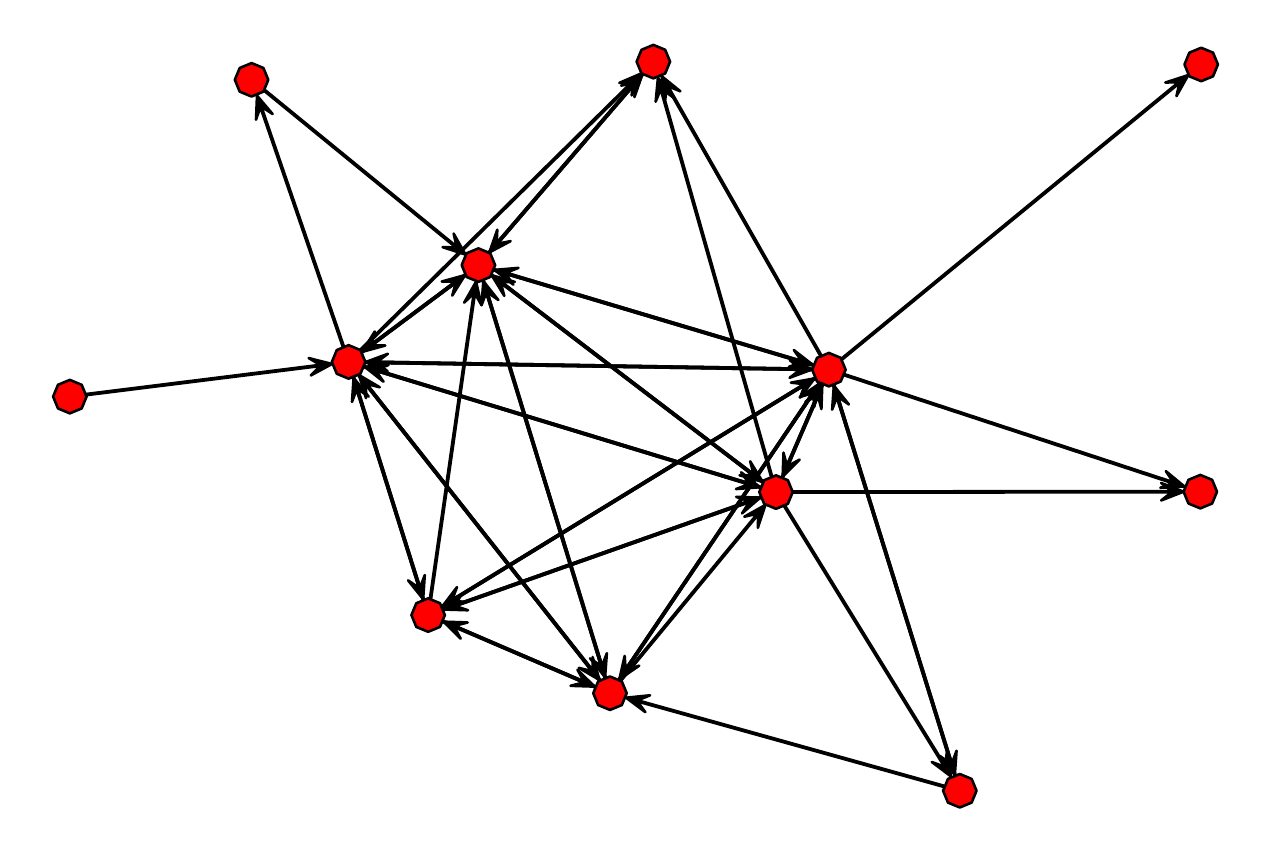}
	  \caption{Mobile money transfers between members of an extended family.} 
	  \label{fig:kenyaNetwork}
	  \vspace{12mm}
\end{minipage}
\quad
\begin{minipage}[b]{0.65\linewidth}
	\begin{subfigure}{.48\textwidth}
		\includegraphics[width=1\linewidth]{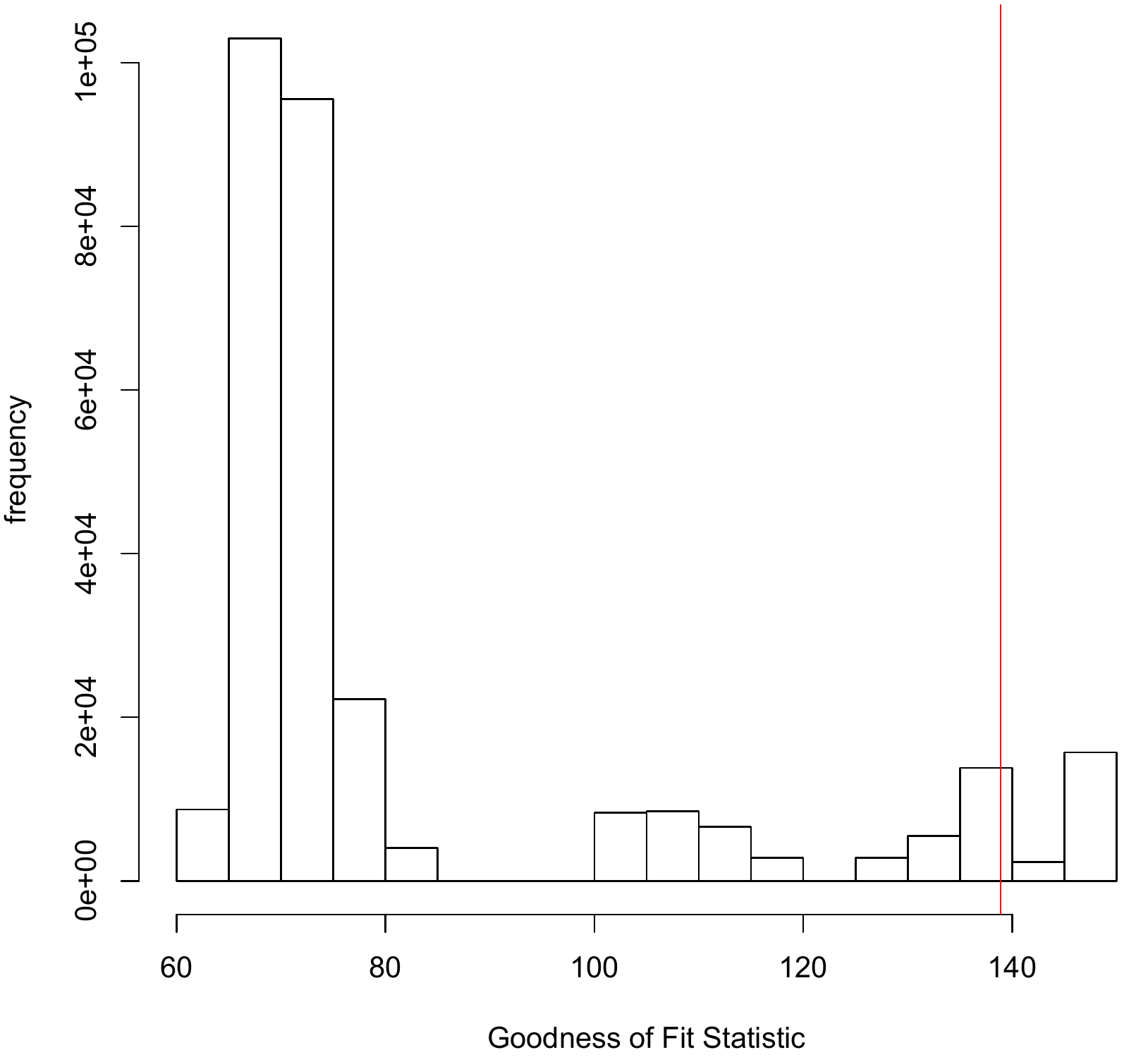}
	  \caption{Histogram for  sampling distribution of  chi-square statistic, with indicated observed value.} 
	  \label{fig:kenyaGofs}
	\end{subfigure}
	\quad
	\begin{subfigure}{.5\textwidth}
		\includegraphics[width=1\linewidth]{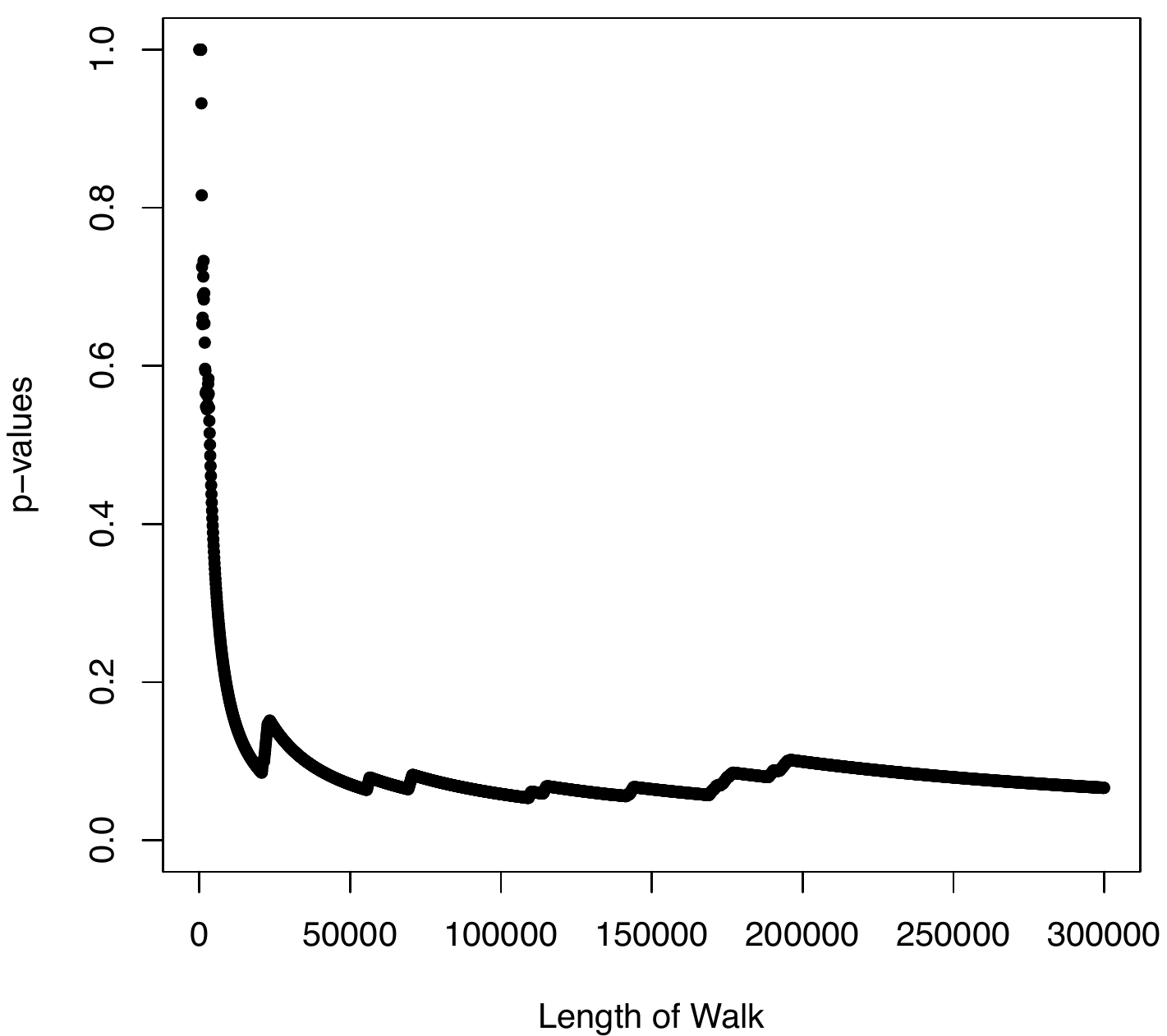}
	  \caption{The $p$-value estimates.}
	  \label{fig:kenyaPvalues}
	\end{subfigure}
	\caption{Simulation results for the mobile money network data  from \cite{KCGK13}: chain of length $330,000$ with $30,000$ burn-in steps.}
\end{minipage}
\end{figure}

Running Algorithm~\ref{alg:MH} for 300,000 steps after an initial burn-in of 30,000 steps returns an estimated p-value of $0.06024261$, which would suggest that the $p_1$ model with edge dependent reciprocation is indeed a poor fit for this data, and in fact,  if the significance level is set to less than $0.1$ we would reject the model.
Figure \ref{fig:kenyaGofs} shows the histogram of the sampling distribution of the chi-square  statistics with the chi-square  statistic for the observed network marked in red. Figure \ref{fig:kenyaPvalues} shows the estimated p-value plotted against the step number of the Markov chain and gives evidence of convergence.

\subsection{Chesapeake Bay Ecosystem}
\label{sec:ex:bay}

In their 1989 paper \cite{BayData}, Ulanowicz and Baird constructed trophic networks for specific regions of the Chesapeake Bay using extensive data gathered from 1983-1986. Their work used highly sophisticated estimation methods, relying on a multitude of different sources.  Due to their profound detail, Ulanowicz and Baird's food webs have been extensively analyzed over the last 25 years.  Often for statistical model-fitting purposes, the edges are considered as undirected.  This choice, however, has been largely motivated by the scarcity of tools available to analyze directed networks.  Other than heuristic methods, procedures for performing goodness of fit testing for directed network models have not existed.

The data set on which we test the $p_1$ model  is depicted in Figure~\ref{fig:BayGraph}; see also \cite[Figure 2]{BayData}. 
  The list of edges of this directed network was downloaded from \cite{BayPajek} and  represents the Web 34 Chesapeake Bay Mesohaline Ecosystem.  The graph has 39 vertices and 176 edges.  The majority of vertices represent species in a Chesapeake Bay food web, 
with a directed edge $u\to v$ indicating that species $u$ eats species $v$. Although, we note there are also other elements, which are not species, included as vertices as well, such as passive carbon storage compartments. There are $6$ reciprocated edges in the graph. 

We expect a block structure in food networks that do not naturally occur in $p_1$-model generated networks. In fact, the estimated p-value is $0.03459158$, indicating that  the $p_1$ model with edge dependent reciprocation is not a good for this data. If the significance level is set to less than $0.05$, we would reject this model. The histogram of a simulation with $1,000,000$ steps is shown in Figure~\ref{fig:Bay-1MgofsHistogram}.

\begin{figure}
  \includegraphics[width=.5\linewidth]{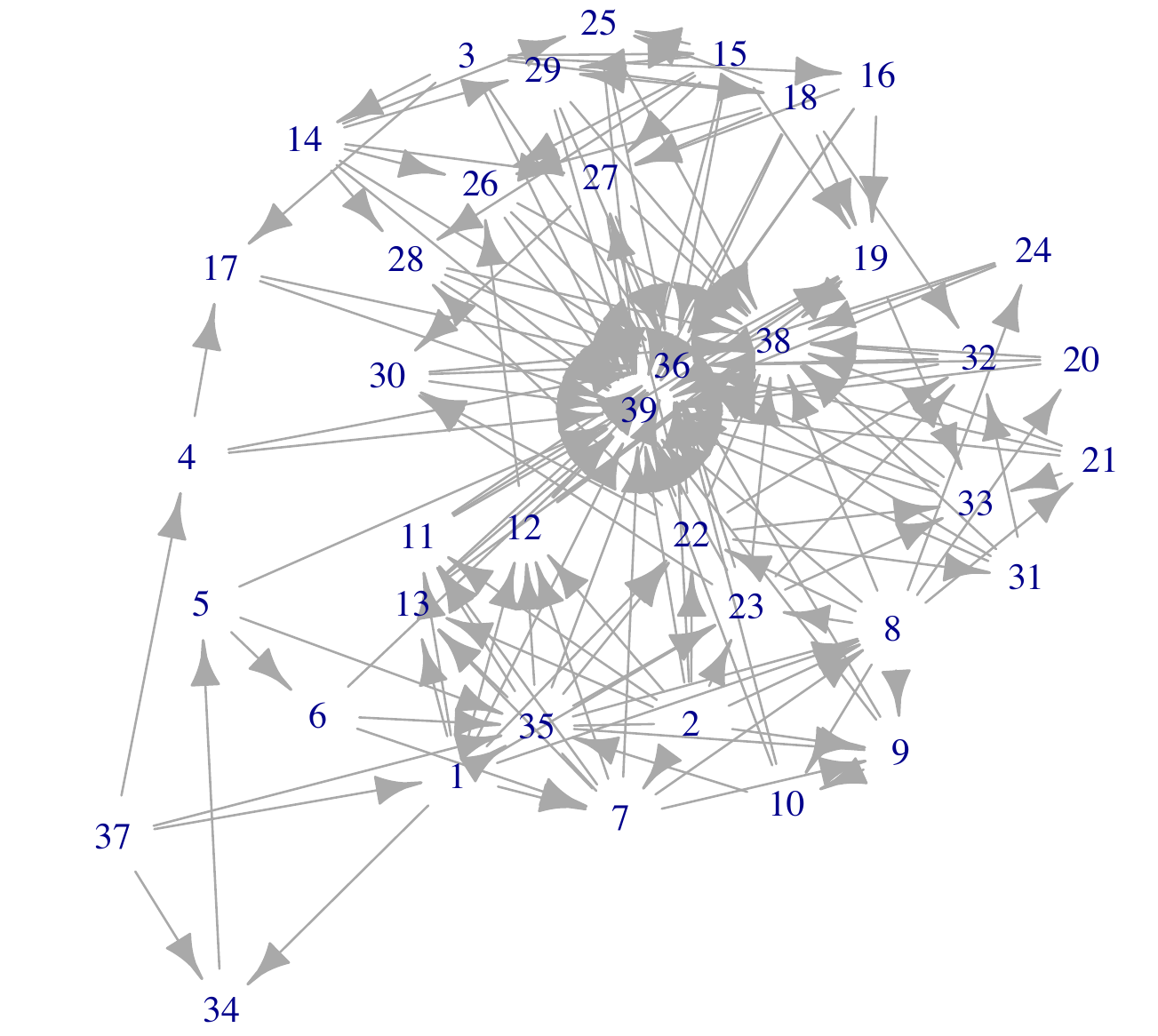} 
  \caption{The directed network representing the food web relationships in Chesapeake Bay data from \cite{BayData}. 
  		} 
\label{fig:BayGraph}
\end{figure}

\subsection{Sampson's Monastery Study}
\label{sec:monk}

Sampson \cite{Sampson68} conducted an ethnographical study of social interactions between novices in a New England monastery in the mid 1960s.   Sampson observed $25$ novices over a period of two years, gathering social relations data at 4 time points, and on multiple relationships. 
This has been a favorite example for analysis by sociologists, statisticians and others, and was used in original $p_1$ model studies. At the fourth time point ($T4$), there were $18$ monks, and the social network had 54 directed edges representing the top three answers  to the question ``whom do you like'' for each novice.  
We consider the directed graph in  Figure~\ref{fig:MonkGraph} representing the  relationships derived from this affinity sociometric data. The list of edges in the graph was downloaded from \cite{MonkPajek}.

\begin{figure}
\begin{subfigure}[b]{.5\textwidth}
	 \centering
	\includegraphics[width=\linewidth]{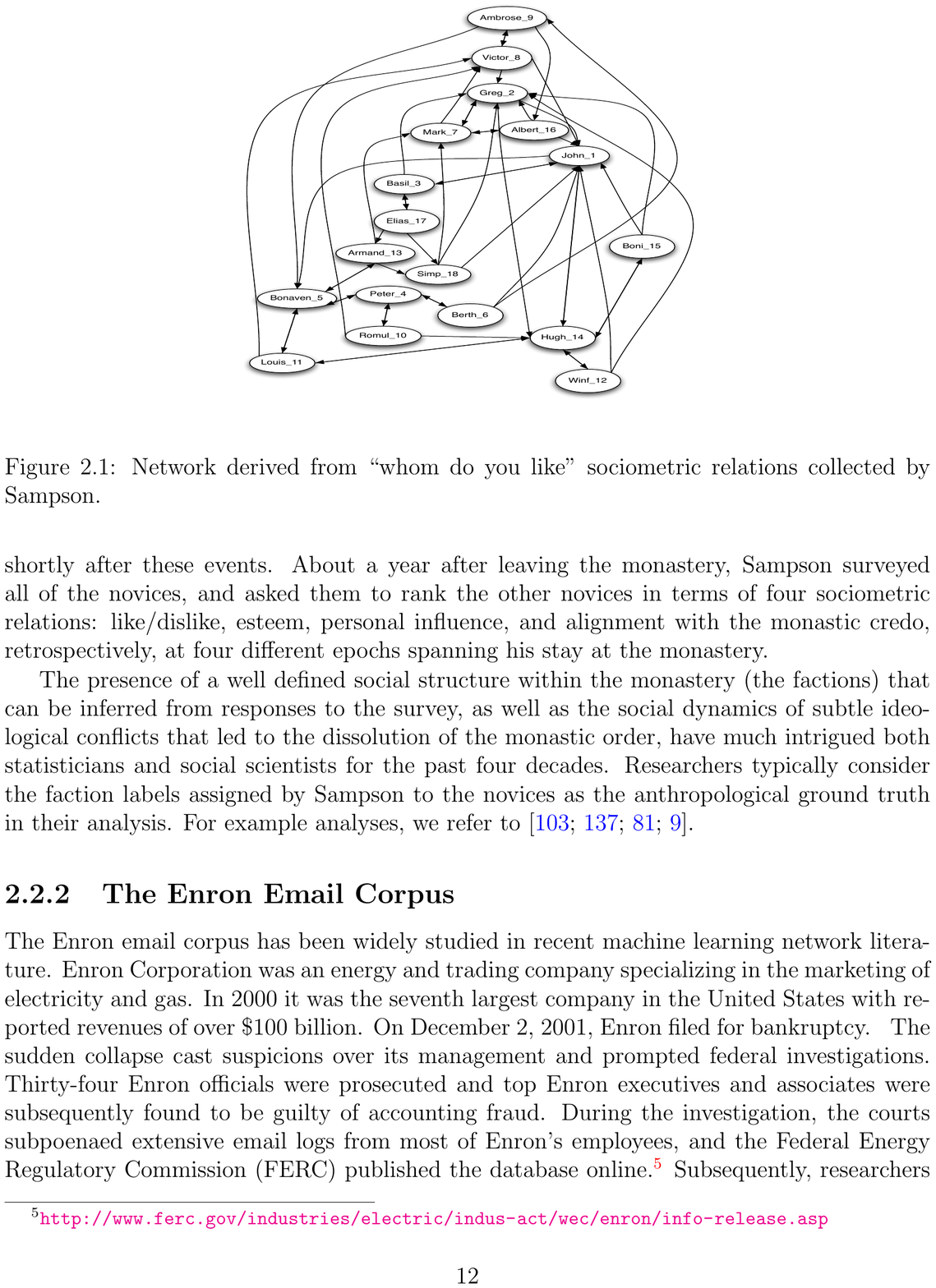} 
	\caption{\cite[Figure~2.1]{F-review}}
\end{subfigure}
\caption{Network derived from the monk dataset at time $T4$ in \cite{Sampson68}.} 
\label{fig:MonkGraph}
\end{figure}

Perhaps  not unsurprisingly, the $p_1$-model with edge-dependent reciprocation seems to fit this data remarkably well. The chi-square statistic for the observed network is 404.7151, which is very close to the minimum chi-square statistic that was returned during a 1,000,000 step walk (see Figure~\ref{fig:monkGOF}).  The estimated $p$-value for this data is $0.9863126$.
The random walk seems to be exploring the fiber broadly, discovering about $8800$ new networks every $50,000$ steps, though we do not know the exact size of the fiber. 
\begin{figure}
\begin{subfigure}{.55\textwidth}
  \centering
  \includegraphics[width=.9\linewidth]{finalplot-Monk-1MgofsHistogramExcl50kBurninSteps.pdf}
  \caption{Histogram of chi-square values from the simulation.}
  \label{fig:monkGOF}
\end{subfigure}
\quad
\begin{subfigure}{.4\textwidth}
  \centering
  \includegraphics[width=1\linewidth]{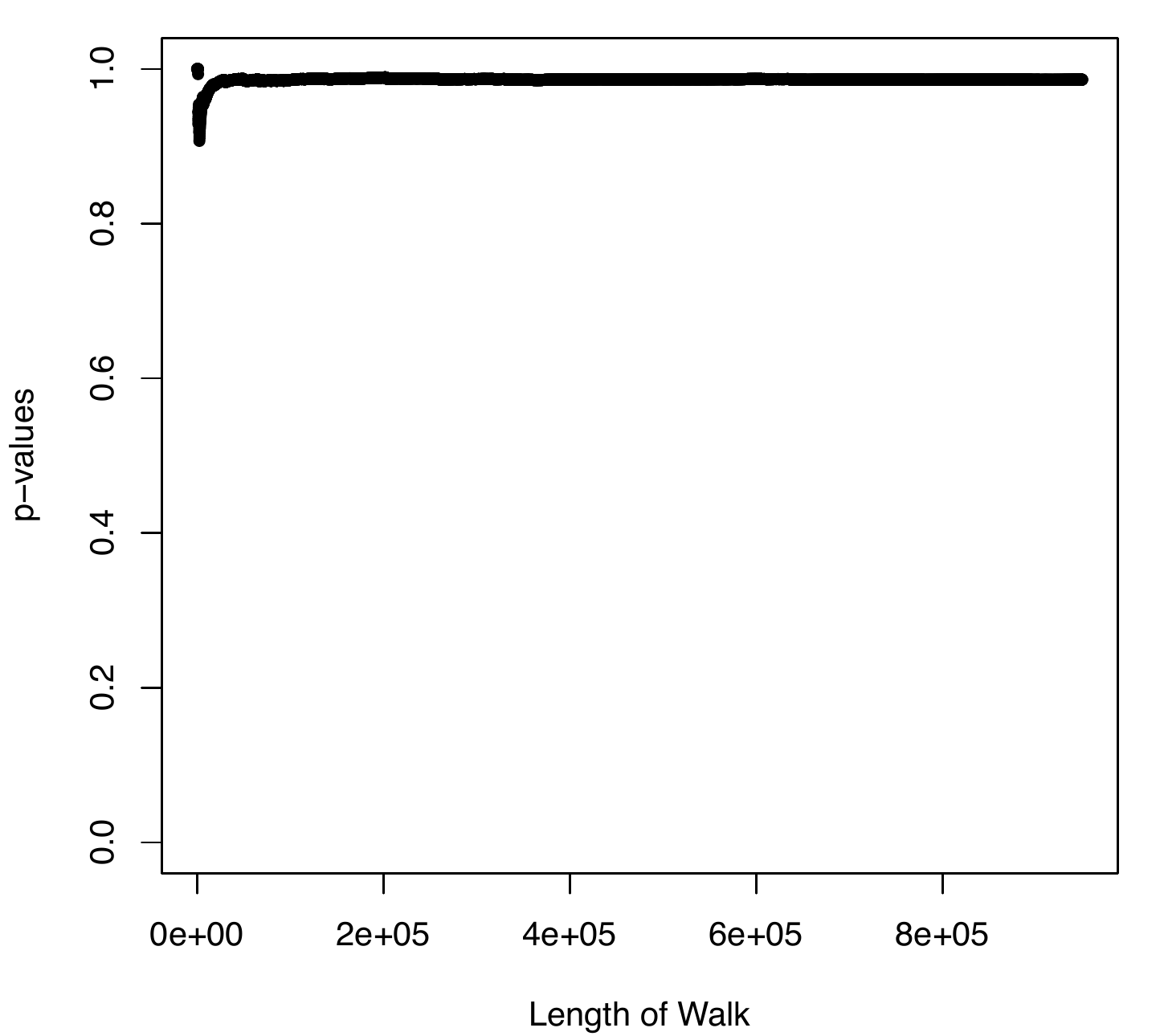}
  \caption{The $p$-value estimates.}
  \vspace{-12mm}
\end{subfigure}
\caption{Results of simulation with 1 million steps (50,000 burn-in steps) for  Sampson's monastery data from time period $T4$.} 
\label{fig:monkAnalysis}
\end{figure}

\section{Conclusion}

The central motivation for this work is the scarcity of tools available to analyze directed networks.  Other than heuristic methods, procedures for performing goodness of fit testing for directed network models have not existed.
 In the usual setting, the  Metropolis-Hastings algorithm for sampling from conditional distributions requires a Markov basis for a given model to be precomputed. By definition, however, Markov bases are data independent, thus presenting a computational problem that becomes both wasteful and infeasible for  network models on as few as $7$ nodes. 
In addition, sampling constraints (e.g. one edge per dyad in a network or cell bounds in a contingency table) have presented problems for algebraic statistics as the restricted (observable) fibers cannot always be connected with a minimal set of Markov moves. Instead, a knowledge of a much larger set of moves, such as the Graver basis, is required for sampling. 
Since Graver bases are notoriously difficult to compute except for (notable)  special cases (e.g. where a divide-and-conquer strategy applies, as in decomposable models), being able to dynamically generate one applicable move at a time is essentially the only hope for ever being able to utilize the algebraic statistics idea in practice. 

 Using the work by Dobra \cite{Dobra2012} as our main motivation,  we propose a methodology for dynamically generating  moves and combinations of moves from the Graver basis (and thus a Markov basis) that guarantee to connect observable fibers for networks or contingency tables where sufficient statistics are not necessarily table marginals.  
This approach allows for a data-oriented algorithm, providing a dynamic exploration of any fiber without relying on an entire Markov basis.  It produces only a relatively small subset of the moves - which could still be a large subset indeed - needed to connect the observable points in the fiber.  

In contrast with previous approaches, our proposed modification uses moves that are constructed by understanding the balanced edge sets of the parameter hypergraph of the given model.  Drawing upon the classical literature in combinatorial commutative algebra and recent work in algebraic statistics, we show how, in principle, one can construct applicable moves using the parameter  hypergraph of any log-linear model and any observed network. 
Thus, in situations where the structure of the parameter hypergraph is well understood, this allows for easily implementable algorithms for goodness-of-fit testing.  
As an example, we describe the entire procedure on the $p_1$ model with edge-dependent reciprocation.  For the $p_1$ model, we (1) derive the structure of such the Markov moves in relation to the parameter hypergraph and (2) implement an algorithm to generate them dynamically.   
We hope this technique of analyzing the parameter hypergraph to construct dynamic Markov bases will be used for other log-linear models and spurs new ideas for goodness-of-fit testing for exponential random graph models in general.

\section*{Acknowledgements}
The authors are grateful to Alessandro Rinaldo and Stephen E. Fienberg for their support at the inception of this project. The first author is supported by the NSF Postdoctoral Research Fellowship, NSF award \#DMS-1304167. 
The second and third authors acknowledge partial support from  grant \#FA9550-12-1-0392 from the U.S. Air Force Office of Scientific Research (AFOSR) and the Defense Advanced Research Projects Agency (DARPA). 
Some computations are performed on a cluster provided by an NSF-SCREMS grant to IIT. 

\bibliography{DynMarkP1}

\newcommand{\etalchar}[1]{$^{#1}$}
\providecommand{\bysame}{\leavevmode\hbox to3em{\hrulefill}\thinspace}
\providecommand{\MR}{\relax\ifhmode\unskip\space\fi MR }
% \MRhref is called by the amsart/book/proc definition of \MR.
\providecommand{\MRhref}[2]{%
  \href{http://www.ams.org/mathscinet-getitem?mr=#1}{#2}
}
\providecommand{\href}[2]{#2}
\begin{thebibliography}{HMdCTY13}

\bibitem[AHT12]{AHT2012}
Satoshi Aoki, Hisayuki Hara, and Akimichi Takemura, \emph{Markov bases in
  algebraic statistics}, Springer Series in Statistics, Springer New York,
  2012.

\bibitem[AT03]{AT03}
Satoshi Aoki and Akimichi Takemura, \emph{Minimal basis for a connected
  {M}arkov chain over $3\times 3\times k$ contingency tables with fixed
  two-dimensional marginals}, Australian \& New Zealand Journal of Statistics
  \textbf{45} (2003), no.~2, 229--249.

\bibitem[AT05]{AT05}
\bysame, \emph{Markov chain {M}onte {C}arlo exact tests for incomplete two-way
  contingency tables}, Journal of Statistical Computation and Simulation
  \textbf{75} (2005), no.~10, 787--812.

\bibitem[BFH75]{BFH75}
Yvonne~M. Bishop, Stephen~E. Fienberg, and Paul~W. Holland, \emph{Discrete
  multivariate analysis: Theory and practice}, Springer, New York, 1975.

\bibitem[BU89]{BayData}
D.~Baird and R.E. Ulanowicz, \emph{The seasonal dynamics of the {C}hesapeake
  {B}ay ecosystem.}, Ecol. Monogr. \textbf{59} (1989), 329--364.

\bibitem[CDS05]{CDS05}
Yuguo Chen, Ian~H. Dinwoodie, and Seth Sullivant, \emph{Sequential importance
  sampling for multiway tables}, Annals of Statistics \textbf{34} (2005),
  523--545.

\bibitem[CDS11]{CDS11}
Sourav Chatterjee, Persi Diaconis, and Allan Sly, \emph{Random graphs with a
  given degree sequence}, Ann. Appl. Probab. \textbf{21} (2011), no.~4,
  1400--1435.

\bibitem[CN06]{igraph}
Gabor Csardi and Tamas Nepusz, \emph{The igraph software package for complex
  network research}, InterJournal \textbf{Complex Systems} (2006), 1695.

\bibitem[DC11]{DC11}
Ian~H. Dinwoodie and Yuguo Chen, \emph{Sampling large tables with constraints},
  Statistica Sinica \textbf{21} (2011), 1591--1609.

\bibitem[DFR{\etalchar{+}}08]{DobraEtAl-IMA}
Adrian Dobra, Stephen~E. Fienberg, Alessandro Rinaldo, Aleksandra Slavkovi\'c,
  and Yi~Zhou, \emph{Algebraic statistics and contingency table problems:
  Log-linear models, likelihood estimation and disclosure limitation}, IMA
  Volumes in Mathematics and its Applications: Emerging Applications of
  Algebraic Geometry, Springer Science+Business Media, Inc, 2008, pp.~63--88.

\bibitem[Dob03]{Dob03}
Adrian Dobra, \emph{Markov bases for decomposable graphical models}, Bernoulli
  \textbf{9} (2003), no.~6, 1093--1108.

\bibitem[Dob12]{Dobra2012}
\bysame, \emph{Dynamic {M}arkov bases}, Journal of Computational and Graphical
  Statistics (2012), 496--517.

\bibitem[DS98]{DS98}
Persi Diaconis and Bernd Sturmfels, \emph{Algebraic algorithms for sampling
  from conditional distribution}, Annals of Statistics \textbf{26} (1998),
  no.~1, 363--397.

\bibitem[DS03]{DS03}
Mike Develin and Seth Sullivant, \emph{Markov bases of binary graph models},
  Annals of Combinatorics \textbf{7} (2003), no.~4, 441--466.

\bibitem[DS04]{DS04}
Adrian Dobra and Seth Sullivant, \emph{A divide-and-conquer algorithm for
  generating {M}arkov bases of multi-way tables}, Computational Statistics
  \textbf{19} (2004), 347--366.

\bibitem[DSS09]{DSS09}
Mathias Drton, Bernd Sturmfels, and Seth Sullivant, \emph{Lectures on algebraic
  statistics}, Oberwolfach Seminars, vol.~39, Springer, 2009.

\bibitem[FPR10]{FPRholland}
Stephen~E. Fienberg, Sonja Petrovi\'c, and Alessandro Rinaldo, \emph{Algebraic
  statistics for $p_1$ random graph models: {M}arkov bases and their uses},
  vol. Papers in Honor of Paul W. Holland, ch.~1, Springer, 2010.

\bibitem[FW81]{FW81}
Stephen~E. Fienberg and S.~S. Wasserman, \emph{Discussion of {H}olland, {P. W.}
  and {L}einhardt, {S.} ``an exponential family of probability distributions
  for directed graphs."}, Journal of the American Statistical Association
  \textbf{76} (1981), 54--57.

\bibitem[GP13]{GP13}
Elizabeth Gross and Sonja Petrovi\'c, \emph{Combinatorial degree bound for
  toric ideals of hypergraphs}, International Journal of Algebra and
  Computation \textbf{23} (2013), no.~6, 1503--1520.

\bibitem[GPS]{supplement}
Elizabeth Gross, Sonja Petrovi\'c, and Despina Stasi, \emph{Goodness of fit for
  log-linear network models: supplementary material, available at
  \url{http://math.iit.edu/~spetrov1/DynamicP1supplement/}}.

\bibitem[GZFA09]{F-review}
Anna Goldenberg, Alice~X. Zheng, Stephen~E. Fienberg, and Edoardo~M. Airoldi,
  \emph{A survey of statistical network models}, Foundations and Trends in
  Machine Learning \textbf{2} (2009), no.~2, 129--233.

\bibitem[Hab81]{Haberman}
S.~J. Haberman, \emph{Dicussion of {H}olland, {P. W.} and {L}einhardt, {S.}
  ``{A}n exponential family of probabilty distributions for directed graphs''},
  Journal of the American Statistical Association \textbf{76} (1981), no.~373,
  60--61.

\bibitem[HAT10]{HAT10}
Hisayuki Hara, Satoshi Aoki, and Akimichi Takemura, \emph{Minimal and minimal
  invariant {M}arkov bases of decomposable models for contingency tables},
  Bernoulli \textbf{16} (2010), no.~1, 208--233.

\bibitem[HGH08]{Hunter}
David~R. Hunter, Steven~M. Goodreau, and Mark~S. Handcock, \emph{Goodness of
  fit of social network models}, Journal of the American Statistical
  Association \textbf{103} (2008), no.~481, 248--258.

\bibitem[HL81]{HL81}
Paul~W. Holland and Samuel Leinhardt, \emph{An exponential family of
  probability distributions for directed graphs (with discussion)}, Journal of
  the American Statistical Association \textbf{76} (1981), no.~373, 33--65.

\bibitem[HMdCTY13]{HMTY13}
David Haws, Abraham Martin~del Campo, Akimichi Takemura, and Ruriko Yoshida,
  \emph{Markov degree of the three-state toric homogeneous {M}arkov chain
  model}, Beitr\"{a}ge zur Algebra und Geometrie (2013), 1--28.

\bibitem[HT10]{HT}
Hisayuki Hara and Akimichi Takemura, \emph{Connecting tables with zero-one
  entries by a subset of a {M}arkov basis}, Algebraic Methods in Statistics and
  Probability II (M.~Viana and H.~Wynn, eds.), Contemporary Mathematics, vol.
  516, Amer. Math. Soc, 2010, pp.~199--213.

\bibitem[HTY09a]{HTY09b}
Hisayuki Hara, Akimichi Takemura, and Ruriko Yoshida, \emph{Markov bases for
  two-way subtable sum problems}, Journal of Pure and Applied Algebra
  \textbf{213} (2009), no.~8, 1507--1521.

\bibitem[HTY09b]{HTY09a}
\bysame, \emph{A markov basis for conditional test of common diagonal effect in
  quasi-independence model for square contingency tables}, Computational
  Statistics \& Data Analysis \textbf{53} (2009), no.~4, 1006--1014.

\bibitem[KCGK]{KCGK13}
Sibel Kushimba, Harpieth Chaggar, Elizabeth Gross, and Gabriel Kunyu,
  \emph{Social networks of mobey money in {K}enya}, Working Paper 2013-1,
  Institute for Money, Technology, and Financial Inclusion, Irvine, CA.

\bibitem[KNP10]{KNP10}
Daniel Kr{\'a}l, Serguei Norine, and Ond{\v{r}}ej Pangr{\'a}c, \emph{Markov
  bases of binary graph models of {$K_4$}-minor free graphs}, Journal of
  Combinatorial Theory, Series A \textbf{117} (2010), no.~6, 759--765.

\bibitem[Nor12]{Nor12}
Patrik Nor{\'e}n, \emph{The three-state toric homogeneous {M}arkov chain model
  has {M}arkov degree two}, arXiv preprint arXiv:1207.0077, 2012.

\bibitem[OHT13]{OHT}
Mitsunori Ogawa, Hisayuki Hara, and Akimichi Takemura, \emph{Graver basis for
  an undirected graph and its application to testing the beta model of random
  graphs}, Annals of Institute of Statistical Mathematics \textbf{65} (2013),
  no.~1, 191--212.

\bibitem[Paja]{BayPajek}
Pajek, \emph{Food webs, available at
  \url{http://vlado.fmf.uni-lj.si/pub/networks/data/bio/foodweb/foodweb.htm}}.

\bibitem[Pajb]{MonkPajek}
\bysame, \emph{Sampson's monastery dataset, available at
  \url{http://vlado.fmf.uni-lj.si/pub/networks/data/esna/sampson.htm}}.

\bibitem[PRF10]{PRF10}
Sonja Petrovi\'c, Alessandro Rinaldo, and Stephen~E. Fienberg, \emph{Algebraic
  statistics for a directed random graph model with reciprocation}, Algebraic
  Methods in Statistics and Probability II (Marlos A.~G. Viana and Henry Wynn,
  eds.), Contemporary Mathematics, vol. 516, American Mathematical Society,
  2010.

\bibitem[PS14]{PS13}
Sonja Petrovi\'c and Despina Stasi, \emph{Toric algebra of hypergraphs},
  Journal of Algebraic Combinatorics \textbf{39} (2014), no.~1, 187--208,
  http://link.springer.com/article/10.1007%2Fs10801-013-0444-y.

\bibitem[RC99]{RC99}
Christian Robert and George Casella, \emph{Monte {C}arlo statistical methods},
  Springer Texts in Statistics, Springer-Verlag, New York, 1999.

\bibitem[RY10]{RY10}
Fabio Rapallo and Ruriko Yoshida, \emph{Markov bases and subbases for bounded
  contingency tables}, Annals of the Institute of Statistical Mathematics
  \textbf{62} (2010), no.~4, 785--805.

\bibitem[Sam68]{Sampson68}
Samuel~F. Sampson, \emph{A novitiate in a period of change: An experimental and
  case study of relationships}, Ph.D. thesis, Department of Sociology, Cornell
  University, 1968.

\bibitem[Sla]{Sesa}
Aleksandra~B. Slavkovi\'c, \emph{Partial information releases for confidential
  contingency table entries: Present and future research efforts}, Submitted.
  Preprint available at
  http://sites.stat.psu.edu/\~{}sesa/Research/Papers/sesa-08142009-submitted.pdf.

\bibitem[Stu96]{St96}
Bernd Sturmfels, \emph{Gr\"obner bases and convex polytopes}, American
  Mathematical Society, 1996.

\bibitem[SW12]{SV12}
Bernd Sturmfels and Volkmar Welker, \emph{Commutative algebra of statistical
  ranking}, Journal of Algebra \textbf{361} (2012), 264--286.

\bibitem[SZP]{SlaZhuPet}
Aleksandra~B. Slavkovi\'c, Xiaotian Zhu, and Sonja Petrovi\'c, \emph{Fibers of
  multi-way contingency tables given conditionals: relation to marginals, cell
  bounds and {M}arkov bases}, Submitted and revised for Annals of the Institute
  of Statistical Mathematics.

\bibitem[Tea05]{R05}
R~Development~Core Team, \emph{R: A language and environment for statistical
  computing}, Available at http://www.R--project.org., 2005.

\bibitem[tt]{4ti2}
4ti2 team, \emph{4ti2---a software package for algebraic, geometric and
  combinatorial problems on linear spaces combinatorial problems on linear
  spaces}, {A}vailable at www.4ti2.de.

\bibitem[Vil00]{Vill}
Rafael H~. Villarreal, \emph{Monomial algebras}, CRC Press, 2000.

\bibitem[YOT13]{YOT13}
Takashi Yamaguchi, Mitsunori Ogawa, and Akimichi Takemura, \emph{Markov degree
  of the {B}irkhoff model}, Journal of Algebraic Combinatorics \textbf{38}
  (2013), no.~4, 1--19.

\end{thebibliography}
\bibliographystyle{amsalpha}   
\end{document}